\newtheorem{theorem}{Theorem}[section]
\newtheorem{proposition}[theorem]{Proposition}
\newtheorem{lemma}[theorem]{Lemma}
\newtheorem{corollary}[theorem]{Corollary}
\newcommand{\bm}[1]{\mbox{\boldmath$#1$}}
\newcommand{\be}{\begin{equation}}
\newcommand{\ee}{\end{equation}}
\newcommand{\bea}{\begin{eqnarray}}
\newcommand{\eea}{\end{eqnarray}}
\newcommand{\non}{\nonumber}
\newcommand{\ra}{\rangle}
\newcommand{\la}{\langle}
\begin{document} 

\title{
Correlation Functions 
of the integrable higher-spin 
XXX and XXZ spin chains through the fusion method}

\author{Tetsuo Deguchi$^{1}$\footnote{e-mail deguchi@phys.ocha.ac.jp} 
and Chihiro Matsui$^{2, 3}$
\footnote{e-mail matsui@spin.phys.s.u-tokyo.ac.jp}}
\date{ }
\maketitle
\begin{center}  $^{1}$ 
Department of Physics, Graduate School of Humanities and Sciences, 
Ochanomizu University \\
2-1-1 Ohtsuka, Bunkyo-ku, Tokyo 112-8610, Japan 
\end{center} 
\begin{center} $^{2}$ 
Department of Physics, Graduate School of Science, the University of Tokyo \\ 
7-3-1 Hongo, Bunkyo-ku, Tokyo 113-0033, Japan \\ 
$^{3}$ CREST, JST, 4-1-8 Honcho Kawaguchi, Saitama, 332-0012, Japan
\end{center}

\begin{abstract}
 For the integrable higher-spin XXX and XXZ spin chains 
we present multiple-integral representations for  
the correlation function of an arbitrary product of 
Hermitian elementary matrices in the massless ground state. 
We give a formula expressing it by a single term of multiple integrals. 
In particular, we explicitly derive 
the emptiness formation probability (EFP). 
We assume $2s$-strings for the ground-state solution 
of the Bethe ansatz equations for the spin-$s$ XXZ chain, 
and solve the integral equations for the spin-$s$ Gaudin matrix. 
 In terms of the XXZ coupling $\Delta$ we define $\zeta$ 
by  $\Delta=\cos \zeta$, 
and put it in a region $0 \le \zeta < \pi/2s$ 
of the gapless regime: $-1 < \Delta \le 1$ ($0 \le \zeta < \pi$),  
where $\Delta=1$ ($\zeta=0$) corresponds to the antiferromagnetic point.   
We calculate the zero-temperature correlation functions 
by the algebraic Bethe ansatz, introducing the Hermitian elementary 
matrices in the massless regime, and  
taking advantage of the fusion construction of 
the $R$-matrix of the higher-spin representations of the 
affine quantum group. 
\end{abstract}

\newpage
%
%
\setcounter{equation}{0} 
 \renewcommand{\theequation}{1.\arabic{equation}}
\section{Introduction}

The correlation functions of the spin-1/2 XXZ spin chain have 
 been studied extensively through the algebraic Bethe-ansatz 
during the last decade \cite{Korepin,Slavnov,MS2000,KMT1999,KMT2000,Review}. 
The multiple-integral representations 
of the correlation functions for the infinite lattice at zero temperature 
first derived through the affine quantum-group symmetry 
\cite{Miki,Jimbo-Miwa} and 
also by solving the $q$-KZ equations \cite{Jimbo-Miwa-qKZ,Takeyama} 
have been rederived and then generalized into those 
for the finite-size lattice under non-zero magnetic field. 
They are also extended into those at finite temperatures \cite{Goehmann-CF}. 
Furthermore, the asymptotic expansion of a correlation function 
has been systematically discussed \cite{KMTK2008}. 
Thus, the exact study of the correlation functions of the XXZ spin chain 
should be not only very fruitful but also quite fundamental  
in the mathematical physics of integrable models.

Recently, the correlation functions and form factors of the 
integrable higher-spin XXX spin chains and 
the form factors of the integrable higher-spin XXZ spin chains 
have been derived by the algebraic Bethe-ansatz method 
\cite{Kitanine2001,Castro-Alvaredo,DM1}.  
In the spin-1/2 XXZ chain the Hamiltonian 
under the periodic boundary conditions is given by  
\be 
{\cal H}_{\rm XXZ} =  
{\frac 1 2} \sum_{j=1}^{L} \left(\sigma_j^X \sigma_{j+1}^X +
 \sigma_j^Y \sigma_{j+1}^Y + \Delta \sigma_j^Z \sigma_{j+1}^Z  \right) \, . 
\label{hxxz}
\ee
Here $\sigma_j^{a}$ ($a=X, Y, Z$) are the Pauli matrices defined 
on the $j$th site and $\Delta$ denotes the XXZ coupling.   
We define  parameter $q$  by  
\be 
\Delta= (q+q^{-1})/2 \, . 
\ee
We define $\eta$ and $\zeta$ by $q=\exp \eta$ and $\eta= i \zeta$, 
respectively. We thus have $\Delta= \cos \zeta$. 
In the massless regime: $-1 < \Delta \le 1$, 
we have $0 \le  \zeta < \pi$ for the spin-1/2 XXZ spin chain (\ref{hxxz}). 
At $\Delta=1$ (i.e. $q=1$), the Hamiltonian (\ref{hxxz}) 
corresponds to the antiferromagnetic Heisenberg (XXX) chain. 
The solvable higher-spin generalizations of the XXX and XXZ spin chains 
have been studied by the fusion method in several references 
\cite{FusionXXX,BabujanPLA,Babujan,Fateev-Zamolodchikov,SAA,Babujan-Tsvelick,KR,V-DWA}. The spin-$s$ XXZ Hamiltonian is derived from the spin-$s$ fusion 
transfer matrix (see also section 2.6). 
For instance,  the Hamiltonian of the integrable 
spin-$1$ XXX spin chain is given by 
\be 
{\cal H}^{(2)}_{\rm XXX} = {\frac 1 2} \sum_{j=1}^{N_s} 
\left( {\vec S}_j \cdot {\vec S}_{j+1} 
- ({\vec S}_j \cdot {\vec S}_{j+1})^2 \right) \, . 
\label{eq:spin-1-XXX}
\ee
Here ${\vec S}_j$ denotes the spin-1 spin-angular momentum 
operator acting on the $j$th site among the $N_s$ lattice 
sites of the spin-$s$ chain. 
For the general spin-$s$ case, the integrable 
spin-$s$ XXX and XXZ Hamiltonians denoted 
${\cal H}^{(2s)}_{\rm XXX}$ and ${\cal H}^{(2s)}_{\rm XXZ}$, respectively,  
 can also be derived systematically.    

The correlation functions of integrable higher-spin XXX and XXZ spin chains 
are associated with various topics of mathematical physics. 
For the integrable spin-1 XXZ spin chain 
correlation functions have been derived 
by the method of $q$-vertex operators 
through some novel results of 
the representation theory of the quantum algebras
\cite{Idzumi0,Idzumi1,Idzumi2,Bougourzi,Konno}. 
They should  be closely related to the higher-spin solutions 
of the quantum Knizhnik-Zamolodchikov equations \cite{Takeyama}.  
For the  fusion eight-vertex models, 
correlation functions have  been discussed 
by an algebraic method \cite{Kojima}. 
Moreover, the  partition function of the six-vertex model 
under domain wall boundary conditions have been extended 
into the higher-spin case \cite{Foda}.  

In a massless region $0 \le \zeta < \pi/2s$,  
the low-lying excitation spectrum at zero temperature 
of the integrable spin-$s$ XXZ chain  
should correspond to the level-$k$ $SU(2)$ WZWN model with $k=2s$. 
By assuming the string hypothesis 
it is conjectured that the ground state 
of the integrable spin-$s$ XXX Hamiltonian  
is given by $N_s/2$ sets of $2s$-strings \cite{Takhtajan}.   
It has also been extended into the XXZ case \cite{Sogo}. 
The ground-state solution of $2s$-strings 
is derived for the spin-$s$ XXX chain  
through the zero-temperature limit of 
the thermal Bethe ansatz \cite{Babujan}.
The low-lying excitation spectrum  
is discussed in terms of spinons for the spin-$s$ XXX and XXZ 
spin chains \cite{Takhtajan,Sogo}. 
Numerically It was shown that 
the finite-size corrections to the ground-state energy 
of the integrable spin-$s$ XXX chain are consistent 
with the conformal field theory (CFT) with $c=3s/(s+1)$ 
\cite{Alcaraz-Martins:XXX,Affleck,Dorfel,Avdeev}. 
Here $c$ denotes the  central charge of the CFT. 
It is also the case with the integrable spin-$s$ XXZ chain 
in the region $0 \le \zeta < \pi/2s$ \cite{Alcaraz-Martins:XXZ,Fowler,Frahm}. 
The results are consistent with the conjecture 
that the ground state of the integrable spin-$s$ 
XXZ chain with $0 \le \zeta < \pi/2s$ 
is given by $N_s/2$ sets of $2s$-strings 
\cite{KR,Sogo,Alcaraz-Martins:XXZ,Fowler,Frahm,deVega-Woynarovich,KB,KBP}.  
Furthermore, it was shown analytically that the low-lying 
excitation spectrum of the integrable spin-$s$ XXZ chain in the 
region $0 \le \zeta < \pi/2s$ 
is consistent with the CFT of $c=3s/(s+1)$ \cite{KB,KBP}.  
In fact, the low-lying excitation spectrum of spinons 
for the spin-$s$ XXX chain  
is described in terms of the level-$k$ $SU(2)$ WZWN model 
with $k=2s$ \cite{JSuzuki}.

In the paper we calculate zero-temperature correlation functions   
for the integrable higher-spin XXZ spin chains 
by the algebraic Bethe-ansatz method. 
For a given product of elementary matrices 
we present the multiple-integral representations of the 
correlation function in the region $0 \le \zeta < \pi/2s$ 
of the massless regime near the antiferromagnetic point ($\zeta=0$).  
For an illustration, we derive the multiple-integral representations 
of the emptiness formation probability (EFP) of the 
spin-$s$ XXZ spin chain, explicitly.  
Here the spin $s$ is given by an arbitrary positive integer or half-integer. 
Assuming the conjecture that 
the ground-state solution of the Bethe ansatz equations 
is given by $2s$-strings for the regime of $\zeta$,  
we derive the spin-$s$ EFP for a finite chain 
and then take the thermodynamic limit. We solve the integral 
equations associated with the spin-$s$ Gaudin matrix 
for $0 \le \zeta < \pi/2s$,  
and express the diagonal elements  
in terms of the density of strings. 
Here we remark that the integral 
equations associated with the spin-$s$ Gaudin matrix have not been 
 explicitly solved, yet, even for the case of the integrable 
higher-spin XXX spin chains \cite{Kitanine2001}. 
We also calculate the spin-$s$ EFP for the homogeneous chain where 
all inhomogeneous parameters $\xi_p$ are given by zero. 
Here we shall introduce inhomogeneous parameters $\xi_p$ 
for $p=1, 2, \ldots, N_s$, in \S 2.4.  
Furthermore, we take advantage of the fusion construction 
of the spin-$s$ $R$-matrix in the algebraic Bethe-ansatz 
derivation of the correlation functions \cite{DM1}.

Given the spin-$s$ XXZ spin chain on the $N_s$ lattice sites, 
we define $L$ by $L=2s N_s$ and consider the spin-1/2 XXZ spin chain 
on the $L$ sites with inhomogeneous parameters $w_j$ 
for $j=1, 2, \ldots, L$. In the fusion method 
we express any given spin-$s$ local operator 
as a sum of products of operator-valued elements 
of the spin-1/2 monodromy matrix in the limit of 
sending inhomogeneous parameters $w_j$ to 
 sets of complete $2s$-strings as shown in Ref. \cite{DM1}. 
Here, we apply the spin-1/2 formula 
of the quantum inverse scattering problem  \cite{KMT1999},   
which is valid at least for generic inhomogeneous parameters.   
Therefore, sending inhomogeneous parameters $w_j$  
into complete $2s$-strings,   
we can evaluate the vacuum expectation values 
or the form factors of spin-$s$ local operators which are expressed 
in terms of the spin-1/2 monodromy matrix elements 
with generic inhomogeneous parameters $w_j$. 
Here, the rapidities of the ground state satisfy 
the Bethe ansatz equations with inhomogeneous parameters $w_j$.  
We assume in the paper  that the Bethe roots are 
continuous with respect to inhomogeneous parameters $w_j$, 
in particular,  in the limit of sending $w_j$ to complete $2s$-strings.

We can construct higher-spin transfer matrices 
by the fusion method \cite{KR,V-DWA}. 
Here we recall that the spin-1/2 XXZ Hamiltonian (\ref{hxxz}) 
is derived from the logarithmic derivative 
of the row-to-row transfer matrix of the six-vertex model. 
We call it the spin-1/2 transfer matrix and denote it 
by $t^{(1,1)}(\lambda)$. 
Let us express by $V^{(\ell)}$ an $(\ell+1)$-dimensional vector space.  
We denote by  $T^{(\ell, \, 2s)}(\lambda)$ 
the spin-$\ell/2$ monodromy matrix acting on 
the tensor product of the auxiliary space 
$V^{(\ell)}$ and the $N_s$th tensor product of the quantum spaces, 
 $(V^{(2s)})^{\otimes N_s}$. We call it of type 
$(\ell, (2s)^{\otimes N_s})$, which we express ($\ell, 2s$) in the 
superscript. Taking the trace of 
the spin-$\ell/2$ monodromy matrix $T^{(\ell, \, 2s)}(\lambda)$ 
over the auxiliary space  
$V^{(\ell)}$, we define the spin-$\ell/2$ transfer matrix, 
$t^{(\ell, \, 2s)}(\lambda)$. 
For $\ell=2s$, we have the spin-$s$ transfer matrix 
$t^{(2s, \, 2s)}(\lambda)$, 
and we derive the spin-$s$ Hamiltonian from its logarithmic derivative. 

We construct the ground state $| \psi_g^{(2s)} \rangle$ of the spin-$s$ 
XXZ Hamiltonian  by the $B$ operators of the $2$-by-$2$ monodromy matrix 
$T^{(1, 2s)}(\lambda)$. 
As shown by Babujan, the spin-$s$ transfer matrix $t^{(2s, 2s)}(\lambda)$ 
commutes with the spin-1/2 transfer matrix $t^{(1, 2s)}(\lambda)$ 
due to the Yang-Baxter relations, 
and hence they have eigenvectors in common \cite{Babujan}. 
The ground state $| \psi_g^{(2s)} \rangle$ of the spin-$s$ XXZ spin chain 
is originally an eigenvector of the spin-$s$ transfer matrix  
$t^{(2s, 2s)}(\lambda)$, and consequently it is also  
an eigenvector of the spin-$1/2$ transfer matrix $t^{(1, 2s)}(\lambda)$. 
Therefore, the ground state $| \psi^{(2s)}_g \rangle$ 
of the spin-$s$ XXZ spin chain can be constructed 
by applying the $B$ operators of the $2$-by-$2$ monodromy matrix 
$T^{(1, 2s)}(\lambda)$ to the vacuum.  

We can show that the fusion $R$-matrix corresponds 
to the $R$-matrix of the affine quantum group $U_q(\widehat{sl_2})$. 
We recall that by the fusion method, we can construct the $R$-matrix acting on 
the tensor product $V^{(\ell)} \otimes V^{(2s)}$ 
\cite{FusionXXX,BabujanPLA,Babujan,Fateev-Zamolodchikov,SAA,Babujan-Tsvelick,KR,V-DWA}. We denote it by $R^{(\ell, 2s)}$. 
In the affine quantum group, the $R$-matrix 
is defined as the intertwiner of 
the tensor product of two representations $V$ and $W$ 
\cite{Drinfeld,Jimbo-QG,Jimbo-Hecke}.  
Due to the conditions of the intertwiner 
the $R$-matrix of the affine quantum group is determined uniquely 
up to a scalar factor \cite{Jimbo-review},  
which we denote by $R_{V, W}$. 
Therefore, showing that the fusion $R$-matrix satisfies 
all the conditions of the intertwiner, we prove that the fusion $R$-matrix 
coincides with the $R$-matrix of the quantum group, $R_{V, W}$.   
Consequently, for $\ell=2s$ the fusion $R$-matrix, $R^{(2s,2s)}(\lambda)$, 
becomes the permutation operator when spectral parameter $\lambda$ 
is given by zero.  This property of the $R$-matrix  
plays a central role in the derivation of the integrable 
spin-$s$ Hamiltonian. It is also fundamental 
in the inverse scattering problem in the spin-$s$ case \cite{MT2000}.

There are several relevant and interesting studies 
of the integrable spin-$s$ XXZ spin chains. 
The expression of eigenvalues of the spin-$s$ XXX transfer matrix 
$t^{(2s, 2s)}(\lambda)$ was derived by Babujan 
\cite{BabujanPLA,Babujan,Babujan-Tsvelick} 
through the algebraic Bethe-ansatz method.  
It was also derived by solving the series 
of functional relations  
among the spin-$s$ transfer matrices \cite{KR}. 
The functional relations are systematically generalized to  
the $T$ systems \cite{Kuniba-Nakanishi-Suzuki}. 
Recently, the algebraic Bethe ansatz 
for the spin-$s$ XXZ transfer matrix 
has been thoroughly reviewed and reconstructed 
from the viewpoint of the algebraic Bethe ansatz 
of the $U(1)$-invariant integrable model 
\cite{Melo-Martins,Martins-Melo}. 
Quite interestingly, it has also been applied to 
construct the invariant subspaces associated with the 
Ising-like spectra of the superintegrable chiral Potts 
model \cite{Nishino}.

The content of the paper consists of the following. In section 2, 
we introduce the $R$-matrix for the spin-1/2 XXZ spin chain. 
We then introduce conjugate basis vectors in order to formulate 
Hermitian elementary matrices $\widetilde{E}^{m, \, n}$ 
in the massless regime where $|q|=1$. 
We define the massless higher-spin monodromy matrices 
$\widetilde{T}^{(\ell, \, 2s)}(\lambda)$ 
in terms of the conjugate vectors, 
 after reviewing the fusion construction of the massive 
higher-spin monodromy matrices 
$T^{(\ell, \, 2s)}(\lambda)$    
and higher-spin XXZ transfer matrices, 
$t^{(\ell, \, 2s)}(\lambda)$, for $\ell=1, 2, \ldots$, as follows.  
We express the matrix elements of 
$T^{(\ell, \, 2s)}_{0, 1 2 \cdots N_s}(\lambda)$    
in terms of those of the spin-1/2 monodromy matrix 
$T^{(1, \, 1)}_{0, 1 2 \cdots L}(\lambda)$.  
Here  $T^{(1, \, 1)}_{0, 1 2 \cdots L}(\lambda)$ is defined on 
the tensor product of the two-dimensional auxiliary space  
$V_0^{(1)}$ and the $L$th tensor product  
of the 2-dimensional quantum space, $(V^{(1)})^{\otimes L}$.   
Here we recall $L=2s \times N_s$.  
In the fusion construction \cite{DM1}, 
monodromy matrix  $T^{(1, \, 2s)}_{0, 1 2 \cdots N_s}(\lambda)$ 
acting on the $N_s$ lattice sites is 
derived from  monodromy matrix 
$T^{(1, \, 1)}_{0, 1 2 \cdots L}(\lambda)$ acting 
on the $2s N_s$ lattice sites by setting inhomogeneous parameters $w_j$ 
to $N_s$ sets of complete $2s$-strings and by multiplying it by 
the $N_s$th tensor product of projection operators which project 
$(V^{(1)})^{\otimes 2s}$ to $V^{(2s)}$. 
In section 3, we explain the method for 
calculating the expectation values of given 
products of spin-$s$ local operators. 
We express the local operators in terms of global operators with   
inhomogeneous parameters $w_j^{(2s; \, \epsilon)}$, which are defined to be  
close to complete $2s$-strings with small deviations of $O(\epsilon)$, 
and evaluate the scalar products and the expectation values 
for the Bethe state with the same 
inhomogeneous parameters $w_j^{(2s; \, \epsilon)}$.   
Then, we obtain the expectation values, sending $\epsilon$ to 0. 
Here we note that the projection operators 
introduced in the fusion construction 
commute with the matrix elements of monodromy 
matrix $T^{(1, \, 1)}$  of inhomogeneous parameters 
$w_j^{(2s; \, \epsilon)}$ with $O(\epsilon)$ corrections. 
In section 4 we calculate 
the emptiness formation probability (EFP) for the spin-$s$ XXZ spin chain 
for a large but finite chain, and then evaluate the 
matrix $S$ which is introduced for expressing  
the EFP of an infinite spin-$s$ XXZ chain 
in the massless regime with $\zeta < \pi/2s$. 
Here we solve explicitly the integral equations for the spin-$s$ 
Gaudin matrix, 
expressing the $2s$-strings of the ground-state solution 
systematically in terms of the string centers. 
In section 5, we present explicitly  
the multiple-integral representation of the spin-$s$ EFP. 
We also derive it for the inhomogeneous chain where 
all the inhomogeneous parameters $\xi_p$ are given by 0.  
For an illustration, we calculate the multiple-integral representation 
of spin-1 EFP for $m=1$, $\la \widetilde{E}^{2, \, 2} \ra$, explicitly. 
In the XXX limit, the value of 
$\la \widetilde{E}^{2, \, 2} \ra$ approaches 1/3, 
which is consistent 
with the XXX result of Ref. \cite{Kitanine2001}. 
In section 6, we present the multiple-integral representations of the 
integrable spin-$s$ XXZ correlation functions.  
 We express the correlation function 
of an arbitrary product of elementary matrices by 
a single term of multiple integrals.  
For instance, we calculate the multiple-integral representation 
of the spin-1 ground-state expectation value, 
$\la \widetilde{E}^{1, \, 1} \ra$, explicitly, 
and show that it is consistent with 
the value of spin-1 EFP in section 5, i.e. we show 
 $\la \widetilde{E}^{1, \, 1} \ra + 2 \la \widetilde{E}^{2, \, 2} \ra=1$.   
Finally in section 7, we give concluding remarks.

%
%
\setcounter{section}{1} 
\setcounter{equation}{0} 
 \renewcommand{\theequation}{2.\arabic{equation}}
\section{Fusion transfer matrices}

\subsection{$R$-matrix and the monodromy matrix of type $(1,1^{\otimes L})$}
Let us introduce the $R$-matrix of the XXZ spin chain 
\cite{Korepin,MS2000,KMT1999,KMT2000}. 
We denote by $e^{a, \, b}$ a unit matrix that has only one nonzero element 
equal to 1 at entry $(a, b)$ where $a, b= 0, 1$. 
Let $V_1$ and $V_2$ be two-dimensional vector spaces. 
The $R$-matrix acting on $V_1 \otimes V_2$ is given by  
\be 
{R}^{+}(\lambda_1-\lambda_2) = \sum_{a,b,c,d=0,1} 
R^{+}(u)^{a \, b}_{c \, d} \, 
\, e^{a, \, c} \otimes e^{b, \, d} = 
\left( 
\begin{array}{cccc}
1 & 0 & 0 & 0 \\ 
0 & b(u) & c^{-}(u) & 0 \\
0 & c^{+}(u) & b(u) & 0 \\
0 & 0 & 0 & 1 \\
\end{array} 
\right) \, \label{eq:R+},  
\ee
where $u=\lambda_1-\lambda_2$, 
$b(u) = \sinh u/\sinh(u + \eta)$ and 
$c^{\pm}(u) = \exp( \pm u) \sinh \eta/\sinh(u + \eta)$. 
In the massless regime, we set $\eta= i \zeta$ by a real number  
$\zeta$,   
and we have $\Delta= \cos \zeta$. In the paper we mainly consider 
the region $0 \le \zeta < \pi/2s$.  
In the massive regime, we assign $\eta$ a real nonzero number 
and we have $\Delta = \cosh \eta > 1$.  
 Here we remark that the $R^{+}(\lambda_1-\lambda_2)$ 
is compatible with the homogeneous grading 
of $U_q(\widehat{sl}_2)$, which is explained in Appendix A 
\cite{DM1}.

We denote by $R^{(p)}(u)$ or simply by $R(u)$ the symmetric $R$-matrix 
where $c^{\pm}(u)$ of (\ref{eq:R+}) are replaced by 
$c(u)= \sinh \eta/\sinh(u+\eta)$ \cite{DM1}. The symmetric $R$-matrix 
is compatible with the affine quantum group $U_q(\widehat{sl}_2)$ 
of the principal grading \cite{DM1}.

Let $s$ be an integer or a half-integer.  We shall mainly 
consider the tensor product 
$V_1^{(2s)} \otimes \cdots \otimes V_{N_s}^{(2s)}$ 
of $(2s+1)$-dimensional vector spaces $V^{(2s)}_j$ 
with $L= 2s N_s$. 
In general, we consider the tensor product 
$V_0^{(2s_0)} \otimes V_1^{(2s_1)} \otimes \cdots 
\otimes V_{r}^{(2s_{r})}$ with $2s_1 + \cdots + 2s_r = L$,  
where $V_{j}^{(2s_j)}$ have spectral parameters 
$\lambda_j$ for $j=1, 2, \ldots, r$.    
For a given set of matrix elements 
$A^{a, \, \alpha}_{b, \, \beta}$  for $a,b=0, 1, \ldots, 2s_j$ and  
$\alpha, \beta=0, 1, \ldots, 2s_k$,  
we define operator $A_{j, k}$ by 
\bea 
A_{j, k} & = & \sum_{a,b=1}^{\ell} \sum_{\alpha, \beta} 
A^{a, \, \alpha}_{b, \, \beta} I_0^{(2s_0)} \otimes
I_1^{(2s_1)} \otimes \cdots \otimes I_{j-1}^{(2s_{j-1})}  \non \\ 
& & \quad \otimes e^{a, b}_j \otimes 
I_{j+1}^{(2s_{j+1})}  \otimes \cdots \otimes I_{k-1}^{(2s_{k-1})}  
\otimes e_k^{\alpha, \beta}  \otimes I_{k+1}^{(2s_{k+1})} 
 \otimes \cdots \otimes I_{r}^{(2s_r)} \, .   
\label{defAjk}
\eea

We now consider the $(L+1)$th tensor product of spin-1/2 representations, 
which consists of the tensor product of auxiliary space $V_0^{(1)}$ and 
the $L$th tensor product of quantum spaces $V_j^{(1)}$ 
for $j=1, 2, \ldots, L$, i.e.    
$V_0^{(1)} \otimes \left( V_1^{(1)} \otimes \cdots \otimes V_L^{(1)} \right)$. 
We call it the tensor product of type $(1,1^{\otimes L})$ 
and denote it by the following symbol:   
\be 
(1,1^{\otimes L}) = (1, \overbrace{1, 1, \ldots, 1}^{L}) \, . 
\ee

Applying definition (\ref{defAjk}) 
for matrix elements $R(u)^{ab}_{cd}$ of a given $R$-matrix, 
we define  $R$-matrices 
$R_{j k}(\lambda_j, \lambda_k)=R_{j k}(\lambda_j-\lambda_k)$ 
 for integers $j$ and $k$ with $0\le j < k \le L$. 
For integers $j, k$ and $\ell$ with $0 \le j < k < \ell \le L$, 
the $R$-matrices satisfy the Yang-Baxter equations 
\be  
 R_{j k}(\lambda_j-\lambda_k) 
R_{j \ell}(\lambda_j-\lambda_{\ell})
R_{k \ell}(\lambda_k-\lambda_{\ell})
=
 R_{k \ell}(\lambda_k-\lambda_{\ell}) 
R_{j \ell}(\lambda_j-\lambda_{\ell})
R_{j k}(\lambda_j-\lambda_k) \, . \label{eq:YBE}
\ee

We define the monodromy matrix of type $(1,1^{\otimes L})$ 
associated with homogeneous grading by 
\be 
T^{(1, \, 1 \, +)}_{0, 1 2 \cdots L}(\lambda_0; w_1, w_2, \ldots, w_L) 
= R_{0L}^{+}(\lambda_0-w_L) 
\cdots R_{02}^{+}(\lambda_0-w_2) R_{01}^{+}(\lambda_0-w_1) \, .   
\ee
Here we have set $\lambda_j=w_j$ for $j=1, 2, \ldots, L$,  
where $w_j$ are arbitrary parameters.     
We call them inhomogeneous parameters. We have expressed the symbol of 
type $(1,1^{\otimes L})$ as $(1, \, 1)$ in superscript.  
The symbol $(1, \, 1 \, +)$ denotes that it is consistent with 
homogeneous grading. 
We express operator-valued matrix elements of 
the monodromy matrix as follows.  
\be
T^{(1, 1 \, +)}_{0, 1 2 \cdots L}(\lambda; \{ w_j \}_L ) = 
\left( 
\begin{array}{cc} 
A^{(1 +)}_{1 2 \cdots L}(\lambda; \{ w_j \}_L) & 
B^{(1 +)}_{1 2 \cdots L}(\lambda; \{ w_j \}_L) \\ 
C^{(1 +)}_{1 2 \cdots L}(\lambda; \{ w_j \}_L) & 
D^{(1 +)}_{1 2 \cdots L}(\lambda; \{ w_j \}_L)  
\end{array} 
\right) \, . 
\ee
Here $\{ w_j \}_L$ denotes the set of $L$ parameters, 
$w_1, w_2, \ldots, w_L$.  
We also denote the matrix elements of 
the monodromy matrix by   
$[T^{(1, 1 +)}_{0, 1 2 \cdots L}(\lambda; \{ w_j \}_L )]_{a,b}$ 
for $a,b=0,1$. 

We derive 
the monodromy matrix consistent with principal 
grading, $T^{(1, 1 \, p)}_{0, 1 2 \cdots L}(\lambda; \{ w_j \}_L )$,    
from that of homogeneous grading via similarity transformation 
$\chi_{0 1 \cdots L}$  as follows \cite{DM1}.  
 \bea
& & T^{(1, 1 \, +)}_{0, 1 2 \cdots L}(\lambda; \{ w_j \}_L )  =  
\chi_{0 1 2 \cdots L} 
T^{(1, 1 \, p)}_{0, 1 2 \cdots L}(\lambda; \{ w_j \}_L )
\chi_{0 1 2 \cdots L}^{-1}  
\non \\ 
&  & = 
\left( 
\begin{array}{cc} 
\chi_{1 2 \cdots L} 
A^{(1 \,p)}_{1 2 \cdots L}(\lambda; \{ w_j \}_L) 
\chi_{1 2 \cdots L}^{-1} & 
e^{- \lambda_0} \chi_{1 2 \cdots L} 
B^{(1 \, p)}_{1 2 \cdots L}(\lambda; \{ w_j \}_L) 
\chi_{1 2 \cdots L}^{-1}  \\ 
e^{\lambda_0}
\chi_{1 2 \cdots L} 
C^{(1 \, p)}_{1 2 \cdots L}(\lambda; \{ w_j \}_L) 
\chi_{1 2 \cdots L}^{-1}  & 
\chi_{1 2 \cdots L} D^{(1 \, p)}_{1 2 \cdots L}(\lambda; \{ w_j \}_L) 
\chi_{1 2 \cdots L}^{-1}   
\end{array} 
\right) \, . \label{eq:gauge-transform}
\eea
Here $\chi_{01 \cdots L}= \Phi_0 \Phi_1 \cdots\Phi_L$ and  
$\Phi_j$ are given by  diagonal two-by-two matrices 
$\Phi_j={\rm diag}(1, \exp(w_j))$ acting on $V_j^{(1)}$ 
for $j=0, 1, \ldots, L$, and we set $w_0= \lambda_0$.  
In Ref. \cite{DM1} operator 
$A^{(1 \, +)}(\lambda)$ has been written as  $A^{+}(\lambda)$.  
Hereafter we shall often abbreviate the symbols $p$ 
in superscripts which shows the principal grading, and denote 
$(2s \, p)$ simply by $(2s)$.   

Let us introduce useful notation for expressing  
products of $R$-matrices as follows. 
\bea 
R_{1, 23 \cdots n}^{(w)} & = & R_{1 n}^{(w)}  
\cdots R_{13}^{(w)} R_{12}^{(w)}  \, , \non \\
R_{12 \cdots n-1, n}^{(w)}  & = & 
R_{1 n}^{(w)}  R_{2 n}^{(w)}  \cdots R_{n-1 \, n}^{(w)}  \, . 
\label{eq:useful-notation}
\eea
Here $R_{a b}^{(w)} $ denote the $R$-matrix 
$R_{a b}^{(w)} =R_{a b}^{(w)} (\lambda_a-\lambda_b)$ 
for $a, b =1, 2, \ldots, n$, where $w=+$ and $w=p$ in superscripts 
show the homogeneous and the principal grading, respectively.   
Then,    the monodromy matrix of type $(1, 1^{\otimes L} \, w)$ 
is expressed as follows.   
\bea 
T_{0, \, 1 2 \cdots L}^{(1, \, 1 \, w)}(\lambda_0; \{w_j \}_L) & = & 
R_{0, \, 1 2 \cdots L}^{(\, w)}(\lambda_0; \{w_j \}_L) \non \\ 
& = & R_{0 L}^{(w)} R_{0 L-1}^{(w)} \cdots R_{0 1}^{(w)}  \, . 
\eea
For instance we have 
$B_{12 \cdots L}^{(1 \, w)}(\lambda_0; \{ w_j \}_L)
=[R_{0, 1 2 \cdots L}^{(1 \, w)}(\lambda_0; \{w_j \}_L)]_{0,1} \, . $

%
%
\subsection{Projection operators and the massive fusion $R$-matrices}

Let  $V_1$ and $V_2$ be $(2s+1)$-dimensional vector spaces.     
We define permutation operator $\Pi_{1, \, 2}$ by   
\be 
\Pi_{1, \, 2} \, v_1 \otimes v_2 = 
v_2 \otimes v_1 \, , \quad v_1 \in V_1 \, , \, v_2 \in V_2 \, .  
\label{eq:Pi-2s-2s}
\ee
In the case of spin-1/2 representations,  
we define operator ${\check R}_{12}^{+}(\lambda_1-\lambda_2)$ by 
\be 
{\check  R}_{12}^{+}(\lambda_1-\lambda_2)
= \Pi_{1, \, 2} \,  R_{12}^{+}(\lambda_1 - \lambda_2) \, . 
\label{eq:Rcheck-spin-1/2}
\ee

We now introduce projection operators $P_{12\cdots \ell}^{(\ell)}$ for 
$\ell \ge 2$. 
We define  $P_{12}^{(2)}$ by $P_{12}^{(2)} = {\check R}_{1, 2}^{+}(\eta)$. 
 For $\ell > 2$ we define projection operators  
 inductively with respect to $\ell$ as follows 
\cite{Jimbo-Hecke,V-DWA}. 
\be 
P_{1 2 \cdots \ell}^{(\ell)} = 
P_{1 2 \cdots \ell-1}^{(\ell-1)} {\check R}^{+}_{\ell-1, \, \ell}
((\ell-1)\eta) P_{12\cdots \ell-1}^{(\ell-1)} \, . 
\label{eq:def-projector}
\ee 
The projection operator $P_{12\cdots \ell}^{(\ell)}$ 
gives a $q$-analogue of the full symmetrizer 
of the Young operators for the Hecke algebra \cite{Jimbo-Hecke}.   
We shall show the idempotency: 
$\left( P_{12\cdots \ell}^{(\ell)} \right)^2=P_{12\cdots \ell}^{(\ell)}$ 
in Appendix B. 
Hereafter we denote $P_{1 2 \cdots \ell}^{(\ell)}$ also by 
$P_{1}^{(\ell)}$ for short.

Applying projection operator $P_{a_1 a_2 \cdots a_{\ell}}^{(\ell)}$ 
to  vectors in the tensor product $V_{a_1}^{(1)} \otimes V_{a_2}^{(1)} 
\otimes \cdots \otimes V_{a_{\ell}}^{(1)}$, 
we can construct 
the $(\ell+1)$-dimensional vector space 
$V_{a_1 a_2 \cdots a_{\ell}}^{(\ell)}$ associated with 
the spin-$\ell/2$ representation of $U_q(sl_2)$. 
For instance, we have   
$P_{a_1 a_2}^{(2)} |+ - \rangle_{a} = (q/[2]_q)||2, 1 \rangle_{a}$,  
where we have introduced 
$| + - \rangle_{a} = |0 \rangle_{a_1} \otimes |1 \rangle_{a_2}$.  
The symbols such as $q$-integers are defined in Appendix C. 
Moreover, the basis vectors $|| \ell, n \rangle $ ($n=0, 1, \ldots, \ell$) 
and their dual vectors  $\langle \ell, n ||$  
are given for arbitrary nonzero integers $\ell$ 
in Appendix C. 
We denote $V_{a_1 a_2 \cdots a_{\ell}}^{(\ell)}$ 
also by $V_{a}^{(\ell)}$ or $V_{0}^{(\ell)}$ for short. 

Since 
$P^{(\ell)}_{1 2 \cdots \ell}$ is consistent   
with the spin-$\ell/2$ representation of $U_q(sl(2))$ 
(see (\ref{eq:consistency-P})), 
we have    
\be 
P^{(\ell)}_{1 2 \cdots \ell} 
= \sum_{n=0}^{\ell}  || \ell, n \rangle \, \langle \ell, n || \, . 
\label{eq:Psum}
\ee 
Applying projection operator 
$P_{2s(j-1)+1 \cdots 2s(j-1)+2s}^{(2s)}$ to tensor product 
$V_{2s(j-1)+1}^{(1)} \otimes \cdots \otimes V_{2s(j-1)+2s}^{(1)}$,  
we construct the spin-$s$ representation 
$V_{2s(j-1)+1 \cdots 2s(j-1)+2s}^{(2s)}$. We denote it also by 
$V_j^{(2s)}$, briefly.

In the tensor product of quantum spaces 
$V^{(2s)}_1 \otimes \cdots \otimes V_{N_s}^{(2s)}$,  
 we define $P_{12 \cdots L}^{(2s)}$  
 by 
\be  
P_{12 \cdots L}^{(2s)} = \prod_{i=1}^{N_s} P^{(2s)}_{2s(i-1)+1}  \, . 
\label{eq:PL}
\ee
Here we recall $L = 2s N_s $. 
We have put $2s$ in place of $\ell$.

We now introduce the massive fusion $R$-matrix 
$R^{(\ell, \, 2s \, +)}_{0, \, j}$ 
on the tensor product 
$V_{0}^{(\ell)} \otimes V_{j}^{(2s)}$ ($j=1, 2, \ldots, N_s$). 
It is valid in the massive regime with $\Delta > 1$.  
We first set rapidities $\lambda_{a_j}$ 
of auxiliary spaces $V_{a_j}^{(1)}$ 
by $\lambda_{a_{k}} = \lambda_{a_1} - (k-1) \eta$ for 
$k=1, 2, \ldots, \ell-1$, and then    
rapidities $\lambda_{2s(j-1)+k}$  
of quantum spaces $V_{2s(j-1)+k}^{(1)}$ by 
$\lambda_{2s(j-1)+k} = \lambda_{2s(j-1)+1} - (k-1) \eta$ 
for $k= 1, 2, \ldots, 2s$ and $j=1, 2, \ldots, N_s$.  
We define the massive fusion $R$-matrix 
$R^{(\ell, \, 2s \, +)}_{0, \, j}$ 
as follows.  
\bea 
& & R_{0 \, j}^{(\ell, \, 2s \, +)}(\lambda_{a_1}-\lambda_{2s(j-1)+1})  =  
P_{a_1 \cdots a_{\ell}}^{(\ell)} P_{2s(j-1)+1}^{(2s)} \, 
R^{+}_{a_1 \cdots a_{\ell}, \, 2s(j-1)+1 \cdots 2s j}  \, 
P_{a_1 \cdots a_{\ell}}^{(\ell)} P_{2s(j-1)+1}^{(2s)} \non \\ 
 &  & \qquad = 
P_{a_1 \cdots a_{\ell}}^{(\ell)} P_{2s(j-1)+1}^{(2s)} \, 
R^{+}_{a_1 \cdots a_{\ell}, \, 2s j} \cdots 
R^{+}_{a_1 \cdots a_{\ell}, \, 2s(j-1)+2} 
R^{+}_{a_1 \cdots a_{\ell}, \, 2s(j-1)+1}  \, 
P_{a_1 \cdots a_{\ell}}^{(\ell)} P_{2s(j-1)+1}^{(2s)} \, . 
\non \\
\eea

%
%
\subsection{Conjugate vectors and the massless 
fusion $R$-matrices} 

In order to construct Hermitian elementary matrices 
in the massless regime where $|q|=1$, 
we now introduce vectors $\widetilde{|| \ell, n \rangle}$ 
which are Hermitian conjugate to $\langle \ell, n ||$ when 
$|q|=1$ 
for positive integers $\ell$ with $n=0, 1, \ldots, \ell$. 
Setting the norm of $\widetilde{|| \ell, n \rangle}$ 
such that 
$\langle \ell, n || \,  \widetilde{|| \ell, n \rangle}=1$, 
we have   
\be 
\widetilde{|| \ell, n \rangle} = 
\sum_{1 \le i_1 < \cdots < i_n \le \ell} \sigma_{i_1}^{-} 
\cdots \sigma_{i_n}^{-} | 0 \rangle 
q^{-(i_1 + \cdots + i_n) + n \ell - n(n-1)/2} 
\left[ 
\begin{array}{cc} 
\ell \\ 
n 
\end{array} 
 \right]_q \, 
q^{-n(\ell-n)} 
\left( 
\begin{array}{cc} 
\ell \\ 
n 
\end{array} 
 \right)^{-1} \, . 
\ee
Here we have denoted the binomial coefficients 
for integers $\ell$ and $n$ with $0 \le n \le \ell$ as follows. 
\be 
\left( 
\begin{array}{cc} 
\ell \\ 
n 
\end{array} 
 \right)
= {\frac {\ell !} {(\ell-n)! n!}} \, .  
\ee
The $q$-binomial coefficients are defined in Appendix C. 
Dual vectors $\widetilde{\langle \ell, n ||}$, which are 
conjugate to $|| \ell, n \rangle$, are defined in Appendix C, 
and we have 
\be
\widetilde{\langle \ell, n ||} \, \widetilde{|| \ell, n \rangle} 
= \left[ 
\begin{array}{cc} 
\ell \\ 
n 
\end{array} 
 \right]_q^2 \,  
\left( 
\begin{array}{cc} 
\ell \\ 
n 
\end{array} 
 \right)^{-2} \, .  
\ee
They are determined by 
the action of $X^{\pm}$ with opposite coproduct: 
 $\Delta^{op}= \tau \circ \Delta$.  For instance, we have 
$\widetilde{|| \ell, n \rangle} = {\rm const.} \,  \Delta^{op}(X^{-})^{n} || \ell, 0 \rangle$. Here $X^{\pm}$ and $\Delta^{op}$ are defined in Appendix A.

For an illustration, 
in the spin-1 case,  
the basis vectors $|| 2, n  \rangle$ ($n=0, 1, 2$)      
 are given by \cite{DM1} 
\bea 
||2, 0 \rangle & = & |+ + \rangle \, , \non \\ 
||2, 1 \rangle & = & |+ - \rangle + q^{-1} | - + \rangle \, ,   \non \\ 
||2, 2 \rangle & = & |- - \rangle \, .  
\eea 
Here $| + - \rangle $ denotes $|0 \rangle_1 \otimes | 1 \rangle_2$,  
briefly.   
The conjugate vectors $\widetilde{|| 2, n \rangle}$ ($n=0, 1, 2$) 
are given by 
\bea 
\widetilde{||2, 0 \rangle} & = & |+ + \rangle \, , \non \\ 
\widetilde{||2, 1 \rangle} & = & \left( |+ -  \rangle  
+ q | - + \rangle \right) {\frac {[2]_q} {2q}} \, ,   \non \\ 
\widetilde{||2, 2 \rangle} & = & |- - \rangle \, . 
\eea
In the massless regime, 
operator $\widetilde{||2, 1 \rangle} \langle 2, 1 ||$ is Hermitian  
while $||2, 1 \rangle \langle 2, 1 ||$ is not.

Let us now introduce another set of projection operators 
$\widetilde{P}_{1 \cdots \ell}^{(\ell)}$ 
as follows. 
\be
\widetilde{ P}_{1 \cdots \ell}^{(\ell)} = \sum_{n=0}^{\ell} 
\widetilde{ || \ell , \,  n \rangle} \langle \ell , \, n ||  \, . 
\label{eq:P'sum}
\ee
Projector $\widetilde{ P}_{1 \cdots \ell}^{(\ell)}$ is idempotent: 
$(\widetilde{ P}_{1 \cdots \ell}^{(\ell)})^2= 
\widetilde{ P}_{1 \cdots \ell}^{(\ell)}$.   
In the massless regime where $|q|=1$, 
it is Hermitian:  
$\left( \widetilde{ P}_{1 \cdots \ell}^{(\ell)} \right)^{\dagger}= 
\widetilde{ P}_{1 \cdots \ell}^{(\ell)}$.  
From (\ref{eq:Psum}) and (\ref{eq:P'sum}), 
we show the following properties: 
\bea 
P_{1 2 \cdots \ell}^{(\ell)} 
\widetilde{ P}_{1 \cdots \ell}^{(\ell)}
& = & P_{1 2 \cdots \ell}^{(\ell)} \, ,  
\label{eq:PP'=P} \\
\widetilde{ P}_{1 \cdots \ell}^{(\ell)}
P_{1 2 \cdots \ell}^{(\ell)} 
& = & \widetilde{ P}_{1 \cdots \ell}^{(\ell)} \, . 
\label{eq:P'P=P'}
\eea
In the tensor product of quantum spaces, 
$V^{(2s)}_1 \otimes \cdots \otimes V_{N_s}^{(2s)}$,   
 we define $\widetilde{P}_{12 \cdots L}^{(2s)}$ by 
\be  
\widetilde{P}_{12 \cdots L}^{(2s)} 
= \prod_{i=1}^{N_s} \widetilde{P}^{(2s)}_{2s(i-1)+1}  \, . 
\ee
Here we recall $L = 2s N_s $ such as for (\ref{eq:PL}).

We define the massless fusion $R$-matrix 
$\widetilde{R}^{(\ell, \, 2s \, +)}_{0, \, j}$, 
applying projection operators $\widetilde{P}$ 
consisting of conjugate vectors to 
the product of $R$-matrices, as follows. 
\bea 
& & \widetilde{R}_{0 \, j}^{(\ell, \, 2s \, +)}
(\lambda_{a_1}-w_{2s(j-1)+1})  =  
\widetilde{P}_{a_1 \cdots a_{\ell}}^{(\ell)} 
\widetilde{P}_{2s(j-1)+1}^{(2s)} \, 
R^{+}_{a_1 \cdots a_{\ell}, \, 2s(j-1)+1 \cdots 2s j}  \, 
\widetilde{P}_{a_1 \cdots a_{\ell}}^{(\ell)} 
\widetilde{P}_{2s(j-1)+1}^{(2s)} \non \\ 
 &  & \qquad = 
\widetilde{P}_{a_1 \cdots a_{\ell}}^{(\ell)} 
\widetilde{P}_{2s(j-1)+1}^{(2s)} \, 
R^{+}_{a_1 \cdots a_{\ell}, \, 2s j} \cdots 
R^{+}_{a_1 \cdots a_{\ell}, \, 2s(j-1)+2} 
R^{+}_{a_1 \cdots a_{\ell}, \, 2s(j-1)+1}  \, 
\widetilde{P}_{a_1 \cdots a_{\ell}}^{(\ell)} 
\widetilde{P}_{2s(j-1)+1}^{(2s)} \, . 
\non \\
\eea
We should remark that the massless fusion $R$-matrix 
$\widetilde{R}^{(\ell, \, 2s)}$  
and the massive fusion $R$-matrix ${R}^{(\ell, \, 2s)}$
 have the same matrix elements.  
Some examples are shown in Appendix D. 

%
%
\subsection{Higher-spin monodromy matrix of type 
$(\ell, \, (2s)^{\otimes N_s})$ }

We now set the inhomogeneous parameters $w_j$ for $j=1, 2, \ldots, L$,  
as $N_s$ sets of complete $2s$-strings \cite{DM1}. 
We define $w_{(b-1)\ell+ \beta}^{(2s)}$ for $\beta = 1, \ldots, 2s$, 
as follows.  
\be 
w_{2s(b-1)+ \beta}^{(2s)} = \xi_b - (\beta-1) \eta \, , \quad 
 \mbox{for} \quad b = 1, 2, \ldots, N_s . 
\label{eq:ell-strings}
\ee
We shall define the monodromy matrix of type $(1, (2s)^{\otimes N_s})$ 
associated with homogeneous grading.  
We first define the massless monodromy matrix by  
\bea 
\widetilde{T}^{(1, \, 2s \, +)}_{0, \, 1 2 \cdots N_s}
(\lambda_0; \{ \xi_b \}_{N_s} ) 
& = & \widetilde{P}_{12 \cdots L}^{(2s)} 
R_{0, \, 1 \ldots L}^{(1, \, 1 \, +)} 
(\lambda_0; \{ w_{j}^{(2s)} \}_L) 
\widetilde{P}_{12 \cdots L}^{(2s)}  \non \\  
& = & 
\left( 
\begin{array}{cc} 
\widetilde{A}^{(2s +)}(\lambda; \{ \xi_b \}_{N_s}) & 
\widetilde{B}^{(2s +)}(\lambda; \{ \xi_b \}_{N_s}) \\ 
\widetilde{C}^{(2s +)}(\lambda; \{ \xi_b \}_{N_s}) & 
\widetilde{D}^{(2s +)}(\lambda; \{ \xi_b \}_{N_s})  
\end{array} 
\right) \, . 
\eea
Here, the (0,0) element is given by 
$ \widetilde{A}^{(2s +)}(\lambda; \{ \xi_b \}_{N_s})= 
\widetilde{P}_{12 \cdots L}^{(2s)}
A^{(1 +)}(\lambda; \{ w_j^{(2s)} \}_{L})
\widetilde{P}_{12 \cdots L}^{(2s)}$. 
We then define the massive monodromy matrix by 
\bea 
T^{(1, \, 2s \, +)}_{0, \, 1 2 \cdots N_s}
(\lambda_0; \{ \xi_b \}_{N_s} ) 
& = & {P}_{12 \cdots L}^{(2s)} 
R_{0, \, 1 \ldots L}^{(1, \, 1 \, +)} 
(\lambda_0; \{ w_{j}^{(2s)} \}_L) 
{P}_{12 \cdots L}^{(2s)}  \non \\  
& = & 
\left( 
\begin{array}{cc} 
{A}^{(2s +)}(\lambda; \{ \xi_b \}_{N_s}) & 
{B}^{(2s +)}(\lambda; \{ \xi_b \}_{N_s}) \\ 
{C}^{(2s +)}(\lambda; \{ \xi_b \}_{N_s}) & 
{D}^{(2s +)}(\lambda; \{ \xi_b \}_{N_s})  
\end{array} 
\right) \, . 
\eea

Let us introduce a set of $2s$-strings with small deviations from 
the set of complete $2s$-strings.   
\be 
w_{2s(b-1)+ \beta}^{(2s; \, \epsilon)} = \xi_b - (\beta-1) \eta  
+ \epsilon r_b^{(\beta)} \, , \quad 
 \mbox{for} \quad b=1, 2, \cdots, N_s, \, \mbox{and} \quad 
\beta=1, 2, \ldots, 2s. 
\label{eq:ell-strings-epsilon}
\ee
Here $\epsilon$ is very small and 
$r_{b}^{(\beta)}$ are generic parameters.  
We express the elements of the monodromy matrix $T^{(1,1)}$ with 
inhomogeneous parameters given by $w_j^{(2s; \, \epsilon)}$ 
for $j=1, 2, \ldots, L$ as follows. 
\be
T^{(1, \, 1 \, +)}_{0, \, 1 2 \cdots L}
(\lambda; \{ w_j^{(2s; \epsilon)} \}_{L}) = 
\left( 
\begin{array}{cc} 
A^{(2s +; \, \epsilon)}_{1 2 \cdots L}(\lambda) & 
B^{(2s +; \, \epsilon)}_{1 2 \cdots L}(\lambda) \\ 
C^{(2s +; \, \epsilon)}_{1 2 \cdots L}(\lambda) & 
D^{(2s +; \, \epsilon)}_{1 2 \cdots L}(\lambda)  
\end{array} 
\right) \, . 
\ee
Here we recall that 
$A^{(2s + ;  \, \epsilon)}_{1 2 \cdots L}(\lambda)$ denotes 
$A^{(1+)}_{1 2 \cdots L}(\lambda; \{ w_j^{(2s;  \, \epsilon)} \}_L)$. 
We also remark the following.   
\be 
\widetilde{ A}^{(2s +)}_{1 2 \cdots N_s}(\lambda; \{\xi_p \}_{N_s}) = 
\lim_{\epsilon \rightarrow 0} 
\widetilde{P}_{1 2 \cdots L}^{(2s)}  
A^{(2s +; \, \epsilon)}_{1 2 \cdots L}
(\lambda; \{w_j^{(2s; \, \epsilon)} \}_{L}) 
\widetilde{P}_{1 2 \cdots L}^{(2s)}  \, . 
\ee

Let us express the tensor product  
$V_0^{(\ell)} \otimes \left( V_1^{(2s)} \otimes \cdots 
\otimes V_{N_s}^{(2s)} \right)$,  by the following symbol   
\be 
(\ell, \, (2s)^{\otimes N_s}) 
= (\ell, \, \overbrace{2s, 2s, \ldots, 2s}^{N_s}) \, . 
\ee
Here we recall that $V_0^{(\ell)}$ abbreviates 
$V_{a_1 a_2 \ldots a_{\ell}}^{(\ell)}$.  
In the case of auxiliary space $V_0^{(\ell)}$ 
we define the massless monodromy matrix of type 
$(\ell, \, (2s)^{\otimes N_s})$ by 
\be 
\widetilde{T}^{(\ell, \, 2s \, +)}_{0, \, 1 2 \cdots N_s} 
 =  \widetilde{P}^{(\ell)}_{a_1 a_2 \cdots a_{\ell}} \,   
\widetilde{T}_{a_1, \, 1 2 \cdots N_s}^{(1, \, 2s \, +)}(\lambda_{a_1}) 
\widetilde{T}_{a_2, \, 1 2 \cdots N_s}^{(1, \, 2s \, +)}(\lambda_{a_1}-\eta) 
\cdots 
\widetilde{T}_{a_{\ell}, \, 1 2 \cdots N_s}^{(1, \,  2s \, +)}
(\lambda_{a_1}-(\ell-1)\eta) \,  
\widetilde{P^{(\ell)}}_{a_1 a_2 \cdots a_{\ell}} \, , 
\ee 
and the massive monodromy matrix of type 
$(\ell, \, (2s)^{\otimes N_s})$ by 
\be 
T^{(\ell, \, 2s \, +)}_{0, \, 1 2 \cdots N_s} 
 =  P^{(\ell)}_{a_1 a_2 \cdots a_{\ell}} \,   
T_{a_1, \, 1 2 \cdots N_s}^{(1, \, 2s \, +)}(\lambda_{a_1}) 
T_{a_2, \, 1 2 \cdots N_s}^{(1, \, 2s \, +)}(\lambda_{a_1}-\eta) 
\cdots 
T_{a_{\ell}, \, 1 2 \cdots N_s}^{(1, \,  2s \, +)}
(\lambda_{a_1}-(\ell-1)\eta) \,  
P^{(\ell)}_{a_1 a_2 \cdots a_{\ell}} \, .
\ee

For instance, the (0, 1) element of 
the massive monodromy matrix $T^{(2, \, 2s \, +)}(\lambda)$ is given by  
\be 
\langle 2,0 || T^{(2, \, 2s \, +)}_{a_1 a_2, \, 1 2 \cdots N_s}(\lambda) 
|| 2, 1 \rangle 
= A_{a_1}^{(2s \, +)}(\lambda)B_{a_2}^{(2s \, +)}(\lambda-\eta) 
+ q^{-1} B_{a_1}^{(2s \, +)}(\lambda)A_{a_2}^{(2s \, +)}(\lambda-\eta) \, . 
\ee

%
%
\subsection{Series of commuting higher-spin transfer matrices}

Suppose that  $|\ell, m \rangle$ 
for $m=0, 1, \ldots, \ell$, are the orthonormal 
basis vectors of $V^{(\ell)}$, 
and their dual vectors are given by $\langle \ell, m |$ 
for $m=0, 1, \ldots, \ell$. 
We define the trace of operator $A$ over the space $V^{(\ell)}$ by 
\be 
{\rm tr}_{V^{(\ell)}} A = \sum_{m=0}^{\ell} 
\langle \ell, m | A | \ell, m \rangle \, . 
\label{eq:Tsum}
\ee
The trace of $A$ over $V^{(\ell)}$ is equivalent to the trace  
of $A$ over the $\ell$th tensor product of $V^{(1)}$,  
$(V^{(1)})^{\otimes \ell}$, multiplied by a projector $P^{(\ell)}$ 
(or $\widetilde{P}^{(\ell)}$) as follows. 
\bea 
{\rm tr}_{V^{(\ell)}} A & = & 
{\rm tr}_{(V^{(1)})^{\otimes \ell}} \left(P^{(\ell)} A \right) \non \\ 
& = & \sum_{a_1, \ldots,  a_{\ell}=0, 1}  
\left( P^{(\ell)} A 
\right)^{a_1 \cdots a_{\ell}}_{a_1 \cdots a_{\ell}} \, . 
\eea
It follows from (\ref{eq:Psum}) that the trace with respect to  
$V^{(\ell)}$ is given by (\ref{eq:Tsum}).

We define the massive transfer matrix of type 
$(\ell, (2s)^{\otimes N_s})$ by 
\bea 
& & t^{(\ell, \, 2s \, +)}_{1 2 \cdots N_s}(\lambda) 
= {\rm  tr}_{V^{(\ell)}}
\left(T^{(\ell, \, 2s \, +)}_{0, \, 1 2 \cdots N_s}
(\lambda) \right) \non \\ 
&  & \, =  
\sum_{n=0}^{\ell}  {}_a \langle \ell,  n || 
 T^{(1, \, 2s \, +)}_{a_1, \, 1 2 \cdots N_s}(\lambda) 
T^{(1, \, 2s \, +)}_{a_2, \, 1 2 \cdots N_s}(\lambda-\eta)  
\cdots T^{(1, \, 2s \, +)}_{a_{\ell}, \, 1 2 \cdots N_s}
(\lambda-(\ell-1) \eta) \, 
|| \ell, n \rangle_a  \, ,  
\eea
and the massless transfer matrix of type 
$(\ell, (2s)^{\otimes N_s})$ by 
\bea 
& & \widetilde{t}^{(\ell, \, 2s \, +)}_{1 2 \cdots N_s}(\lambda) 
= {\rm  tr}_{V^{(\ell)}}
\left( \widetilde{T}^{(\ell, \, 2s \, +)}_{0, \, 1 2 \cdots N_s}
(\lambda) \right) \non \\ 
& = &  
\sum_{n=0}^{\ell} {}_a \langle \ell,  n || \, 
 \widetilde{T}^{(1, \, 2s \, +)}_{a_1, \, 1 2 \cdots N_s}(\lambda) 
\widetilde{T}^{(1, \, 2s \, +)}_{a_2, \, 1 2 \cdots N_s}(\lambda-\eta)  
\cdots \widetilde{T}^{(1, \, 2s \, +)}_{a_{\ell}, \, 1 2 \cdots N_s}
(\lambda-(\ell-1) \eta) \, 
\widetilde{ || \ell, n \rangle}_a  \, .   
\eea

It follows from the Yang-Baxter equations that 
the higher-spin transfer matrices commute in the tensor product space 
$V_1^{(2s)} \otimes \cdots \otimes V_{N_s}^{(2s)}$, 
which is derived by applying 
projection operator $P^{(2s)}_{1 2 \cdots L}$ 
to $V^{(1)}_1 \otimes \cdots \otimes V_L^{(1)}$. 
For instance, for the massless 
transfer matrices, making use of (\ref{eq:PP'=P}) and (\ref{eq:P'P=P'})  
we show 
\be  
P^{(2s)}_{1 2 \cdots L} 
{[} \widetilde{t}^{(\ell, \, 2s \, +)}_{1 2 \cdots N_s}(\lambda), \, \, 
\widetilde{t}^{(m, \, 2s \, +)}_{1 2 \cdots N_s}(\mu) {]} = 0 \, , 
\quad \mbox{for} \, \, 
\ell, m \in {\bf Z}_{\ge 0} . 
\ee
Therefore, for the massless transfer matrices, 
the eigenvectors 
of $\widetilde{t}^{(1, \, 2s \, +)}_{1 2 \cdots N_s}(\lambda)$ 
constructed by applying  $\widetilde{B}^{(2s \, +)}(\lambda)$ 
 to the vacuum $| 0 \rangle$ 
also diagonalize the higher-spin transfer matrices, in particular, 
 $\widetilde{t}^{(2s, \, 2s \, +)}_{1 2 \cdots N_s}(\lambda)$. 
Thus, we construct the ground state 
of the higher-spin Hamiltonian in terms of operators 
$\widetilde{B}^{(2s \, +)}(\lambda)$, 
which are the (0, 1) element of the monodromy matrix 
$\widetilde{T}^{(1, \, 2s \, +)}$.

%
%
\subsection{The integrable higher-spin Hamiltonians}

We now discuss the integrable massless spin-$s$ XXZ Hamiltonian. 
For $(2s+1)$-dimensional vector spaces $V_1^{(2s)}$ and $V_2^{(2s)}$,    
we can show that the massive spin-$s$ fusion $R$-matrix 
$R^{(2s, \, 2s \, +)}_{1 \, 2}(u)$ at $u=0$ becomes 
 the permutation operator $\Pi_{1, \, 2}$ 
for $V_1^{(2s)} \otimes V_2^{(2s)}$. Furthermore, 
operator ${\check R}^{(2s, \, 2s \, +)}_{1 , \, 2}(u) 
= \Pi_{1, \, 2} {R}_{1 \, 2}^{(2s, \, 2s \, +)}(u)$  
has the following spectral decomposition: 
\be  
{\check R}^{(2s, \, 2s \, +)}_{1 , \, 2}(u) 
= \sum_{j=0}^{2s} \rho_{4s-2j}(u) \,  
\left( {P}_{4s-2j}^{2s, \, 2s} \right)_{1 , \, 2}  
 \, , 
\label{eq:SPD-spin-s-XXZ}
\ee
where operator 
$({P}_{4s-2j}^{2s, \, 2s})_{1 , \, 2}$  
projects $V_1^{(2s)} \otimes V_2^{(2s)}$ 
to spin-$(2s-j)$ representation for $j=0, 1, \ldots, 2s$.  
Functions $\rho_{4s-2j}(u)$ are given by \cite{Jimbo-QG}   
\be 
\rho_{4s-2j}(u) = 
\prod_{k=2s-j+1}^{2s} {\frac {\sinh(k \eta  -u)} {\sinh(k \eta  +u)}} \, . 
\ee
The massless spin-$s$ $R$-matrix is thus given by 
\be  
\widetilde{{\check R}}^{(2s, \, 2s \, +)}_{i , \, i+1}(u)   
= \sum_{j=0}^{2s} \rho_{4s-2j}(u) \, 
\widetilde{P}^{(2s)}_{2s(i-1)+1} \widetilde{P}^{(2s)}_{2 s i +1} 
\, \cdot \,  
\left( {P}_{4s-2j}^{2s, \, 2s} \right)_{i , \, i+1} 
\, .  
\ee

It is easy to show that the massless spin-$s$ $R$-matrix 
$\widetilde{R}_{1 \, 2}^{(2s, \, 2s \, +)}(u)$
becomes the permutation operator at $u=0$: 
$\widetilde{R}_{1 \, 2}^{(2s, \, 2s \, +)}(0) = \Pi_{1, \, 2}$.  
Therefore, 
putting inhomogeneous parameters $\xi_p=0$ for $p=1, 2, \ldots, N_s$, 
we show that
that the transfer matrix 
$\widetilde{t}^{(2s, \, 2s \, +)}_{1 2 \cdots N_s}(\lambda)$ 
becomes the shift operator at $\lambda=0$. 
We thus derive the massless spin-$s$ XXZ Hamiltonian 
by the logarithmic derivative of the massless spin-$s$ transfer matrix, 
similarly as for the massive case.   
\bea 
{\cal H}^{(2s)}_{\rm XXZ} & = & \left. {\frac d {d \lambda}} 
\log \widetilde{t}^{(2s,  \, 2s \, +)}_{1 2 \cdots N_s}(\lambda)
\right|_{\lambda=0 , \, \xi_j=0} 
= 
\sum_{i=1}^{N_s} 
\left. \frac d {du} 
\widetilde{\check R}_{i, i+1}^{(2s, 2s)}(u) \right|_{u=0} \non \\ 
& = & \sum_{i=1}^{N_s} \sum_{j=0}^{2s} 
{\frac {d \rho_{4s-2j}} {du}} (0)  
 \, 
\widetilde{P}^{(2s)}_{2s(i-1)+1} \widetilde{P}^{(2s)}_{2 s i +1} 
\, \cdot \,  
\left( {P}_{4s-2j}^{2s, \, 2s} \right)_{i , \, i+1} \, . 
\label{eq:deriv-XXZ-Hamiltonian}
\eea

%
%
%
\setcounter{section}{2} 
 \setcounter{equation}{0} 
 \renewcommand{\theequation}{3.\arabic{equation}}
%
\section{Higher-spin expectation values}


%
%
\subsection{Algebraic Bethe ansatz}

In terms of  the vacuum vector $| 0 \rangle$ 
 where all spins are up,   
we define functions $a(\lambda)$ and $d(\lambda)$  by 
\bea 
A^{(1 \, p)}(\lambda; \{ w_j \}_L) |0 \rangle & = & 
a(\lambda; \{ w_j \}_L) | 0 \rangle \, , \non \\  
D^{(1 \, p)}(\lambda; \{ w_j \}_L) |0 \rangle & = & 
d(\lambda; \{ w_j \}_L) | 0 \rangle  \,  . 
\eea
We have $a(\lambda; \{ w_j \}_L) = 1$ and 
\be 
d(\lambda; \{ w_j \}_L)  =  \prod^{L}_{j=1} b(\lambda, w_j) \, . 
\ee
Here $b(\lambda, \mu)= b(\lambda-\mu)$.  
For the homogeneous grading ($w=+$) 
 and the principal  grading ($w=p$), 
it is easy to show the following relations:   
\bea 
A^{(2s \, w)}(\lambda) |0 \rangle & = & 
\widetilde{A}^{(2s \, w)}(\lambda) |0 \rangle = 
a^{(2s)}(\lambda; \{ \xi_k \}) | 0 \rangle \, , \non \\  
D^{(2s \, w)}(\lambda) |0 \rangle & = & 
\widetilde{D}^{(2s \, w)}(\lambda) |0 \rangle = 
d^{(2s)}(\lambda; \{ \xi_k \}) | 0 \rangle  \,  , 
\eea
where $a^{(2s)}(\lambda; \{ \xi_k \})$ and 
$d^{(2s)}(\lambda; \{ \xi_k \})$ are given by 
\bea 
a^{(2s)}(\lambda; \{ \xi_k \}) & = & 
a(\lambda; \{ w_j^{(2s)} \})=1 \, , \non \\  
d^{(2s)}(\lambda; \{ \xi_k \}) & = & 
d(\lambda; \{ w_j^{(2s)} \}) 
 =  \prod^{N_s}_{p=1} b_{2s}(\lambda, \xi_p) \, . 
\eea
Here we have  defined $b_t(\lambda, \mu)$ by  
$b_t(\lambda, \mu) = {\sinh(\lambda-\mu)}/{\sinh(\lambda-\mu+ t \eta)}$ .  
Here we recall $b(u)= b_{1}(u)= \sinh u/\sinh(u+\eta)$.  

In the massless regime, we define the Bethe vectors 
$| \widetilde{\{\lambda_{\alpha} \}}_{M}^{(2s \, w)} \,  \rangle$ 
for $w=+$ and $p$, and their dual vectors 
$ \langle \widetilde{\{\lambda_{\alpha} \}}_{M}^{(2s \, w)} |$ 
for $w=+$ and $p$,  as follows.   
\bea 
| \widetilde{ \{\lambda_{\alpha} \}}_{M}^{(2s \, w)} \,  \rangle 
& = & \prod_{\alpha=1}^{M} 
\widetilde{B}^{(2s \, w)}(\lambda_{\alpha}) | 0 \rangle \, ,  
\label{eq:eigen} \\ 
 \langle \widetilde{ \{\lambda_{\alpha} \}}_{M}^{(2s \, w)} \, |  
& = & \langle  0 | \, 
\prod_{\alpha=1}^{M} 
\widetilde{C}^{(2s \, w)}(\lambda_{\alpha})  \, . 
\label{eq:dual-eigen}
\eea
Here we recall 
$\widetilde{B}^{(2s \, +)}(\lambda_{\alpha})= 
\widetilde{P}^{(2s)}_{1 \cdots L} 
B^{(1 \, +)}(\lambda_{\alpha}, \{ w_k \}_L) 
\widetilde{P}^{(2s)}_{1 \cdots L}$. 
The Bethe vector (\ref{eq:eigen})  
gives an eigenvector of the massless transfer matrix 
\be 
\widetilde{t}^{(1, \, 2s \, w)}(\mu; \{ \xi_p \}_{N_s})
=\widetilde{ A}^{(2s \, w)}(\mu; \{\xi_p \}_{N_s}) + 
\widetilde{D}^{(2s \, w)}(\mu; \{\xi_p \}_{N_s})
\ee
for $w=+$ and $w=p$  with the following eigenvalue: 
\be 
\Lambda^{(1, {2s} \, w)}(\mu)= 
\prod_{j=1}^{M} 
{\frac {\sinh(\lambda_j - \mu + \eta)} {\sinh(\lambda_j - \mu)}} 
+ 
\prod_{p=1}^{N_s} 
b_{2s}(\mu, \xi_p) \, \cdot \, 
\prod_{j=1}^{M} 
{\frac {\sinh(\mu - \lambda_j + \eta)} {\sinh(\mu - \lambda_j)}} \, , 
\ee
if rapidities $\{\lambda_j \}_M$ satisfy  the Bethe ansatz equations  
\be 
\prod_{p=1}^{N_s} b_{2s}^{-1}(\lambda_{j}, \xi_p) = 
\prod_{k \ne j} {\frac {b(\lambda_{k}, \lambda_{j})} 
                     {b(\lambda_{j}, \lambda_{k})}}  \quad 
(j= 1, \ldots, M) \, .  \label{eq:BAE} 
\ee

Let us denote by 
$| \{\lambda_{\alpha}(\epsilon) \}_M^{(2s \, w; \, \epsilon)} \rangle$ 
the Bethe vector of $M$ Bethe roots $\{ \lambda_j(\epsilon)  \}_M$ 
for $w=+, p$:  
\be 
| \{\lambda_{\alpha}(\epsilon) \}_M^{(2s \, w; \, \epsilon)} 
\rangle =  B^{(2s \, w; \epsilon)}(\lambda_1(\epsilon) ) 
\cdots B^{(2s \, w; \epsilon)}(\lambda_M(\epsilon) ) | 0 \rangle \, ,  
\ee
where rapidities $\{ \lambda_j(\epsilon) \}_M$ satisfy the Bethe ansatz 
equations with inhomogeneous parameters $w_j^{(2s; \epsilon)}$ as follows.  
\be 
\frac {a(\lambda_j(\epsilon); \{ w_{k}^{(2s; \, \epsilon)} \}_L )} 
{d(\lambda_j(\epsilon) ;  \{ w_{k}^{(2s; \, \epsilon)} \}_L) } = 
\prod_{k=1; k \ne j}^{M} 
{\frac {b(\lambda_k(\epsilon), \lambda_j(\epsilon))}  
{b(\lambda_j(\epsilon), \lambda_k(\epsilon))}} \, . 
\ee
It gives an eigenvector of the transfer matrix 
\be 
t^{(1, 1 \, w)}(\mu; \{ w_j^{(2s; \, \epsilon)}  \}_{L})
=A^{({2s \, w; \, \epsilon})}(\mu; \{w_j^{(2s; \, \epsilon)} \}_{L}) + 
D^{({2s \, w; \, \epsilon})}(\mu; \{w_j^{(2s; \, \epsilon)} \}_{L})
\ee
with the following eigenvalue: 
\be 
\Lambda^{(1, 1 \, w)}(\mu; \{w_j^{(2s; \, \epsilon)} \}_{L})= 
\prod_{j=1}^{M} 
{\frac {\sinh(\lambda_j(\epsilon) - \mu + \eta)} 
{\sinh(\lambda_j(\epsilon) - \mu)}} 
+ \prod_{j=1}^{L} 
b(\mu, w_j^{(2s; \, \epsilon)} ) \, \cdot \, 
\prod_{j=1}^{M} 
{\frac {\sinh(\mu - \lambda_j(\epsilon) + \eta)} 
{\sinh(\mu - \lambda_j(\epsilon))}} \, . 
\ee

Let us assume that in the limit of $\epsilon$ going to 0,  
the set of Bethe roots $\{ \lambda_j(\epsilon)  \}_M$ is given by 
$\{ \lambda_j  \}_M$. Then, we have  
\be 
 P_{12 \cdots L}^{(2s)} \,  
| \widetilde{ \{ \lambda_j \}}_M^{(2s \, +)} \rangle = 
\lim_{\epsilon \rightarrow 0} P_{12 \cdots L}^{(2s)} \, 
| \{ \lambda_j(\epsilon)  \}_M^{(2s \, +; \epsilon)} \rangle  \, . 
\label{eq:limit-Bethe-state}
\ee

%
%
\subsection{
Hermitian elementary matrices ${\widetilde{ E}}_i^{m \, , \, n \, (2s \, +)}$ 
in the massless regime}

We define massless elementary matrices 
$\widetilde{E}^{m, \, n \, (2s+)}$ for $m, n=0, 1, \ldots, 2s$,  
in the spin-$s$ representation of $U_q(sl_2)$ as follows. 
\be 
\widetilde{E}^{m, \, n \, (2s \, +)} = 
\widetilde{|| \ell, m \rangle} \langle \ell, n|| \, . 
\ee
In the tensor product space, 
$(V^{(2s)})^{\otimes N_s}$, 
we define $\widetilde{E}^{m, \, n \, (2s \, +)}_i$ 
for $i=1, 2, \ldots, N_s$ by  
\be 
\widetilde{E}^{m, \, n \, (2s \, +)}_i = (I^{(2s)})^{\otimes (i-1)} \otimes 
\widetilde{E}^{m, \, n \, (2s \, +)}  \otimes (I^{(2s)})^{\otimes (N_s-i)} \, . 
\ee
Elementary matrices $\widetilde{E}^{n, \, n \, (2s \, +)}$ 
for $n=0, 1, \ldots, 2s$,  
are Hermitian in the massless regime. 
In fact, when $|q|=1$, for $m, n= 0, 1, \ldots, 2s$,  we have 
\be 
\left( \widetilde{E}^{m, \, n \, (2s \, +)} \right)^{\dagger} 
= 
\left[ 
\begin{array}{c}  
2s \\ 
m 
\end{array} 
\right]_q^2 \, 
\left[ 
\begin{array}{c}  
2s \\ 
n 
\end{array} 
\right]_q^{-2} \, 
\left( 
\begin{array}{c}  
2s \\ 
m 
\end{array} 
\right)^{-1} \, \, 
\left( 
\begin{array}{c}  
2s \\ 
n 
\end{array} 
\right) \, 
\widetilde{E}^{n, \, m \, (2s \, +)} \, . 
\ee

We can express any given spin-$s$ local operator 
of the massless case 
in terms of the spin-1/2 global operators   
by a method similar to the massive case \cite{DM1}. 
For $m = n$,  we have 
\bea  
\widetilde{ E}_{i}^{n, \, n \, (2s+)} 
& = & \left(
\begin{array}{c} 
2s \\
n 
\end{array} 
 \right) \, 
\widetilde{P}^{(2s)}_{1 \cdots L} 
\,  
\prod_{\alpha=1}^{(i-1)2s} (A^{(1+)}+D^{(1+)})(w_{\alpha}) 
\prod_{k=1}^{n} D^{(1+)}(w_{(i-1)2s+k} ) 
\non \\ 
& & \quad \times \, \prod_{k=n+1}^{2s} A^{(1+)}(w_{(i-1)2s+k})  
\prod_{\alpha=i 2s +1}^{2s N_s} 
(A^{(1+)}+D^{(1+)})(w_{\alpha}) \, \,   
\widetilde{ P}^{(2s)}_{1 \cdots L}
\, .  \label{eq:Em=n}
\eea
Formulas expressing $\widetilde{E}^{m, \, n \, (2s+)}$ 
for $m > n$ or $m < n$ are given in Appendix E. 

When we evaluate expectation values, 
we want to remove the projection operators 
introduced in order to express the spin-$s$ local operator 
in terms of spin-1/2 global operators such as in (\ref{eq:Em=n}). 
Then, we shall make use of the following lemma.  

\begin{lemma} 
Projection operators $P^{(2s)}_{12 \cdots L}$ and 
$\widetilde{P}^{(2s)}_{12 \cdots L}$ 
commute with the matrix elements of the monodromy matrix 
$T^{(1,1 \, +)}_{0, 12 \cdots L}(\lambda; \{ w_j^{(2s; \epsilon)} \}_L)$ 
such as $A^{(2s \, +; \epsilon)}(\lambda)$ 
in the limit of $\epsilon$ going to 0. 
\bea  
 P_{12 \cdots L}^{(2s)}  
T^{(1 , \, 1 \, +)}_{0, 12 \cdots L}(\lambda; 
\{ w_j^{(2s; \, \epsilon)} \}_L) \, P_{12 \cdots L}^{(2s)}  
& = & 
P_{12 \cdots L}^{(2s)}
 \, T^{(1, \, 1 \, +)}_{0, 12 \cdots L}
(\lambda; \{ w_j^{(2s; \, \epsilon)} \}_L) 
+ O(\epsilon) \, , \label{eq:commute} \\ 
 P_{12 \cdots L}^{(2s)}  
T^{(1, \, 1 \, +)}_{0, 12 \cdots L}(\lambda; 
\{ w_j^{(2s; \, \epsilon)} \}_L) \,
 \widetilde{P}_{12 \cdots L}^{(2s)} 
& = & 
P_{12 \cdots L}^{(2s)} \, T^{(1,1)}_{0, 12 \cdots L}
(\lambda; \{ w_j^{(2s; \, \epsilon)} \}_L) 
+ O(\epsilon) \, . \label{eq:commute2} 
\eea
For instance we have 
$P_{12 \cdots L}^{(2s)} B^{(2s \, +; \, \epsilon)}(\lambda) 
P_{12 \cdots L}^{(2s)}
= P_{12 \cdots L}^{(2s)}  B^{(2s \, +; \, \epsilon)}(\lambda) + O(\epsilon). $ 
\end{lemma}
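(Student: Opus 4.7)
The plan is to establish (\ref{eq:commute}) first, via an exact one-sided intertwining at $\epsilon=0$ together with an analyticity argument in $\epsilon$, and then to deduce (\ref{eq:commute2}) from (\ref{eq:commute}) using the compatibility relations (\ref{eq:PP'=P}) and (\ref{eq:P'P=P'}).

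\emph{Step one: exact intertwining at the $2s$-string configuration.} At $\epsilon=0$ the inhomogeneous parameters form $N_s$ complete $2s$-strings and I would prove the one-sided fusion identity
\[
P^{(2s)}_{12\cdots L}\, T^{(1,1\,+)}_{0,12\cdots L}(\lambda;\{w_j^{(2s)}\}_L) \; = \; P^{(2s)}_{12\cdots L}\, T^{(1,1\,+)}_{0,12\cdots L}(\lambda;\{w_j^{(2s)}\}_L)\, P^{(2s)}_{12\cdots L}.
\]
This reduces, block by block, to the statement that the partial product $R^{+}_{0,2sb}\cdots R^{+}_{0,2s(b-1)+1}$ with the quantum-slot rapidities tuned to $w^{(2s)}_{2s(b-1)+\beta}=\xi_b-(\beta-1)\eta$ preserves the kernel of $P^{(2s)}_{2s(b-1)+1}$, equivalently sends $\mathrm{Im}\,P^{(2s)}_{2s(b-1)+1}$ into itself after one applies $P^{(2s)}_{2s(b-1)+1}$ on the left. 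This is exactly the fusion identity underlying the definition of the massive fusion $R$-matrix $R^{(\ell,2s\,+)}_{0,j}$ recalled in Section~2.2: at the string configuration $\mathrm{Im}\,P^{(2s)}_{2s(b-1)+1}$ is the unique spin-$s$ highest-weight submodule of $(V^{(1)})^{\otimes 2s}$, and the spin-$1/2$ $R$-matrix, being an intertwiner of $U_q(\widehat{sl}_2)$-modules, respects this isotypic decomposition. Blocks with different indices commute trivially, so the identity globalizes to the full $P^{(2s)}_{12\cdots L}$.

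\emph{Step two: analyticity in $\epsilon$.} Each entry of $T^{(1,1\,+)}_{0,12\cdots L}(\lambda;\{w_j\}_L)$ is a rational, hence holomorphic, function of the inhomogeneous parameters $w_j$ near the $2s$-string point, so the operator
\[
F(\epsilon) \; := \; P^{(2s)}_{12\cdots L}\, T^{(1,1\,+)}_{0,12\cdots L}(\lambda;\{w_j^{(2s;\epsilon)}\}_L)\, P^{(2s)}_{12\cdots L} \; - \; P^{(2s)}_{12\cdots L}\, T^{(1,1\,+)}_{0,12\cdots L}(\lambda;\{w_j^{(2s;\epsilon)}\}_L)
\]
is holomorphic in $\epsilon$. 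By step one and the idempotency $(P^{(2s)}_{12\cdots L})^2=P^{(2s)}_{12\cdots L}$, one has $F(0)=0$, so a Taylor expansion gives $F(\epsilon)=O(\epsilon)$, which is precisely (\ref{eq:commute}).

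\emph{Step three: deducing (\ref{eq:commute2}).} The relations (\ref{eq:PP'=P}) and (\ref{eq:P'P=P'}) assert that $P^{(2s)}_{12\cdots L}$ and $\widetilde{P}^{(2s)}_{12\cdots L}$ are two projections on $(V^{(1)})^{\otimes L}$ with \emph{the same kernel} (while having different images). Setting $Q=I-P^{(2s)}_{12\cdots L}$ and $\widetilde{Q}=I-\widetilde{P}^{(2s)}_{12\cdots L}$, both $Q$ and $\widetilde{Q}$ project onto $\ker P^{(2s)}_{12\cdots L}$, so $Q\widetilde{Q}=\widetilde{Q}$. Rewriting (\ref{eq:commute}) as $P^{(2s)}_{12\cdots L}\, T^{(1,1\,+)}\, Q = O(\epsilon)$, I then compute
\[
P^{(2s)}_{12\cdots L}\, T^{(1,1\,+)}\, \widetilde{P}^{(2s)}_{12\cdots L} \; = \; P^{(2s)}_{12\cdots L}\, T^{(1,1\,+)} \; - \; P^{(2s)}_{12\cdots L}\, T^{(1,1\,+)}\, Q \widetilde{Q} \; = \; P^{(2s)}_{12\cdots L}\, T^{(1,1\,+)} + O(\epsilon),
\]
which is (\ref{eq:commute2}).

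The main obstacle is step one: pinning down the minimal fusion fact in the form $PT=PTP$ at exact strings in the presence of a nontrivial auxiliary slot $V_0^{(1)}$ and verifying it uniformly in the auxiliary rapidity $\lambda$ and across all $N_s$ blocks. This is standard fusion-construction folklore, but the argument must cleanly separate the role of the quantum-slot projector from the auxiliary slot; once it is in hand, steps two and three are routine bookkeeping with analyticity and the projector identities (\ref{eq:PP'=P})--(\ref{eq:P'P=P'}).
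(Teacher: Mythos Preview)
Your proposal is correct and follows essentially the same approach as the paper: the paper likewise Taylor-expands $T^{(1,1\,+)}(\lambda;\{w_j^{(2s;\epsilon)}\})$ about $\epsilon=0$, invokes the exact one-sided identity $P^{(2s)}_{12\cdots L}T=P^{(2s)}_{12\cdots L}TP^{(2s)}_{12\cdots L}$ at the string point (which it imports from \cite{DM1} rather than sketching as you do), and then deduces (\ref{eq:commute2}) from (\ref{eq:commute}) using the projector compatibility. The only cosmetic difference is that for step three the paper simply multiplies (\ref{eq:commute}) on the right by $\widetilde P^{(2s)}_{12\cdots L}$ and applies $P\widetilde P=P$ directly, rather than your equivalent kernel argument via $Q\widetilde Q=\widetilde Q$.
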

\begin{proof} 
Taking derivatives with respect to inhomogeneous parameters $w_j$, 
we can show 
\be 
T^{(1,1 \, +)}_{0, 12 \cdots L}(\lambda; \{ w_j^{(2s \, +; \, \epsilon)} \}_L)
= 
T^{(1,1 \, +)}_{0, 12 \cdots L}(\lambda; \{ w_j^{(2s)} \}_L) 
+ O(\epsilon) \, ,  
\ee
where $T^{(1, \, 1 \, +)}_{0, 12 \cdots L}(\lambda; \{ w_j^{(2s)} \}_L)$
commutes with the projection operator $P_{12 \cdots L}^{(2s)}$ 
as follows \cite{DM1}.  
\be 
P_{12 \cdots L}^{(2s)} 
T^{(1, \, 1 \, +)}_{0, 12 \cdots L}(\lambda; \{ w_j^{(2s)} \}_{L}) = 
P_{12 \cdots L}^{(2s)} 
T^{(1, \, 1 \, +)}_{0, 12 \cdots L}(\lambda; \{ w_j^{(2s)} \}_{L}) 
P_{12 \cdots L}^{(2s)} \, . 
\ee
We show (\ref{eq:commute2}) making use of (\ref{eq:PP'=P}). 
\end{proof}

%
%

\subsection{Expectation value of a local operator 
through the limit: $\epsilon \rightarrow 0$}

In the massless regime, 
we define the expectation value of product of operators 
$\prod_{k=1}^{m} \widetilde{E}_k^{i_k, \, j_k \, (2s \, +)}$ 
with respect to an eigenstate 
$| \widetilde{ \{\lambda_{\alpha} \}}_M^{(2s \, +)} \rangle$ by 
\be 
\langle 
\prod_{k=1}^{m}
\widetilde{E}_k^{i_k, \, j_k \, (2s \, +)} \rangle \left( 
\{ \lambda_{\alpha} \}_M^{(2s \, +)} \right)  
=
{\frac {\langle \widetilde{ \{\lambda_{\alpha} \}}_M^{(2s \, +)} | \, 
\prod_{k=1}^{m} \widetilde{E}_k^{i_k, \, j_k \, (2s \, +)} 
 | \widetilde{ \{\lambda_{\alpha} \}}_M^{(2s \, +)} \rangle} 
 {\langle \widetilde{ \{\lambda_{\alpha} \}}_M^{(2s \, +)} | 
 \widetilde{ \{\lambda_{\alpha} \}}_M^{(2s \, +)} \rangle}} \, .  
%
\ee

We evaluate the expectation value of a given spin-$s$ local operator 
for a Bethe-ansatz eigenstate $| \{\lambda_{\alpha} \}_M^{(2s)} \rangle$, 
as follows. 
We first assume that the Bethe roots 
$\{ \lambda_{\alpha}(\epsilon)\}_M$  
are continuous with respect to small parameter $\epsilon$.  
We express the spin-$s$ local operator 
in terms of spin-1/2 global operators 
such as formula (\ref{eq:Em=n}) with generic 
inhomogeneous parameters $w_j^{(2s; \epsilon)}$.  
Applying (\ref{eq:commute}) and (\ref{eq:commute})
we remove the projection operators 
out of the product of global operators.  
We next calculate the scalar product for the Bethe state 
$|\{ \lambda_k(\epsilon) \}_M^{(2s; \, \epsilon)} \rangle$ 
which has the same inhomogeneous parameters $w_j^{(2s; \epsilon)}$, 
making use of the formulas of the spin-1/2 case.       
Then we take the limit of sending $\epsilon$ to 0, and obtain 
the expectation value of the spin-$s$ local operator.

For an illustration,  
let us consider the expectation value of 
$\widetilde{E}_1^{n , \, n \, (2s \, +)}$.  
First, applying projection operator $P^{(2s)}_{1 2 \cdots L}$ to 
$| \widetilde{ \{ \lambda_{\alpha} \}}^{(2s \, +)}_{M} \rangle 
= \prod_{\alpha=1}^{M} 
\widetilde{B}^{(2s \, +)}(\lambda_{\alpha}) |0 \rangle$ 
we show  
\bea 
P^{(2s)}_{1 \cdots L} |  \widetilde{ 
\{ \lambda_{\alpha} \}}_M^{(2s \, +)} \rangle 
& = & P^{(2s)}_{1 \cdots L} \, \prod_{\alpha=1}^{M} 
B^{(2s \, +; \epsilon)}(\lambda_{\alpha}(\epsilon)) \,  |0 \rangle 
+ O(\epsilon) 
\non \\ 
& = & e^{- \sum_{\alpha=1}^{M} \lambda_{\alpha}(\epsilon)} \, 
 P^{(2s)}_{1 \cdots L} \, \chi_{1 2 \cdots L} \, 
\prod_{\alpha=1}^{M} 
B^{(2s; \, \epsilon)}(\lambda_{\alpha}(\epsilon)) \,  |0 \rangle 
+ O(\epsilon)
\, . 
\eea
Second, making use of the relation 
$\langle 0 | = \langle  0| P_{12 \cdots L}^{(2s)}$, 
we show 
\bea 
\langle  \widetilde{ \{ \lambda_{\alpha} \}}_M^{(2s+)} | 
& = & \langle 0 |  \, 
\prod_{\alpha=1}^{M} C^{(2s \, +; \epsilon)} 
(\lambda_{\alpha}(\epsilon)) \, P^{(2s)}_{1 \cdots L}
 + O(\epsilon) \non \\ 
& = & \langle 0 |  \, 
\prod_{\alpha=1}^{M} C^{(2s; \, \epsilon)} 
(\lambda_{\alpha}(\epsilon)) \, \chi_{1 2 \cdots L}^{-1}  
\, P^{(2s)}_{1 \cdots L} \, e^{\sum_{\alpha=1}^{M} \lambda_{\alpha}(\epsilon)} 
 + O(\epsilon)
\, .  
\eea
Making use of (\ref{eq:Em=n}) we have 
\bea 
& & 
\langle \widetilde{ \{ \lambda_{\alpha} \}}_M^{(2s \, +)} |  \,  
\widetilde{E}^{n \, n \, (2s \, +)}_{1}  \,   
| \widetilde{ \{ \lambda_{\alpha} \}}_M^{(2s \, +)} \rangle 
\non \\ 
& = & 
\left(
\begin{array}{c} 
2s \\
n 
\end{array} 
 \right) \, 
\langle 0 |  
 \prod_{\alpha=1}^{M} C^{(2s+; \, \epsilon)}
(\lambda_{\alpha}(\epsilon)) P^{(2s)}_{1 \cdots L}  
\, \underline{\widetilde{P}_{12 \cdots L}^{(2s)}} \,    
\prod_{k=1}^{n} D^{(2s+; \, \epsilon)}(w_{k}^{(2s; \, \epsilon)} )  
 \prod_{k=n+1}^{2s} {A}^{(2s+; \, \epsilon)}
(w_{k}^{(2s; \, \epsilon)})  
\non \\
& & \quad \times \,
\prod_{\alpha= 2s +1}^{2s N_s} 
(A^{(2s \, +; \, \epsilon)}+D^{(2s \, +; \, \epsilon)})
(w_{\alpha}^{(2s; \, \epsilon)}) \, \,   
\underline{ \widetilde{P}^{(2s)}_{1 \cdots L}} \, \cdot \, 
\prod_{\alpha=1}^{M} 
\widetilde{B}^{(2s \, +)}(\lambda_{\alpha}) 
 |0 \rangle \, + O(\epsilon) \, . 
 \label{eq:remove-projectors}
\eea 
Here we have $\prod_{j=1}^{2sN_s} 
(A^{(2s+; \, \epsilon)}+D^{(2s+ ; \, \epsilon)})(w_j^{(2s; \, \epsilon)})
=I^{\otimes L}$ 
for generic $\epsilon$. 
We apply projection operators $P^{(2s)}$ to $\widetilde{P}^{(2s)}$ 
from the left, 
which are underlined in (\ref{eq:remove-projectors}), 
and make use of (\ref{eq:PP'=P}). 
We then move 
the projection operators $P^{(2s)}$ in the leftward direction, 
 making use of (\ref{eq:commute}).  
Thus,  the right-hand side of (\ref{eq:remove-projectors}) 
is now given by the following: 
\bea
& = & 
\left(
\begin{array}{c} 
2s \\
n 
\end{array} 
 \right) \, 
\langle 0 | 
 \prod_{\alpha=1}^{M}  
C^{(2s +; \, \epsilon)}(\lambda_{\alpha}(\epsilon))  \, \,  
\prod_{k=1}^{n} D^{(2s +; \, \epsilon)}(w_{k}^{(2s; \, \epsilon)} ) 
\,  \prod_{k=n+1}^{2s} A^{(2s + ; \, \epsilon)}
(w_{k}^{(2s; \, \epsilon)})  
\non \\ 
& & \times 
\prod_{j= 2s +1}^{2s N_s} 
(A^{(2s+; \, \epsilon)}+D^{(2s+; \, \epsilon)})
(w_{j}^{(2s; \, \epsilon)}) \,  \,   
\prod_{\beta=1}^{M} 
B^{(2s+; \, \epsilon)}(\lambda_{\beta}(\epsilon)) |0 \rangle 
+ O(\epsilon) . \label{eq:product} 
\eea
After applying the gauge transformation 
$\chi_{1 \cdots L}^{-1}$ inverse to (\ref{eq:gauge-transform}) 
\cite{DM1}, we obtain 
\bea  
& & \langle \widetilde{ \{ \lambda_{\alpha} \}}_M^{(2s \, +)} | 
 \widetilde{ E}^{n \, n \, (2s \, +)}_{1}   
| \widetilde{ \{ \lambda_{\alpha} \}}_M^{(2s \, +)} \rangle  
\non \\ 
& = & 
\left(
\begin{array}{c} 
2s \\
n 
\end{array} 
 \right) \, 
\lim_{\epsilon \rightarrow 0} \, 
\langle 0 | 
 \prod_{\alpha=1}^{M}  
C^{(2s; \, \epsilon)}(\lambda_{\alpha}(\epsilon))  \, \,  
\prod_{k=1}^{n} D^{(2s; \, \epsilon)}(w_{k}^{(2s; \, \epsilon)} ) 
\,  \prod_{k=n+1}^{2s} A^{(2s; \, \epsilon)}
(w_{k}^{(2s; \, \epsilon)})  
\non \\ 
& & \quad \times \,
\prod_{j= 2s +1}^{2s N_s} 
(A^{(2s; \, \epsilon)}+D^{(2s; \, \epsilon)})
(w_{j}^{(2s; \, \epsilon)}) \, \,   
\prod_{\beta=1}^{M} 
B^{(2s; \, \epsilon)}(\lambda_{\beta}(\epsilon)) |0 \rangle \, .  
\label{eq:product2}
\eea
Here $A^{(2s; \, \epsilon)}$ and $D^{(2s; \, \epsilon)}$ denote 
matrix elements $A^{(2s \, p; \, \epsilon)}$ 
and $D^{(2s \, p; \, \epsilon)}$
of the monodromy matrix 
with principal grading, respectively.  
In the last line of (\ref{eq:product}), 
 we have evaluated the eigenvalue of transfer matrix 
$A^{(2s; \, \epsilon)}(w_{j}^{(2s; \, \epsilon)})+
D^{(2s; \, \epsilon)}(w_{j}^{(2s; \, \epsilon)})$ 
on the eigenstate 
$| \{ \lambda_{\beta}(\epsilon) \}_M^{(2s; \, \epsilon)} \rangle$ 
as follows.    
\bea 
& & \prod_{j= 2s +1}^{2s N_s}  
\left( A^{(2s; \, \epsilon)}(w_{j}^{(2s; \, \epsilon)})+
D^{(2s; \, \epsilon)}(w_{j}^{(2s; \, \epsilon)}) \right)
| \{ \lambda_{\beta}(\epsilon) \}_M^{(2s; \, \epsilon)} \rangle 
\non \\ 
& & = \left( 
\prod_{j=2s+1}^{2s N_s} \prod_{\alpha=1}^{M} 
b^{-1}(\lambda_{\alpha}(\epsilon) -  w_{j}^{(2s; \, \epsilon)}) 
\right) \, 
| \{ \lambda_{\beta}(\epsilon) \}_M^{(2s; \, \epsilon)} \rangle \, . 
\eea

Before sending $\epsilon$ to 0, 
we expand the products of $C$ operators multiplied by operators 
$A$ and $D$ by the commutation relations between $C$ and $A$ as well as 
$C$ and $D$, respectively. We then  
evaluate the scalar product of $B$ and $C$ operators 
with inhomogeneous parameters $w^{(2s; \epsilon)}_j$. 
Finally,  we  derive the expectation value in 
the limit of sending $\epsilon$ to 0.

Sending $\epsilon$ to 0, we calculate 
the expectation value of $A^{(2s)}(\lambda) +D^{(2s)}(\lambda)$ 
at $\lambda= w_2^{(2s)}$. 
For instance, we calculate 
$A^{(2s; \, \epsilon)}(w_2^{(2s; \, \epsilon)})+
D^{(2s; \, \epsilon)}(w_2^{(2s; \, \epsilon)})$  
on the vacuum $|0 \rangle$ as follows.  
\bea 
& & \lim_{\epsilon \rightarrow 0} 
\langle 0 | 
\left(A^{(2s; \epsilon)}(w_2^{(2s; \epsilon)}) 
+ D^{(2s; \epsilon)}(w_2^{(2s; \epsilon)}) \right) | 0 \rangle
\non \\ 
& = & \lim_{\epsilon \rightarrow 0} 
\langle 0 | \left( A(w_2^{(2s; \epsilon)}; 
\{ w_j^{(2s; \epsilon)} \}_L )
+D^{(2s)}(w_2^{(2s; \epsilon)}; 
\{ w_j^{(2s; \epsilon)} \}_L) \right) | 0 \rangle 
\non \\ 
& = & \lim_{\epsilon \rightarrow 0} \, \left( 1 + \prod_{j=1}^{\ell N_s} 
b(w_2^{(2s; \epsilon)} - w_j^{(2s; \epsilon)}) \right) 
\langle 0 | 0 \rangle 
\non \\ 
& = & (1 + 0 ) \, \langle 0 | 0 \rangle \, . 
\label{eq:limit}
\eea
If we put $\lambda=w_2^{(2s)}$ after sending  $\epsilon$ to 0,  
the result is different from (\ref{eq:limit}) as follows. 
\be 
\lim_{\lambda \rightarrow w^{(2s)}_2} 
 \langle 0 | \left( A^{(2s)}(\lambda; \{ w_j^{(2s)} \}_L)+
D^{(2s)}(\lambda; \{ w_j^{(2s)} \}_L) \right)) | 0 \rangle 
= \left( 1 + \prod_{p=1}^{N_s} b_{\ell}(w^{(2s)}_2 - \xi_p) 
\right) \langle 0 | 0 \rangle \, . 
\ee

%
%
%
%
\setcounter{section}{3}
 \setcounter{equation}{0} 
 \renewcommand{\theequation}{4.\arabic{equation}}
%
\section{Derivation of matrix $S$}

\subsection{The ground-state solution of $2s$-strings}

We shall introduce $\ell$-strings for an integer $\ell$.  
Let us shift rapidities $\lambda_j$ by $s \eta$ such as 
$\tilde{\lambda}_j=\lambda_j + s \eta$. Then,   
the Bethe ansatz equations (\ref{eq:BAE}) are given by   
\be 
\prod_{p=1}^{N_s} {\frac 
{\sinh(\tilde{\lambda}_{j} - {\xi}_p + s \eta)} 
{\sinh(\tilde{\lambda}_{j} - {\xi}_p - s \eta)} }  
= \prod_{\beta=1; \beta \ne \alpha}^{n}  
{\frac {\sinh(\tilde{\lambda}_{j} - \tilde{\lambda}_{\beta} + \eta)} 
       {\sinh(\tilde{\lambda}_{j} - \tilde{\lambda}_{\beta} - \eta)}} 
\, , \quad \mbox{for} \, \, j=1, 2, \ldots, n \,. 
\ee

We define an $\ell$-string by the following set of rapidities. 
\be 
{\tilde \lambda}_{a}^{(\alpha)} = \mu_a 
+ ({\ell} +1 - 2 \alpha) {\frac {\eta} 2} 
+ \epsilon_a^{(\alpha)} 
\quad  
\mbox{for} \quad \alpha =1, 2, \ldots, \ell.  
\label{eq:2s-string} 
\ee
We call $\mu_a$ the center of the $\ell$-string 
and $\epsilon_a^{(\alpha)}$ string deviations. 
We assume that $\epsilon_a^{(\alpha)}$ are very small for large $N_s$: 
\be 
\lim_{N_s \rightarrow \infty} 
\epsilon_a^{(\alpha)}=0 .  
\ee
If they are zero, then we call the set of rapidities of 
(\ref{eq:2s-string})  a complete $\ell$-string. 
The string center $\mu_a$ corresponds 
to the central position among the $\ell$ complex 
numbers:  ${\tilde \lambda}_{a}^{(1)}, {\tilde \lambda}_{a}^{(2)}, \ldots, 
{\tilde \lambda}_{a}^{(\ell)}$. 
Furthermore we assume that $\mu_a$ are real. 
If inhomogeneous parameters, $\xi_p$, are small enough, 
then the Bethe ansatz equations should have 
an $\ell$-strings  as a solution.  

In terms of rapidities ${\lambda}_j$ which are not shifted,  
an $\ell$-string is expressed in the following form:   
\be 
\lambda_{a}^{(\alpha)} = \mu_a - (\alpha- 1/2) \eta 
+ \epsilon_a^{(\alpha)} \quad 
\mbox{for} \quad \alpha = 1, 2, \ldots, \ell \, . 
\ee
We denote ${\lambda}_{a}^{(\alpha)}$ 
also by ${\lambda}_{(a, \alpha)}$.

Let us now introduce the conjecture that the ground state 
of the spin-$s$ case $| \psi_g^{(2s)} \rangle$ is 
given by $N_s/2$ sets of $2s$-strings:   
\be 
\lambda_{a}^{(\alpha)} 
= \mu_a - (\alpha- 1/2) \eta + \epsilon_a^{(\alpha)} \, , \quad  
\mbox{for} \, \, a=1, 2, \ldots, N_s/2 \, \,  
\mbox{and} \, \,  \alpha = 1, 2, \ldots, 2s .  
\ee
In terms of $\lambda_{a}^{(\alpha)}$s in the massless regime, 
for $w=+$ and $p$,  we have  
\be 
 | \psi_g^{(2s \, w)} \rangle = 
\prod_{a=1}^{N_s/2} \prod_{\alpha=1}^{2s} 
\widetilde{B}^{(2s \, w)}(\lambda_a^{(\alpha)}; \{\xi_p \}) | 0 \rangle . 
\ee
Hereafter we set $M= 2s N_s/2 = s N_s$.

According to analytic and numerical studies 
\cite{deVega-Woynarovich,KB,KBP},  
we may assume the following  
properties of string deviations $\epsilon_a^{(\alpha)}$s. 
When $N_s$ is very large, the deviations are given by 
\be 
\epsilon_a^{(\alpha)} = i  \, \delta_a^{(\alpha)} \, ,  
\ee
where $i$ denotes $\sqrt{-1}$, and 
$\delta_a^{(\alpha)}$ are real. Moreover, 
$\delta_{a}^{(\alpha)} - \delta_{a}^{(\alpha+1)} > 0$ 
for $\alpha=1, 2, \ldots, 2s-1$,  and 
$|\delta_{a}^{(\alpha)}| > |\delta_{a} ^{(\alpha+1)}|$ 
for $\alpha < s$, 
 while $|\delta_{a}^{(\alpha)}| < |\delta_{a} ^{(\alpha+1)}|$ 
for $\alpha \ge s$.

In the thermodynamic limit: $N_s \rightarrow \infty$,  
the Bethe ansatz equations for the ground state of the 
higher-spin XXZ chain 
become the integral equation for the string centers, 
as shown in Appendix F \cite{M.Takahashi}. 
The density of string centers, $\rho_{\rm tot}(\mu)$, is given by 
\be 
\rho_{\rm tot}(\mu)= {\frac 1 N_s} \sum_{p=1}^{N_s} 
{\frac 1 {2 \zeta \cosh(\pi (\mu- \xi_p)/\zeta)}} 
\ee
Thus, the sum over all the Bethe roots of the ground state 
is evaluated by integrals 
in the thermodynamic limit, $N_s \rightarrow \infty$, 
as follows. 
\bea
\frac 1 {N_s} \sum_{A=1}^{M} f(\lambda_A) 
& = & \frac 1 {N_s} \sum_{\alpha=1}^{2s} 
\sum_{a=1}^{N_s/2} f(\lambda_{(a, \alpha)}) \non \\ 
& = & \sum_{\alpha=1}^{2s}
 \int_{-\infty}^{\infty}  f(\mu_a - (\alpha-1/2) \eta 
+ \epsilon_{a}^{(\alpha)}) \, 
\rho_{\rm tot}(\mu_a) \, d \mu_{a} + O(1/N_s) \, . 
\eea

For the homogeneous chain where $\xi_p=0$ for $p=1, 2, \ldots, N_s$, 
we denote the density of string centers by $\rho(\lambda)$.  
\be 
\rho(\lambda)= {\frac 1 {2 \zeta \cosh(\pi \lambda/ \zeta)}} \, . 
\ee

Let us introduce useful notation of the suffix of rapidities. 
For rapidities $\lambda_{a}^{(\alpha)}=\lambda_{(a, \alpha)}$ 
we define integers $A$ by $A= 2s(a-1) + \alpha$ for 
$a=1, 2, \ldots, N_s/2$ and for $\alpha=1, 2, \ldots, 2s$.  
We thus denote $\lambda_{(a, \alpha)}$ also 
by $\lambda_A$ for $A=1, 2, \ldots, s N_s$, and 
put $\lambda_{(a, \alpha)}$ 
in increasing order with respect to $A=2s(a-1)+\alpha$ 
such as $\lambda_{(1,1)}=\lambda_1, \lambda_{(1,2)}=\lambda_2, 
\ldots, \lambda_{(N_s/2, 2s)}=\lambda_{s N_s}$.    

In the ground state 
 rapidities $\lambda_A$ for 
$A=1, 2, \ldots, M$, are now expressed by   
\be 
\lambda_{2s(a-1)+ \alpha} = \mu_{a} - (\alpha- 1/2) \eta 
+ \epsilon_a^{(\alpha)} \quad
 \mbox{for} \, \, a=1, 2, \ldots, N_s/2 \, \, 
\mbox{and} \, \,  \alpha= 1, 2, \ldots, 2s.     
\ee
For a given real number $x$, 
let us denote by $[x]$ the greatest integer less than or equal to $x$. 
When $A=2s(a-1)+ \alpha$ with $1 \le \alpha \le 2s$, 
integer $a$ is given by $a=[(A-1)/2s]+1$, and integer $\alpha$ 
is given by $\alpha=A - 2s [(A-1)/2s]$.

%
%
\subsection{Derivation of the spin-$s$ EFP for a finite chain}

We define the emptiness formation probability (EFP) 
for the spin-$s$ case by  
\be 
\tau^{(2s \, +)}(m) = {\frac {\langle \psi_g^{(2s \, +)} 
| \widetilde{E}_1^{2s, 2s \, (2s \, +)} 
\cdots \widetilde{E}_m^{2s, 2s \, (2s \, +)} |\psi_g^{(2s \, +)} \rangle} 
{\langle \psi_g^{(2s \, +)} | \psi_g^{(2s \, +)} \rangle}} \, . 
\label{eq:EFPspin-s}
\ee
We shall denote $\tau^{(2s \, +)}(m)$ by $\tau^{(2s)}(m)$. 

Let us assume that  
Bethe roots $\{ \lambda_{\alpha}(\epsilon) \}_M$ 
with inhomogeneous parameters $w_j^{(2s; \, \epsilon)}$ 
($j=1, 2, \ldots, L$; $L=2sN_s$) 
become the ground-state solution 
of the spin-$s$ XXZ spin chain, $\{ \lambda_{\alpha} \}_M$, 
in the limit of sending $\epsilon$ to 0.  
We denote the Bethe vector with Bethe roots 
$\{\lambda_{\alpha}(\epsilon) \}_M$ by 
\bea 
| \psi_g^{(2s \, +; \, \epsilon)}  \rangle  & = & 
\prod_{\alpha=1}^{M} B^{(2s; \, \epsilon)}(\lambda_{\alpha}(\epsilon)) 
| 0 \rangle  
= e^{-\sum_{\alpha=1}^{M} \lambda_{\alpha}(\epsilon)} \, 
\chi_{1 2 \cdots L} \, \cdot \, 
 \prod_{\alpha=1}^{M} B^{(2s \, p; \, \epsilon)}
(\lambda_{\alpha}(\epsilon))  
| 0 \rangle     \non \\   
& = & 
e^{-\sum_{\alpha=1}^{M} \lambda_{\alpha}(\epsilon)} \, 
\chi_{1 2 \cdots L} \, 
| \psi_g^{(2s ; \, \epsilon)} \rangle 
\, . 
\eea
Here we recall the transformation inverse to (\ref{eq:gauge-transform}).  
We now calculate the norm of the spin-$s$ ground state 
from that of the spin-1/2 case through 
the limit of sending $\epsilon$ to 0 as follows.   
\bea 
& & {\langle \psi_g^{(2s \, +)} | \psi_g^{(2s \, +)} \rangle} 
= \lim_{\epsilon \rightarrow 0} \, 
\langle  \psi_g^{(2s; \, \epsilon)}
| \psi_g^{(2s; \, \epsilon)}  \rangle
 \non \\ 
& = & \lim_{\epsilon \rightarrow 0} \,  
\langle 0 | \prod_{k=1}^{M} C^{(2s; \, \epsilon)}(\lambda_k) 
\prod_{j=1}^{M} B^{(2s; \, \epsilon)}(\lambda_{j}) | 0 \rangle \non \\ 
 & = & \lim_{\epsilon \rightarrow 0} \,  
\sinh^M \eta \prod_{j, k=1; j \ne k}^{M} 
b^{-1}(\lambda_j(\epsilon), \lambda_k(\epsilon)) \, \cdot \, 
{\rm det} \Phi^{(1)'}\left( \{ \lambda_k(\epsilon) \}_M; 
\{ w_j^{(2s; \, \epsilon)} \}_L \right)  \non \\ 
& = & 
\sinh^M \eta \prod_{j, k=1; j \ne k}^{M} 
b^{-1}(\lambda_j, \lambda_k) \, \cdot \, 
{\rm det} \Phi^{(2s) '} 
\left( \{ \lambda_k \}_M; \{ \xi_p \}_{N_s} \right) 
\eea
where matrix elements of the spin-$s$ Gaudin matrix 
for $j, k= 1, 2, \ldots, M$,  
are given by    
\bea 
& & \Phi^{(2s) \, '}_{j, k}(\{ \lambda_{l} \}_M; \{\xi_p \}) \non \\ 
& = & - {\frac {\partial} {\partial \lambda_k}} 
\log \left( {\frac {a^{(2s)}(\lambda_j)}{d^{(2s)}(\lambda_j)}} \, 
 \prod_{t \ne j} {\frac {\sinh(\lambda_t- \lambda_j +\eta)}  
{\sinh(\lambda_t- \lambda_j - \eta)}}     
\right) \non \\ 
& = & \delta_{j, k} \left( 
\sum_{p=1}^{N_s} \frac {\sinh(2s \eta)} 
{\sinh(\lambda_j - \xi_p) \sinh(\lambda_j- \xi_p+ 2s \eta)} 
- \sum_{C=1}^{M} {\frac {\sinh 2\eta}  
{\sinh(\lambda_j - \lambda_{C} + \eta) 
\sinh(\lambda_j - \lambda_{C} - \eta)}} 
 \right) \non \\ 
& & + {\frac {\sinh 2\eta }   
 {\sinh(\lambda_j- \lambda_k + \eta)\sinh(\lambda_j- \lambda_k - \eta)}} 
\, . 
\eea
 
By applying formula (\ref{eq:Em=n}) with $n=2s$, 
the numerator of (\ref{eq:EFPspin-s}) is given by   
\bea 
& & \langle \psi_g^{(2s \, +; \, \epsilon)} | 
\widetilde{E}_1^{2s, 2s \, (2s \, +)} 
\cdots \widetilde{E}_m^{2s, 2s \, (2s \, +)} 
|\psi_g^{(2s \, +; \, \epsilon)} \rangle 
 =  \lim_{\epsilon \rightarrow 0}  
\langle \psi_g^{(2s; \, \epsilon)} 
| \prod_{k=1}^{m} E_k^{2s, 2s \, (2s)} \, 
|\psi_g^{(2s; \, \epsilon)} \rangle 
\non \\ 
&  & \quad = \lim_{\epsilon \rightarrow 0} 
\langle \psi_g^{(2s; \, \epsilon)} |  
P^{(2s)}_{1 2 \cdots L} \prod_{i=1}^{m} 
\Bigg( \prod_{\alpha=1}^{2s (i-1)} \left(A^{(2s; \, \epsilon)}
+D^{(2s; \, \epsilon)} \right)
(w_{\alpha}^{(2s; \, \epsilon)})       
\cdot \prod_{k=1}^{2s} D^{(2s; \, \epsilon)}(w_{2s(i-1)+k}^{(2s; \, \epsilon)}) \non \\ 
& & \qquad  \qquad \qquad \cdot 
\prod_{\alpha=1}^{2sN_s} 
 \left(A^{(2s; \, \epsilon)}+D^{(2s; \, \epsilon)} \right)
(w_{\alpha}^{(2s; \, \epsilon)}) \Bigg)      
P^{(2s)}_{1 2 \cdots L}  |\psi_g^{(2s)} \rangle
\non \\ 
& & \quad = 
 \prod_{j=1}^{m} \prod_{\alpha=1}^{M} 
b_{2s}(\lambda_{\alpha}, \xi_j) 
\,  \lim_{\epsilon \rightarrow 0} 
\langle \psi_g^{(2s; \, \epsilon)} | \, D^{(2s; \, \epsilon)}
(w_1^{(2s; \, \epsilon)}) \cdots 
D^{(2s; \, \epsilon)}(w_{2s m}^{(2s; \, \epsilon)}) \, 
| \psi_g^{(2s; \, \epsilon)} \rangle \, . \label{eq:GDDDG}
\eea
Let us set $\lambda_{M+j}(\epsilon)= w_j^{(2s; \, \epsilon)}$ 
for $j=1, 2, \ldots, 2s m$.  
Applying formula (\ref{eq:CCD}) to (\ref{eq:GDDDG})  
we have 
\bea 
& &  
\langle 0 | \prod_{\alpha=1}^{M} 
C^{(2s; \, \epsilon)}(\lambda_{\alpha}(\epsilon)) 
\prod_{j=1}^{2s m} 
D^{(2s;  \, \epsilon)}(\lambda_{M+j}(\epsilon)) \, 
\prod_{\beta=1}^{M} 
B^{(2s; \, \epsilon)}(\lambda_{\beta}(\epsilon)) \, 
| 0 \rangle 
\non \\ 
& = & \sum_{c_1=1}^{M} \sum_{c_2=1; c_2 \ne c_1}^{M} \cdots 
 \sum_{c_{2sm}=1; c_{2sm} \ne c_1, \ldots, c_{2sm-1}}^{M}
G_{c_1 \cdots c_{2s m}}
(\lambda_1(\epsilon), \cdots, \lambda_{M+ 2s m}(\epsilon); 
\{ w_j^{(2s; \, \epsilon)} \}_{L}) \non \\ 
& & \times 
\langle 0 | 
\prod_{k=1; k \ne c_1, \ldots, c_{2sm}}^{M+2sm} 
C^{(2s; \, \epsilon)}(\lambda_k(\epsilon)) 
\, \prod_{\alpha=1}^{M} B^{(2s; \, \epsilon)}
(w_j^{(2s; \, \epsilon)}) | 0 \rangle \, ,    
\label{eq:CDB}
\eea
where 
\be 
G_{c_1 \cdots c_{2sm}}
(\lambda_1, \cdots, \lambda_{M+2sm}; \{ w_j \}_{L} )  
= \prod_{j=1}^{2sm} \left( d(\lambda_{c_j}; 
\{ w_j \}_{L} ) 
{\frac {\prod_{t=1; t \ne c_1, \ldots, c_{j-1}}^{M+j-1} 
\sinh(\lambda_{c_j}-\lambda_t + \eta) }  
 {\prod_{t=1; t \ne c_1, \ldots, c_{j}}^{M+j} 
  \sinh(\lambda_{c_j}-\lambda_t) }}  
\right) \, . \label{eq:G}
\ee

We remark that from (\ref{eq:G}) 
the set of integers $c_1, \ldots, c_{2sm}$ 
of the most dominant terms in (\ref{eq:CDB}) are given by 
$m$ sets of $2s$-strings.  
If they are not, the numerator of (\ref{eq:G}) 
and hence the right-hand-side of (\ref{eq:CDB}) 
becomes smaller at least by  the order of $1/N_s$ in the large $N_s$ limit. 
However, each of the most dominant terms diverges with respect to $N_s$ 
in the large-$N_s$ limit,   
and they should cancel each other so that the final result becomes finite.   
We therefore calculate all possible contributions 
with respect to the set of integers, $c_1, c_2, \ldots, c_{2sm}$.

Let us take  a  sequence of distinct integers $c_j$ 
satisfying $1 \le c_j \le M$ for $j=1, 2, \ldots, 2sm$.  
We denote it  by $(c_j)_{2sm}$, i.e. 
$(c_j)_{2sm}=(c_1, c_2, \ldots, c_{2sm})$.    
Let us  denote by $\Sigma_M$  the set of integers, 
$1, 2, \ldots, M$: $\Sigma_M= \{ 1, 2, \ldots, M \}$. 
We then consider the complementary set of integers 
$\Sigma_M \setminus \{ c_1, \ldots, c_{2sm} \}$, and  
 put the elements in increasing order such as 
$z_1 < z_2 < \cdots < z_{M-2sm}$.   
We then extend the sequence $z_n$ of $M-2sm$ integers into 
that of $M$ integers by setting $z_{j+ M-2sm} = c_j$ 
for $j=1, 2, \ldots, 2sm$. 
We shall denote $z_n$ 
also by $z(n)$ for $n=1, 2, \ldots, M$.

In terms of sequence $(z_n)_M$ 
we express the scalar product in the last line of 
(\ref{eq:CDB}) as follows.    
\bea 
& & \langle 0 | 
\prod_{k=1; k \ne c_1, \ldots, c_{2sm}}^{M+2sm} 
C^{(2s; \, \epsilon)}(\lambda_k(\epsilon)) 
\, 
\prod_{\alpha=1}^{M} B^{(2s; \, \epsilon)}(\lambda_{\alpha}(\epsilon)) 
| 0 \rangle 
 \non \\    
&  & \quad = \langle 0 | 
\prod_{k=1}^{M-2sm} C^{(2s; \, \epsilon)}(\lambda_{z(k)}(\epsilon)) 
\prod_{j=1}^{2sm} C^{(2s; \, \epsilon)}(w_j^{(2s; \, \epsilon)}) \, 
\prod_{i=1}^{M-2sm} B^{(2s; \, \epsilon)}(\lambda_{z(i)}(\epsilon)) 
\prod_{j=1}^{2sm} B^{(2s; \, \epsilon)}
(\lambda_{c_j}(\epsilon)) | 0 \rangle . \non \\  
\label{eq:scalar}
\eea

We evaluate  scalar product (\ref{eq:scalar}), 
sending $\nu_j$ to $\lambda_{z(j)}(\epsilon)$  
for $j \le M-2sm$ and to $w_{j-M+2sm}^{(2s; \, \epsilon)}$ for $j > M-2sm$
in the following matrix:   
\be 
H^{(1)}((\lambda_{z(k)}(\epsilon))_M, 
(\nu_{z(1)}, \ldots, \nu_{z(M-2sm)}, 
\nu_{M-2sm+1}, \ldots , \nu_{M}); (w_j^{(2s; \, \epsilon)})_{L})  \, . 
\ee
Here we define the matrix elements 
$H_{ab}^{(2s)}( \{ \lambda_{\alpha} \}_n, \, \{ \mu_j \}_n ; \, 
\{\xi_k \}_{N_s} )$ 
for $a, b=1, 2, \ldots, n$, by 
\bea 
& & H_{ab}^{(2s)}( \{ \lambda_{\alpha} \}_n, \, \{ \mu_j \}_n 
; \, \{\xi_k \}_{N_s} ) 
\non \\ 
& & = {\frac {\sinh \eta} {\sinh(\lambda_{a} - \mu_b)} } 
\left( {\frac {a(\mu_b)} {d^{(2s)}(\mu_b;  \{\xi_k \})} } 
\prod_{k=1; k \ne a}^{n}  \sinh(\lambda_k - \mu_b + \eta) 
- \prod_{K=1; k \ne a}^{n} 
\sinh(\lambda_k - \mu_b - \eta) \right) \, . \non \\ 
\label{eq:Hab}
\eea

Let us denote $M-2sm$ by $M^{'}$. 
We write the composite of two sequences
$(a(i))_M$ and $(b(j))_N$ as $(a(i))_M \# (b(j))_N$. 
Explicitly we have 
\be 
(a(i))_M \# (b(j))_N= (a(1), \ldots, a(M), b(1), \ldots, b(N)) \, . 
\ee
 For $j > M^{'} = M-2sm$,  we have   
\bea 
& & 
 \lim_{\nu_{j} \rightarrow 
w_{j-M^{'}}^{(2s; \, \epsilon)}} 
d(\nu_j ; \{ w_j^{(2s; \, \epsilon)} \}_L) \,  
H^{(1)}_{i, \, j}(( \lambda_{z(k)}(\epsilon) )_M,  
 (\nu_k)_{M^{'}} \# (\nu_{k+M^{'}})_{2sm} ; 
(w_j^{(2s; \, \epsilon)} )_{L})   \non \\  
& = &  
\prod_{\alpha=1}^{M} 
\sinh(\lambda_{\alpha}(\epsilon)- w_{j-M^{'}}^{(2s; \, \epsilon)} + \eta) 
\Bigg(  
\displaystyle{\frac {\sinh \eta} 
{\sinh(\lambda_{z(i)}(\epsilon) - w_{j-M^{'}}^{(2s; \, \epsilon)}) 
\sinh(\lambda_{z(i)}(\epsilon) - w_{j-M^{'}}^{(2s; \, \epsilon)}+ \eta)}}  
\non \\  
& & \quad - d(w_{j-M^{'}}^{(2s; \, \epsilon)}; 
\{ w_j^{(2s; \, \epsilon)} \}_L) 
\prod_{t=1}^{M} 
{\frac {\sinh(\lambda_t(\epsilon) - w_{j-M^{'}}^{(2s; \, \epsilon)} - \eta)} 
      {\sinh(\lambda_t(\epsilon) - w_{j-M^{'}}^{(2s; \, \epsilon)} + \eta)}} 
\non \\ 
& &  \qquad \times \, \displaystyle{\frac {\sinh \eta} 
{\sinh(\lambda_{z(i)}(\epsilon) - w_{j-M^{'}-1}^{(2s; \, \epsilon)}) 
\sinh(\lambda_{z(i)}(\epsilon) - w_{j-M^{'}-1}^{(2s; \, \epsilon)}+ \eta)}} \Bigg) \, . 
\label{eq:HAB}
\eea 
The second term of (\ref{eq:HAB}) for matrix element $(i, j)$ 
vanishes since we have 
$d(w_{j-M^{'}}^{(2s; \, \epsilon)}; \{ w_k^{(2s; \, \epsilon)} \}_L)= 0$. 
Here we remark that if we directly evaluate matrix $H^{(2s)}$ 
at $\epsilon=0$,   
the  second term of (\ref{eq:HAB}) for matrix element $(i, j)$ 
for $j \ne 2s(n-1) +1 +M^{'}$ with $n=1, 2, \ldots, m$, 
does not vanish, 
although it is deleted by subtracting column $j$ by column $j-1$, 
as discussed for the XXX case in Ref. \cite{Kitanine2001}. 
We thus have  
\bea 
& & 
\lim_{\epsilon \rightarrow 0} 
{\rm det}H^{(1)}((\lambda_{z(k)}(\epsilon))_M,  
(\lambda_{z(1)}(\epsilon), \ldots, \lambda_{z(M-2sm)}(\epsilon), 
w_1^{(2s; \, \epsilon)}, \ldots, w_{2sm}^{(2s; \, \epsilon)} ); 
(w_{j}^{(2s; \, \epsilon)})_{2s N_s}) \non \\  
& = & (-1)^{M-2sm} \, 
\prod_{b=1}^{M-2sm} \prod_{k=1}^{M} \sinh(\lambda_k-\lambda_{z_b} - \eta) 
\prod_{j=1}^{2sm} \prod_{k=1}^{M} \sinh(\lambda_k- w_j^{(2s)} + \eta) 
\non \\ 
& \times & 
{\rm det} \Psi^{(2s) '}((\lambda_{z(i}))_{M}, 
(\lambda_{z(i}))_{M^{'}} \# (w_j^{(2s)})_{2sm}; \{ \xi_p \}_{N_s}) 
\eea
where  $(i,j)$ element of 
$\Psi^{(2s)'}((\lambda_{z(k)})_{M}, 
(\lambda_{z(k)})_{M^{'}} \# (w_k^{(2s)})_{2sm}; \{ \xi_p \}_{N_s})$ 
for $i= 1, 2, \ldots, M$,  
are given by 
\bea 
& &  \Psi^{(2s) '}_{i, \, j} ((\lambda_{z(1)}, \ldots, \lambda_{z({M-2sm})}, 
\lambda_{c_1}, \ldots, \lambda_{c_{2sm}}), 
(\lambda_{z(1)}, \ldots, \lambda_{z({M-2sm})}, 
w_1^{(2s)}, \ldots, w_{2sm}^{(2s)}); ( \xi_p )_{N_s}) 
\non \\ 
&  & \quad = 
\left\{ 
\begin{array}{cc} 
\Phi^{(2s)'}_{z(i) , \, z(j)}( \left( \lambda_{k} \right)_M; \{\xi_p \}) 
& \mbox{for} \quad j \le M-2sm , \\  
 & \\
\displaystyle{\frac {\sinh \eta} 
{\sinh(\lambda_{z(i)} - w_{j-M^{'}}^{(2s)}) \sinh(\lambda_{z(i)} 
- w_{j-M^{'}}^{(2s)}+ \eta)}}  
& \mbox{for} \quad j > M-2sm \, .  \\ 
\end{array}  
\right.
\eea  
Therefore, for $i,j= 1, 2, \ldots, M-2sm$, we have 
\be 
\left(\left(\Phi^{(2s)'}
((\lambda_{z(k)})_M; \{ \xi_p \}) \right)^{-1} \, 
\Psi^{(2s)'}((\lambda_{z(k)})_M, (\lambda_{z(k)})_{M^{'}} \# 
(w_j^{(2s)})_{2sm}; \{ \xi_p \} ) \right)_{i, \, j} = \delta_{i, \, j} \, . 
\label{eq:diag}
\ee

In terms of sequence $( c_j )_{2sm}$, 
we express the dependence of matrix 
$(\Phi^{(2s)'})^{-1} \Psi^{(2s)'}$
on the sequence of Bethe roots $(\lambda_{z(i)})_M$ etc., 
briefly, as follows. 
\be 
(\Phi^{(2s)'})^{-1} \Psi^{(2s)'}(( c_j )_{2sm}, \{ \xi_p \})   
= (\Phi^{(2s)'})^{-1} \Psi^{(2s)'} 
\left( (\lambda_{z(j)} )_M, 
(\lambda_{z(j)})_{M^{'}} \# (w_j^{(2s)})_{2sm}; \{ \xi_p \} \right) \, .  
\ee
Recall $M^{'}=M-2sm$. 
Similarly, we define ${\Phi}^{(2s) \, '}((c_j)_{2sm}, \{ \xi_p \})$ 
and $\Psi^{(2s)'}((c_j)_{2sm}, \{ \xi_p \})$ by 
\bea 
{\Phi}^{(2s) \, '}_{i , \, j}((c_l)_{2sm}, \{ \xi_p \})  
& = & 
\Phi^{(2s) \, '}_{i , \, j}(\{ \lambda_{z(k)} \}_M; ( \xi_p )_{N_s}) 
= \Phi^{(2s) \, '}_{z(i) , \, z(j)}(\{ \lambda_{k} \}_M; 
( \xi_p )_{N_s}) \, , 
\non \\  
\Psi^{(2s)'}_{i, \, j}((c_k)_{2sm}, \{ \xi_p \}) & = & 
 \Psi^{(2s)'}_{i, \, j}((\lambda_{z(k}))_{M}, 
 (\lambda_{z(k}))_{M^{'}} \# (w_k^{(2s)})_{2sm}; ( \xi_p )_{N_s}) \, , 
\eea
for $i,j = 1, 2, \ldots, M$. 
Here we remark again that sequence  
$(z(i))_M$ is determined by sequence 
$(c_1, \ldots, c_{2sm})$ by the definition that  
$\{ z(1), z(2), \ldots, z(M^{'}) \} = \{1, 2, \ldots, M \} \setminus 
\{c_1, \ldots, \, c_{2sm}  \}$, 
and $z(1) < \cdots < z(M^{'})$ while $z(j+M^{'}) = c_{j}$ for 
$j=1, 2, \ldots, 2sm$.  

From property (\ref{eq:diag}) 
we define a $2sm$-by-$2sm$ matrix 
$\phi^{(2s; \, m)}((c_j)_{2sm}; \{ \xi_p \})$ by 
\be 
\phi^{(2s; \, m)}((c_j)_{2sm}; \{ \xi_p \})_{j, \, k} 
= 
\left( \left(\Phi^{(2s)'} \right)^{-1} 
 \Psi^{(2s)'} ((c_j)_{2sm}; \{ \xi_p \})
\right)_{j+M^{'}, \, k+M^{'}} \quad 
\mbox{for} \, \, j,k = 1, 2, \ldots, 2sm . 
\ee
Making use of (\ref{eq:ratio}), 
we  obtain the spin-$s$ EFP for the finite-size chain as follows. 
\begin{eqnarray}
& & \tau^{(2s)}_{N_s}(m)  =  
\frac{1}{\prod_{1 \leq j<r\leq 2s} \sinh^{m}(r-j)\eta}  
\times \frac{1}{\prod_{1\leq k<l\leq m}\prod^{2s}_{j=1}\prod^{2s}_{r=1}
\sinh(\xi_k-\xi_l+(r-j)\eta)}   \non  \\
& & \times  \sum_{c_1=1}^{M} \sum_{c_2=1; c_2 \ne c_1}^{M} 
\cdots \sum_{c_{2sm}=1; c_{2sm} \ne c_1, \ldots, c_{2sm-1}}^{M}
H^{(2s)}(\lambda_{c_1}, \cdots, \lambda_{c_{2sm}}; \{ \xi_p \}) 
 \,  {\rm det} \left( \phi^{(2s; \, m)}((c_j)_{2sm}; \{ \xi_p \}) \right)
\non \\ 
%
\label{eq:EFP-finite}
\end{eqnarray}
where $H^{(2s)}(\lambda_{c_1}, \ldots, \lambda_{c_{2sm}} ; \{ \xi_p \} )$
is given by 
\begin{eqnarray}
& & H^{(2s)}(\lambda_{c_1}, \ldots, \lambda_{c_{2sm}} ; \{ \xi_p \} ) 
\non \\ 
& & \quad = \frac{1}{\prod_{1 \leq l<k \leq 2sm}
\sinh(\lambda_{c_k}-\lambda_{c_l}+\eta)}
\times
\prod^{2sm}_{j=1}
\prod^{m}_{b=1}
\prod^{2s-1}_{\beta=1}
\sinh(\lambda_{c_j}-\xi_b+ \beta \eta)
\non \\
& & \quad \times  
\prod^{m}_{l=1} \prod^{2s}_{r_l=1}
\left(\prod^{l-1}_{b=1}
\sinh(\lambda_{c_{2s(l-1)+r_l}}-\xi_b+ 2s\eta)
\prod^m_{k=l+1}\sinh(\lambda_{c_{2s(l-1)+r_l}}-\xi_b)
\right) \, . 
\end{eqnarray}

%
\subsection{Diagonal elements of the spin-$s$  Gaudin matrix}

Let us define $K_n(\lambda)$ for $\eta= i \zeta$ 
with $0 < \zeta < \pi$ by 
\be 
K_n(\lambda) = {\frac 1 {2 \pi i}} \, {\frac {\sinh( n \eta)} 
{\sinh(\lambda - n \eta/2) \sinh(\lambda + n \eta/2)}} \, .  
\ee

\begin{lemma} 
For $0 < \zeta < \pi/2s$, we have 
\be 
K_1(\lambda + n \eta) =  
\int_{-\infty}^{\infty} K_2(\lambda - \mu + n \eta + i \epsilon) 
\rho(\mu) d \mu   \quad (n= 1,  2, \ldots, 2s-1) \, ,   
\label{eq:K1K2-p}
\ee
and 
\be 
K_1(\lambda - n \eta) =  
\int_{-\infty}^{\infty} K_2(\lambda - \mu - n \eta - i \epsilon) 
\rho(\mu) d \mu   \quad (n= 1,  2, \ldots, 2s-1).  
\label{eq:K1K2-m}
\ee
Here we recall $\eta = i \zeta$. 
\label{lem:Kint}
\end{lemma}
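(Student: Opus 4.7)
The plan is to verify both identities by passing to Fourier space in~$\lambda$. First I would record the standard transforms
\[
\hat{K}_n(k)=\int_{-\infty}^{\infty}K_n(\lambda)e^{-ik\lambda}\,d\lambda
=\frac{\sinh((\pi-n\zeta)k/2)}{\sinh(\pi k/2)}
\quad(0<n\zeta<\pi),
\qquad
\hat{\rho}(k)=\frac{1}{2\cosh(k\zeta/2)},
\]
obtained by contour integration. The hypothesis $0<\zeta<\pi/2s$ combined with $1\le n\le 2s-1$ guarantees $n\zeta<\pi$, and moreover that the strip $0<\mathrm{Im}\,\mu<n\zeta$ contains exactly one pole of $K_1$ (namely $\mu=\eta/2=i\zeta/2$) and exactly one pole of $K_2$ (namely $\mu=\eta=i\zeta$); this is the crucial bookkeeping input.

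Next I would compute the Fourier transform of the left-hand side of~(\ref{eq:K1K2-p}) by the substitution $\mu=\lambda+n\eta$, which moves the contour to $\mathrm{Im}\,\mu=n\zeta$, and then shift the contour back to $\mathbb{R}$. Only the pole $\mu=i\zeta/2$ is swept, with residue $1/(2\pi i)$, yielding
\[
\int_{-\infty}^{\infty}K_1(\lambda+n\eta)\,e^{-ik\lambda}\,d\lambda
=e^{-kn\zeta}\bigl[\,\hat{K}_1(k)-e^{k\zeta/2}\,\bigr].
\]
For the right-hand side I would interchange the $\lambda$- and $\mu$-integrations and perform the same contour shift in the inner $\lambda$-integral, now picking up the single $K_2$-pole at $\lambda'=\eta$ inside the strip $0<\mathrm{Im}\,\lambda'<n\zeta+\epsilon$. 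The $+i\epsilon$ prescription is essential precisely for $n=1$, where without it the pole would sit on the shifted contour; for $n\ge 2$ it is cosmetic. This gives
\[
\int_{-\infty}^{\infty}\!\left(\int_{-\infty}^{\infty} K_2(\lambda-\mu+n\eta+i\epsilon)\rho(\mu)\,d\mu\right)\!e^{-ik\lambda}\,d\lambda
=\hat{\rho}(k)\,e^{-kn\zeta}\bigl[\,\hat{K}_2(k)-e^{k\zeta}\,\bigr].
\]

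Equating the two Fourier transforms, clearing denominators and using the elementary identity $2\cosh(k\zeta/2)\,e^{k\zeta/2}=1+e^{k\zeta}$, the desired equality reduces to
\[
2\cosh(k\zeta/2)\,\hat{K}_1(k)=1+\hat{K}_2(k),
\]
which is a direct consequence of $2\cosh A\,\sinh B=\sinh(A+B)+\sinh(B-A)$ with $A=k\zeta/2$, $B=(\pi-\zeta)k/2$. (This is nothing but the Fourier-transformed Lieb equation $\rho+K_2*\rho=K_1$ for the spin-$1/2$ ground-state density, so what the lemma really says is that the shift $\lambda\mapsto\lambda+n\eta$ together with $+i\epsilon$ exactly absorbs the $\rho(\lambda)$ term.) Inverting the Fourier transform proves~(\ref{eq:K1K2-p}). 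The companion identity~(\ref{eq:K1K2-m}) then follows by the change of variables $\lambda\mapsto-\lambda$, $\mu\mapsto-\mu$ together with the evenness of $K_n$ and $\rho$. The main obstacle I expect is the residue bookkeeping: one must carefully check that $\zeta<\pi/(2s)$ rules out the secondary poles of $K_n$ at $\pm\eta+i\pi$ from entering the strip for every admissible $n$, and that the $+i\epsilon$ in the $n=1$ case is correctly interpreted as placing the $K_2$-pole in the upper half-plane (so the shifted contour crosses it) rather than in the lower half-plane.
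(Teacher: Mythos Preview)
Your Fourier-space argument is correct and carefully accounts for the pole structure; the reduction to $2\cosh(k\zeta/2)\hat{K}_1(k)=1+\hat{K}_2(k)$ is exactly the Fourier form of the Lieb equation, and your observation that $\zeta<\pi/2s$ keeps the secondary poles at $\pm\eta+i\pi$ out of the strip for all admissible $n$ is the right bookkeeping.

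The paper takes a different, more real-space route. It starts directly from the Lieb equation $\rho=K_1-K_2*\rho$, shifts $\lambda\mapsto\lambda+i\zeta-i\epsilon$ analytically, and then invokes the distributional identity $\frac{1}{\sinh(\lambda-\mu-i\epsilon)}=\frac{1}{\sinh(\lambda-\mu+i\epsilon)}+2\pi i\,\delta(\mu-\lambda)$ to flip the sign of the $i\epsilon$ in the $K_2$ kernel, picking up an extra $\rho(\lambda)$. Combined with the anti-periodicity $\rho(\lambda+i\zeta)=-\rho(\lambda)$, this yields the $n=1$ case immediately; the remaining $n$ are obtained by further analytic continuation in $\lambda$. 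Your approach and the paper's are essentially dual: the residue you pick up when shifting the Fourier contour across the $K_2$-pole at $i\zeta$ is the Fourier counterpart of the delta-function term in the paper's Plemelj-type identity. The paper's argument is shorter and makes the mechanism (Lieb equation plus one pole crossing) transparent at a glance; yours has the advantage of treating all $n$ uniformly and making the role of the hypothesis $\zeta<\pi/2s$ explicit as a pole-exclusion condition.
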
 
\begin{proof}
We first consider the case of positive $n$. 
Let us recall  the Lieb equation 
\be 
\rho(\lambda) =  
K_1(\lambda) - \int_{-\infty}^{\infty} 
K_2(\lambda - \mu) \rho(\mu) d \mu \, .  \label{eq:Lieb}
\ee
Shifting variable $\lambda$ analytically to 
$\lambda + i \zeta - i \epsilon$ in (\ref{eq:Lieb})
we have 
\be
\rho(\lambda + i \zeta - i \epsilon) 
= K_1(\lambda + i \zeta - i \epsilon) - \int_{-\infty}^{\infty} 
K_2(\lambda- \mu + i \zeta - i \epsilon) \rho(\mu) d \mu
\ee
Using  
\be 
{\frac 1 {\sinh(\lambda - \mu - i \epsilon)}} 
= 
{\frac 1 {\sinh(\lambda - \mu + i \epsilon)}} 
+ 2\pi i \delta(\mu-\lambda) 
\label{eq:HT}
\ee
we have  
\be 
\int_{-\infty}^{\infty} 
K_2(\lambda- \mu + i \zeta - i \epsilon) \rho(\mu) d \mu
= 
\int_{-\infty}^{\infty} 
K_2(\lambda- \mu + i \zeta + i \epsilon) \rho(\mu) d \mu
+ \rho(\lambda) \, . 
\ee
Combining $\rho(\lambda + i \zeta) = - \rho(\lambda)$ we obtain eq. 
(\ref{eq:K1K2-p}) for $n=1$. Making analytic continuation 
with respect to $\lambda$ we derive eq. (\ref{eq:K1K2-p}) 
for $n=2, 3, \ldots, 2s-1$. Similarly, we can show 
(\ref{eq:K1K2-m}). 
\end{proof} 

\begin{proposition} When $0 < \zeta < \pi/2s$,  
matrix elements $(A, A)$ of the spin-$s$  Gaudin matrix 
with $A=2s(a-1)+ \alpha$ are evaluated by 
\be 
{\frac 1 { 2 \pi i \, N_s}} \, \Phi^{(2s) \, '}_{A , \, A}(\{ \lambda_j \}_M) 
=  \rho_{\rm tot}(\mu_{a}) +O(1/N_s) \, . \label{eq:diagAA}
\ee 
Relations (\ref{eq:diagAA}) are expressed 
in terms of integrals as follows.  
\be 
 \rho_{\rm tot}(\mu_{a}) = 
{\frac 1 {N_s}} \sum_{p=1}^{N_s} 
K_{2s}(\mu_a - (\alpha-1/2) \eta - \xi_p + s \eta) - 
\sum_{\gamma=1}^{2s} \int^{\infty}_{- \infty} 
 K_2(\mu_a - \mu_c - (\alpha - \gamma) \eta + \epsilon^{(\alpha, \gamma)}) 
 \rho_{\rm tot}(\mu_c) d \mu_c \, , \label{eq:PHIAA} 
\ee
where 
$
\epsilon^{(\alpha, \gamma)} 
= \epsilon^{(\alpha)}_a - \epsilon^{(\gamma)}_c$.  
We recall that $C$ corresponds to $(c, \gamma)$ with $C=2s(c-1)+\gamma.$  
\label{prop:Phiaa}
\end{proposition}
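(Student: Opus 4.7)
The plan is to compute $\Phi^{(2s)'}_{A,A}/(2\pi i N_s)$ directly in the thermodynamic limit, then reduce the result to $\rho_{\rm tot}(\mu_a)$ via the Lieb equation and a partial-fraction decomposition of $K_{2s}$.

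First I would rewrite each summand in the diagonal Gaudin matrix element in terms of the kernels $K_n$. Using $\tfrac{1}{2\pi i}\sinh(2s\eta)/[\sinh(\lambda_A-\xi_p)\sinh(\lambda_A-\xi_p+2s\eta)] = K_{2s}(\lambda_A - \xi_p + s\eta)$ and the analogous identification for $K_2$, and substituting $\lambda_A = \mu_a - (\alpha-1/2)\eta + \epsilon_a^{(\alpha)}$, one obtains
$$\frac{1}{2\pi i N_s}\Phi^{(2s)'}_{A,A} = \frac{1}{N_s}\sum_{p=1}^{N_s} K_{2s}(\mu_a-(\alpha-1/2)\eta-\xi_p+s\eta) - \frac{1}{N_s}\sum_{C\ne A} K_2(\lambda_A-\lambda_C).$$

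Next I would reorganize the second sum by string index, writing $C = 2s(c-1)+\gamma$. The $2s-1$ terms with $c=a$, $\gamma\ne\alpha$ contribute at $O(1/N_s)$. For $c\ne a$, applying the thermodynamic replacement $\frac{1}{N_s}\sum_{c=1}^{N_s/2}(\cdot)\to\int\rho_{\rm tot}(\mu_c)(\cdot)\,d\mu_c$ yields the RHS of (\ref{eq:PHIAA}), where the imaginary shifts $\epsilon^{(\alpha,\gamma)} = i(\delta_a^{(\alpha)}-\delta_c^{(\gamma)})$ encode the $\pm i\epsilon$ prescription dictated by the sign of $\delta_a^{(\alpha)}-\delta_c^{(\gamma)}$ under the ordering of string deviations from Section~4.1. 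This step establishes (\ref{eq:PHIAA}).

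To identify the resulting expression with $\rho_{\rm tot}(\mu_a)$ and hence prove (\ref{eq:diagAA}), I would use the telescoping identity $\coth(x)-\coth(x+\eta) = \sinh\eta/(\sinh x\sinh(x+\eta))$ to decompose the driving kernel as
$$K_{2s}(\lambda) = \sum_{j=0}^{2s-1} K_1\!\bigl(\lambda-(s-j-\tfrac{1}{2})\eta\bigr).$$
Each resulting $K_1$, with argument effectively shifted by $n\eta$ for $n=-(\alpha-1),\ldots,2s-\alpha$, is then converted via Lemma~\ref{lem:Kint} into a $K_2$-convolution against $\rho_{\rm tot}$ (for $n\ne 0$) or via the Lieb equation $\rho = K_1 - K_2*\rho$ (for $n=0$). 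The $n=0$ contribution alone produces the desired $\rho_{\rm tot}(\mu_a)$, while the $n\ne 0$ convolutions must cancel term-by-term against the $\gamma\ne\alpha$ integrals coming from the second sum in step~2.

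The main obstacle will be verifying the sign-matching of the $i\epsilon$ prescriptions in the cancellation. The RHS of (\ref{eq:PHIAA}) depends on $\alpha$ through the shifts $(\alpha-\gamma)\eta$, yet the LHS $\rho_{\rm tot}(\mu_a)$ does not, so the $2s$ equations (one per $\alpha$) must all collapse to the same limit. Achieving this requires the direction $+i\epsilon$ or $-i\epsilon$ predicted by Lemma~\ref{lem:Kint} for each shift $n\eta$ to match the sign of $\delta_a^{(\alpha)}-\delta_c^{(\gamma)}$, which in turn relies on the ordering $\delta_a^{(\alpha)}>\delta_a^{(\alpha+1)}$ and the convention that the integration contour avoids the poles of $K_2$ at $\mu=\mu_a\pm n\eta$ on the appropriate side; the condition $0<\zeta<\pi/2s$ is exactly what keeps all $2s-1$ shifted kernels within the analyticity strip where Lemma~\ref{lem:Kint} applies.
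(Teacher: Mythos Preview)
Your proposal is correct and follows essentially the same route as the paper: express the diagonal Gaudin element via $K_{2s}$ and $K_2$, pass to the integral form (\ref{eq:PHIAA}) in the thermodynamic limit, decompose $K_{2s}$ into a sum of shifted $K_1$'s (your telescoping identity is exactly the paper's relation $K_{2s}(\lambda+(2s-1)\eta/2)=\sum_{n=0}^{2s-1}K_1(\lambda+n\eta)$), and then use Lemma~\ref{lem:Kint} to kill the $\gamma\ne\alpha$ pairs while the Lieb equation handles $\gamma=\alpha$. Your explicit discussion of the $i\epsilon$ sign-matching and its reliance on the ordering $\delta_a^{(\alpha)}>\delta_a^{(\alpha+1)}$ is more careful than the paper's proof, which simply asserts that the off-diagonal summands vanish by Lemma~\ref{lem:Kint} without spelling out why the deviation signs furnish the correct contour prescription.
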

\begin{proof}
Let us first show  
\be 
K_{2s} (\mu_a - (\alpha - 1/2) \eta + s \eta) 
 - \sum_{\gamma=1}^{2s} \int^{\infty}_{- \infty} 
 K_2(\mu_a - \mu_c + (\gamma - \alpha) \eta 
  + \epsilon^{(\alpha, \gamma)}) \rho(\mu_c) d \mu_c = \rho(\mu_a) 
\label{eq:homo}
\ee
for $\alpha= 1, 2 \ldots, 2s$. 
Making use of the following relations   
\be 
K_n(\lambda) = {\frac 1 {2\pi i}} \left( 
{\frac {\cosh(\lambda- n \eta/2)} {\sinh(\lambda- n \eta/2)}} 
- {\frac {\cosh(\lambda + n \eta/2)} {\sinh(\lambda +  n \eta/2)}} 
 \right)
\ee
we have 
\be 
K_{2s}(\lambda + (2s-1)\eta/2) = 
\sum_{n=0}^{2s-1} K_1(\lambda + n \eta)  \, . 
\label{eq:Ksum}
\ee
We thus obtain (\ref{eq:homo}) as follows.  
\bea 
& & K_{2s} (\mu_a - (\alpha - 1/2) \eta + s \eta) 
 - \sum_{\gamma=1}^{2s} \int^{\infty}_{- \infty} 
 K_2(\mu_a - \mu_c + (\gamma - \alpha) \eta 
  + \epsilon^{(\alpha, \gamma)}) \rho(\mu_c) d \mu_c \non \\ 
&=& \sum_{\gamma=1}^{2s} 
\left( K_1(\mu_a + (\gamma- \alpha) \eta) 
- \int_{-\infty}^{\infty} K_2(\mu_a - \mu_c + (\gamma - \alpha) \eta 
  + \epsilon^{(\alpha, \gamma)}) \rho(\mu_c) d \mu_c
\right)  
\non \\ 
&=&  
K_1(\mu_a) 
- \int_{-\infty}^{\infty} K_2(\mu_a - \mu_c) 
\rho(\mu_c) d \mu_c \non \\ 
& = & \rho(\mu_a) \, . \label{eq:drv}
\eea
Here, in the second line of (\ref{eq:drv}),  
the summands for $\gamma \ne \alpha$ vanish due to lemma \ref{lem:Kint}.    
We then apply  the Lieb equation (\ref{eq:Lieb}) to show 
the last line of (\ref{eq:drv}). 
We obtain (\ref{eq:PHIAA}) from (\ref{eq:homo}).  
\end{proof}

\begin{corollary} Let us take a sequence of integers, 
 $c_1, c_2, \ldots, c_{2sm}$,  
which satisfy $1 \le c_j \le M$ for $j=1, 2, \ldots, 2sm$, 
and determine a sequence $(z(n))_M$  by the conditions 
that  $\{z(1), z(2), \ldots, z(M^{'}) \} = \{1, 2, \ldots, M \} \setminus 
\{ {c_1}, \ldots, {c_{2sm}}  \}$, 
with $z(1) < \cdots < z(M^{'})$ and  $z(j+M^{'}) = c_{j}$ for 
$j=1, 2, \ldots, 2sm$.  Here we recall $M^{'}=M-2sm$.   
In the region: $0 < \zeta < \pi/2s$,  
diagonal elements $(j, j)$ of the spin-$s$  Gaudin matrix 
$\Phi^{(2s)'}((c_k)_{2sm})$ are evaluated as  
\be 
{\frac 1 {2 \pi i \, N_s}} \, \Phi^{(2s) \, '}_{j , \, j}((c_k)_{2sm}) 
=  \rho_{\rm tot}(\mu_{a}) + O(1/N_s) \, ,  
\quad \mbox{for} \, \, j = 1, 2, \ldots, M.  
\label{eq:Phijj}
\ee
Here integer $a$ satisfies $z(j)=2s(a-1)+ \alpha$ for an integer $\alpha$ 
with $1 \le \alpha \le 2s$. 
\label{cor:Phiaa}
\end{corollary}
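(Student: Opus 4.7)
The plan is to recognize that this corollary is essentially a pure reindexing of Proposition~\ref{prop:Phiaa}. By the definition given just above the corollary, namely
\[
\Phi^{(2s)'}_{i,j}((c_l)_{2sm}, \{\xi_p\}) = \Phi^{(2s)'}_{z(i),z(j)}(\{\lambda_k\}_M; (\xi_p)_{N_s}),
\]
the diagonal entry $(j,j)$ of the restricted Gaudin matrix coincides exactly with the diagonal entry $(z(j), z(j))$ of the full Gaudin matrix $\Phi^{(2s)'}(\{\lambda_k\}_M; \{\xi_p\})$. Thus the statement does not require any new analytic input beyond Proposition~\ref{prop:Phiaa}.

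Concretely, I would proceed as follows. First, write $z(j) = 2s(a-1) + \alpha$ with $1 \le \alpha \le 2s$, which uniquely determines the integers $a$ and $\alpha$ from $z(j)$ (and hence from the sequence $(c_k)_{2sm}$ and the index $j$). Second, substitute this into the definition of $\Phi^{(2s)'}((c_k)_{2sm})$ to obtain
\[
\frac{1}{2\pi i\, N_s}\,\Phi^{(2s)'}_{j,j}((c_k)_{2sm})
= \frac{1}{2\pi i\, N_s}\,\Phi^{(2s)'}_{z(j), z(j)}(\{\lambda_k\}_M; \{\xi_p\}).
\]
Third, apply Proposition~\ref{prop:Phiaa} with the index $A$ taken to be $z(j) = 2s(a-1) + \alpha$. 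The proposition gives directly that the right-hand side equals $\rho_{\rm tot}(\mu_a) + O(1/N_s)$, which is exactly (\ref{eq:Phijj}).

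There is no genuine obstacle: the whole content lies in Proposition~\ref{prop:Phiaa}, whose proof uses Lemma~\ref{lem:Kint} and the Lieb equation to show that off-diagonal contributions from the $2s$-string structure cancel. The only point worth emphasizing in the write-up is that the statement is uniform in the choice of sequence $(c_k)_{2sm}$, because Proposition~\ref{prop:Phiaa} applies at \emph{every} index $A \in \{1,\ldots,M\}$ of the full Gaudin matrix; hence restricting to the subset of rows/columns indexed by $z(j)$ preserves the diagonal formula verbatim. Once this uniformity is noted, the corollary follows in one line.
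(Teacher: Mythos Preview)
Your proposal is correct and matches the paper's approach: the corollary is stated without proof immediately after Proposition~\ref{prop:Phiaa}, precisely because it is the one-line reindexing you describe, using the definition $\Phi^{(2s)'}_{i,j}((c_l)_{2sm})=\Phi^{(2s)'}_{z(i),z(j)}(\{\lambda_k\}_M)$ together with (\ref{eq:diagAA}) at $A=z(j)$.
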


\subsection{Integral equations}

We calculate matrix elements of 
$\left( (\Phi^{(2s)'})^{-1} \, \Psi^{(2s)'} \right)((c_j)_{2sm})$ 
through the spin-$s$ Gaudin matrix. 
For $j, k= 1, 2, \ldots, M$,  
we have 
\bea 
\left( \Psi^{(2s)'}((c_j)_{2sm})  \right)_{j , \, k}
& = & \left( \Phi^{(2s)'} ( \Phi^{2s'})^{-1} \Psi^{(2s)'} \right)_{j , \, k}
\non \\ 
& = & \sum_{t=1}^{M} \left( \Phi^{(2s)'}((c_j)_{2sm}) \right)_{j ,  \, t}
\left( (\Phi^{(2s)'})^{-1}\Psi^{(2s)'} ((c_j)_{2sm})
\right)_{t , \, k} \, . \label{eq:decomp}
\eea
We remark that matrix elements $(A, B)$ of $\Psi^{(2s)'}((c_l)_{2sm})$ 
with $A=j+M^{'}$ and $B=k+M^{'}$ 
are expressed in terms of $K_1(\lambda)$ as   
\be 
\Psi^{(2s)'}_{j+M^{'}, \, k+M^{'}}((c_j)_{2sm})/ 2 \pi i  
= K_1(\lambda_{c_{j}}-w_{k}^{(2s)} + \eta/2) \quad 
\mbox{for} \quad 
j, k= 1, 2, \ldots, 2sm. 
\ee 

Suppose that we have  a sequence $(z(n))_M$ for   
 a given sequence $(c_i)_{2sm}$  
satisfying $1 \le c_i \le M$ for $i=1, 2, \ldots, 2sm$.   
Let us take a pair of integers $j, k$ with $1 \le j, k \le M$.    
We denote $z(j)$ by $A$, and we introduce $a$ and $\alpha$ by      
$A=2s(a-1)+\alpha$ with  $1 \le a \le N_s/2$ and $1 \le \alpha \le 2s$. 
Applying proposition \ref{prop:Phiaa} 
and corollary \ref{cor:Phiaa} to (\ref{eq:decomp})  
we have 
\begin{multline}
\hspace{1.0cm} \sum_{t=1}^{M} 
\Phi^{(2s)'}_{j , \, t}((c_l)_{2sm})/2\pi i \, 
\left( (\Phi^{(2s)'})^{-1} \Psi^{(2s)'} ((c_l)_{2sm}) \right)_{t , \, k} \\ 
= \sum_{t=1}^{M} \Biggl\{ \left(\sum^{N_s}_{p=1}
{\frac 1 {2 \pi i}} \,  
\frac{\sinh(2s \eta)}{\sinh(\lambda_A-\xi_p)
\sinh(\lambda_A - \xi_p + 2s \eta)}
-\sum^{M}_{D=1} K_2(\lambda_A-\lambda_D) \right) \delta_{A , \, z(t)} \\ 
\hspace{6cm}
    +K_2(\lambda_A -\lambda_{z(t)}) \Biggr\} 
\left( (\Phi^{(2s)'})^{-1}\Psi^{(2s)'}((c_l)_{2sm})  \right)_{t , \, k}        
\\ 
= N_s \rho_{\rm tot}(\mu_a) 
\left( (\Phi^{(2s)'})^{-1} \Psi^{(2s)'} ((c_j)_{2sm}) \right)_{j , \, k}  
+\sum_{t=1}^{M} K_2(\lambda_A-\lambda_{z(t)}) \, 
\left( (\Phi^{(2s)'})^{-1} \Psi^{(2s)'}((c_l)_{2sm}) \right)_{t , \, k}  
+ O(1/N_s) \, . 
\label{eq:PPP}
\end{multline}

Let us discuss the order of magnitude of 
the correction term in (\ref{eq:PPP}). 
It follows from (\ref{eq:Phijj}) that 
if the density of string centers $\rho(\mu_{a})$ is $O(1)$ 
in the large $N_s$ limit, 
then the diagonal element $(A,A)$ of $\Phi^{(2s) '}$ is 
$O(N_s)$.  
We thus suggest that the matrix elements 
of $\phi^{(2s; \, m)}((c_l)_{2sm})$ should be at most 
of the order of $1/N_s$,  
and hence the correction term in (\ref{eq:PPP}) 
should be at most $O(1/N_s)$.

Let us now define $\varphi_{A , \, B}(\{\xi_p \})$ 
by the following relations  for $j, k = 1, 2, \ldots, M$:  
\be 
\varphi_{z(j) , \, k} = N_s \rho_{\rm tot}(\mu_a) \, 
\left( ( \Phi^{(2s)'})^{-1} 
\Psi^{(2s)'}((c_l)_{2sm}) \right)_{j , \, k } \, ,   
\ee
where integer $a$  is given by $a= [(z(j)-1)/2s]+1$. 
In terms of function $a(z)=[(z-1)/2s]+1$ we have    
\bea 
& & \sum_{t=1}^{M} K_2(\lambda_A-\lambda_{z(t)}) \, 
\left( (\Phi^{(2s)'})^{-1} \Psi^{(2s)'}((c_l)_{2sm}) \right)_{t , \, k}  
\non \\ 
& = & \sum_{t=1}^{M} K_2(\lambda_A-\lambda_{z(t)}) 
{\frac {\varphi_{z(t), \, k}} {N_s \, \rho_{\rm tot}(\mu_{a(z(t))})} } 
\non \\ 
& = & \sum_{\gamma=1}^{2s} {\frac 1 {N_s}} \sum_{c=1}^{N_s/2} 
(\rho_{\rm tot}(\mu_{c}))^{-1}  \, 
K_2(\lambda_A-\lambda_{(c, \gamma)}) \, 
\varphi_{2s(c-1)+ \gamma, \, k} \, . 
\eea
Here we have replaced the sum over $t$ by the sum over 
$c$ and $\gamma$ where $z(t)=C=2s(c-1)+\gamma$ with $1 \le \gamma \le 2s$. 
Expressing $z(j)$ and $k$ by $A$ and $B$, respectively, 
we have the following equations:  
\be 
\varphi_{A, \,  B}(\{ \xi_p \}) + 
\sum_{\gamma=1}^{2s} {\frac 1 {N_s}} \sum_{c=1}^{N_s/2} 
(\rho_{\rm tot}(\mu_{c}))^{-1}  \, 
K_2(\lambda_A-\lambda_{(c, \gamma)}) 
\varphi_{C, \, B}(\{ \xi_p \} )  \,    
= {\frac 1 {2 \pi i}} \,  \Psi^{(2s)'}_{A , \, B}((c_l)_{2sm})  
+ O(1/N_s) \, . 
\label{eq:sumAB}
\ee
Let us introduce $b$ and $\beta$ by $k=2s(b-1) + \beta + M^{'}$ with  
$1 \le \beta \le 2s$ and $1 \le b \le N_s/2$. 
%
In terms of string center $\mu_a$ we express (or approximate) 
$\varphi_{z(j) , \, k}(\{\xi_p \})$ by a continuous function 
of $\mu_a$ and $\xi_b$, as follows.   
\be 
\varphi_{z(j), \, k}(\{ \xi_p \}) 
= \varphi_{\alpha}^{(\beta)}(\mu_a, \xi_b) + O(1/N_s) \, .   
\ee   
By taking the large-$N_s$ limit,  
the discrete equations (\ref{eq:sumAB}) are now expressed  as follows.
\be 
\varphi_{\alpha}^{(\beta)}(\mu_a, \xi_b) 
+ \sum_{\gamma=1}^{2s} \int_{-\infty}^{\infty} 
K_2(\mu_a- \mu_c + (\gamma - \alpha) \eta + \epsilon_{AC} )
 \varphi_{\gamma}^{(\beta)}(\mu_c, \xi_b) d \mu_c 
= K_1(\lambda_{c_{j-M^{'}}}- w_k^{(2s)} + \eta/2) \, . 
\label{eq:int}
\ee
Here $\epsilon_{AC} = 
\epsilon_{a}^{(\alpha)} - \epsilon_{c}^{(\gamma)}$. 
We recall that for $j > M^{'}$ we have set $z(j)=c_{j-M^{'}}$.

\begin{lemma}
In the region $0 < \zeta < \pi/2s$, 
 a solution to the integral equations (\ref{eq:int}) 
for integers $A=2s(a-1)+ \alpha$ (i.e. $(a, \alpha)$)  
and $B=2s(b-1)+ \beta+M^{'}$ with 
$1 \le \alpha, \beta \le 2s$ and $1 \le b \le m$ is given by  
\be 
\varphi_{A , \, B} = \varphi_{\alpha}^{(\beta)}(\mu_a, \xi_b)  = 
\, \, \rho(\mu_a - \xi_b) \, \delta_{\alpha, \, \beta} . 
\ee
\label{lem:IntEq}
\end{lemma}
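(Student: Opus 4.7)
The plan is to verify the ansatz by direct substitution, reducing the integral equation in each of two cases ($\alpha=\beta$ and $\alpha\neq\beta$) to identities that have already been established earlier in the paper.

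First, I would rewrite the right-hand side of (\ref{eq:int}) explicitly. For $j>M^{'}$ the index $z(j)$ corresponds to the ground-state rapidity $\lambda_A = \mu_a - (\alpha-1/2)\eta + \epsilon_a^{(\alpha)}$, and for $k=2s(b-1)+\beta+M^{'}$ the parameter appearing in $\Psi^{(2s)'}$ is $w_{2s(b-1)+\beta}^{(2s)} = \xi_b - (\beta-1)\eta$. Substituting into $K_1(\lambda_{c_{j-M'}} - w_k^{(2s)} + \eta/2)$ gives
\[
\text{RHS} = K_1\bigl(\mu_a-\xi_b + (\beta-\alpha)\eta + \epsilon_a^{(\alpha)}\bigr).
\]
Now plug the ansatz $\varphi_{\gamma}^{(\beta)}(\mu_c,\xi_b) = \rho(\mu_c-\xi_b)\delta_{\gamma,\beta}$ into the left-hand side. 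The sum over $\gamma$ collapses to the single term $\gamma=\beta$, and after the translation $\mu_c \mapsto \mu_c + \xi_b$ (using that $\rho$ depends only on the difference), setting $\lambda := \mu_a-\xi_b$ reduces (\ref{eq:int}) to
\[
\rho(\lambda)\,\delta_{\alpha,\beta} + \int_{-\infty}^{\infty} K_2\bigl(\lambda-\mu + (\beta-\alpha)\eta + \epsilon_a^{(\alpha)}-\epsilon_c^{(\beta)}\bigr)\,\rho(\mu)\,d\mu = K_1\bigl(\lambda+(\beta-\alpha)\eta+\epsilon_a^{(\alpha)}\bigr).
\]

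Next I would treat the two cases separately. When $\alpha=\beta$, the equation becomes $\rho(\lambda) + \int K_2(\lambda-\mu)\rho(\mu)\,d\mu = K_1(\lambda)$ (up to vanishing string-deviation corrections absorbed in $O(1/N_s)$), which is exactly the Lieb equation (\ref{eq:Lieb}). When $\alpha\neq\beta$, the Kronecker delta drops out and the equation to check is
\[
\int_{-\infty}^{\infty} K_2\bigl(\lambda-\mu+(\beta-\alpha)\eta+\epsilon_a^{(\alpha)}-\epsilon_c^{(\beta)}\bigr)\,\rho(\mu)\,d\mu = K_1\bigl(\lambda+(\beta-\alpha)\eta+\epsilon_a^{(\alpha)}\bigr),
\]
which is precisely the content of Lemma \ref{lem:Kint} with $n=|\beta-\alpha|\in\{1,2,\ldots,2s-1\}$, provided the $i\epsilon$ prescription is chosen correctly.

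The one delicate point is the sign of the $i\epsilon$ prescription. Here I would use the structural assumption on string deviations: $\epsilon_a^{(\alpha)} = i\delta_a^{(\alpha)}$ with $\delta_a^{(\alpha)}$ strictly decreasing in $\alpha$. For $\beta>\alpha$, the imaginary part $\delta_a^{(\alpha)}-\delta_c^{(\beta)}$ is positive in the regime where the singularity is approached, so we land in the $+i\epsilon$ case of (\ref{eq:K1K2-p}); for $\beta<\alpha$, rewriting $(\beta-\alpha)\eta=-n\eta$ with $n>0$, the imaginary part is negative, matching (\ref{eq:K1K2-m}). This sign bookkeeping, rather than any serious computation, is the main obstacle: one must confirm that for every pairing of string components $(\alpha,\beta)$ the limit $\epsilon\to 0$ in the integral equation lands on the same side of the real axis as in the two variants of Lemma \ref{lem:Kint}. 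Once this is checked, both cases yield the identity RHS$=$LHS, completing the verification.
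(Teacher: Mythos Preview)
Your proposal is correct and follows essentially the same approach as the paper: substitute the ansatz, reduce the $\alpha=\beta$ case to the Lieb equation (\ref{eq:Lieb}), and treat the $\alpha\neq\beta$ case via the $K_1$--$K_2$ identity with the sign of the $i\epsilon$ prescription fixed by the monotonicity of the string deviations. The only cosmetic difference is that you invoke Lemma~\ref{lem:Kint} directly for the off-diagonal case, whereas the paper's proof re-runs the analytic shift and the distributional identity (\ref{eq:HT}) inline to reduce back to the Lieb equation --- but that is precisely the content of Lemma~\ref{lem:Kint}, so the arguments coincide.
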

\begin{proof} 
(i) In $(\alpha, \alpha)$ case, i.e. when integers 
 $A=2s(a-1)+ \alpha$  and $B=2s(b-1) + \alpha$ 
 correspond to indices $(a, \alpha)$ and $(b, \alpha)$, 
respectively,    assuming that  
$\varphi_{\gamma}^{(\alpha)}(\mu_c, \xi_b) =0$ for $\gamma \ne \alpha$, 
we reduce integral equations (\ref{eq:int}) 
to the Lieb equation (\ref{eq:Lieb}). 
Therefore, we have 
$\varphi_{\alpha}^{(\alpha)}(\mu_a, \xi_b)  
= \rho(\mu_a - \xi_b)$. 
\par \noindent 
(ii) In $(\alpha, \beta)$ case, i.e. when $A=(a, \alpha)$ 
and $B=(b, \beta)$ with $\beta \ne \alpha$,   
assuming $\varphi_{\gamma}^{(\beta)}(\mu_c, \xi_b) =0$ 
for $\gamma \ne \beta$, we have from (\ref{eq:int})
\be 
\int_{-\infty}^{\infty} 
K_2(\mu_a- \mu_c + (\beta - \alpha) \eta + \epsilon_{AB} )
 \varphi_{\beta}^{(\beta)}(\mu_c, \xi_b) d \mu_c 
= {\frac 1 {2 \pi i}} \,  \Psi^{(2s)}_{A , \, B} \, . 
\ee
For $\alpha < \beta$, we have $\epsilon_{AB} = i \epsilon$. 
Shifting $\mu_a$ analytically such as 
$\mu_a \rightarrow \mu_a - (\beta - \alpha -1) \eta$,  we have 
\be 
\int_{-\infty}^{\infty} 
K_2(\mu_a- \mu_c +  \eta + i \epsilon )
 \varphi_{\beta}^{(\beta)}(\mu_c, \xi_b) d \mu_c 
= {\frac 1 {2 \pi i}} \,  
{\frac {\sinh \eta} {\sinh(\mu_a + \eta - \xi_b - \eta/2) 
\sinh(\mu_a + \eta - \xi_b + \eta/2)}}  \, . 
\ee
Making use of (\ref{eq:HT}) we reduce it essentially 
to the Lieb equation. We thus obtain  
$\varphi_{\beta}^{(\beta)}(\mu_a, \xi_b) = \rho(\mu_a - \xi_b)$.  
For $\alpha > \beta$, we have $\epsilon_{AB} = - i \epsilon$, and 
show it similarly, shifting $\mu_a$ analytically as 
$\mu_a \rightarrow \mu_a - (\alpha - \beta +1) \eta$.
\end{proof}

\begin{proposition}
Let us take a set of integers, $c_1, \ldots, c_{2sm}$, 
satisfying $0 < c_j \le M$ for $j=1, 2, \ldots, 2sm$.  
Suppose that the number of $c_j$ which satisfy 
$c_j - 2s [(c_j-1)/2s] = \alpha$ 
is given by $m$ for each integer $\alpha$ satisfying 
$1 \le \alpha \le 2s$. 
Then, when $0 < \zeta < \pi/2s$, 
the solution to integral equations (\ref{eq:int}) 
for $A= c_j$ with $j=1, 2, \ldots, 2sm$ and  
for $B = B^{'} + M^{'}$ where $B^{'} = 1, 2, \ldots, 2sm$,  
is given by  
\be 
\varphi_{\alpha}^{(\beta)}(\mu_{a_j}, \xi_b)  = 
\, \, \rho(\mu_{a_j} - \xi_b) \, \delta_{\alpha, \, \beta} \, ,   
\label{eq:sol-cjB}
\ee
where $a_j = [(c_j-1)/2s] +1$, $\alpha=c_j - 2s [(c_j-1)/2s]$   
 and $B^{'}=2s(b-1) + \beta$ with 
$1 \le \beta \le 2s$. 
\end{proposition}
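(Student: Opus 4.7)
The plan is to deduce this proposition directly from Lemma \ref{lem:IntEq} by specialization of the row index. Lemma \ref{lem:IntEq} already proves that the ansatz $\varphi_\alpha^{(\beta)}(\mu_a, \xi_b) = \rho(\mu_a - \xi_b)\,\delta_{\alpha,\beta}$ solves (\ref{eq:int}) for \emph{any} admissible pair $(A, B)$. The present proposition restricts $A$ to lie in the prescribed set $\{c_1, \ldots, c_{2sm}\}$ and $B$ to the block $\{M'+1, \ldots, M'+2sm\}$; the content is essentially a bookkeeping statement that the same solution persists under this restriction, together with a counting check that the row indices and the column indices balance.

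First I would translate the selected indices into string-center coordinates: for each $c_j$ set $a_j = [(c_j-1)/2s] + 1$ and $\alpha_j = c_j - 2s(a_j-1) \in \{1, \ldots, 2s\}$, so that $\lambda_{c_j} = \mu_{a_j} - (\alpha_j - 1/2)\eta + \epsilon_{a_j}^{(\alpha_j)}$. For $B = B' + M'$ write $B' = 2s(b-1) + \beta$ with $b \in \{1, \ldots, m\}$ and $\beta \in \{1, \ldots, 2s\}$. Substituting the proposed ansatz into (\ref{eq:int}) at $A = c_j$ causes the Kronecker delta $\delta_{\gamma,\beta}$ to collapse the $\gamma$-sum to the single term $\gamma = \beta$, and the resulting equation is exactly the one verified in Lemma \ref{lem:IntEq}: for $\alpha_j = \beta$ it is the Lieb equation (\ref{eq:Lieb}) after the trivial shift $\mu_c \to \mu_c + \xi_b$, while for $\alpha_j \neq \beta$ it reduces to the Lieb equation by the same analytic-continuation argument based on shifting $\mu_{a_j}$ by an integer multiple of $\eta$, using the distributional identity (\ref{eq:HT}) and $\rho(\lambda + i\zeta) = -\rho(\lambda)$ that was used in the lemma. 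Thus no new analytic work is required.

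The role of the distribution hypothesis on $\{c_j\}$ is to make the row and column labels match in a balanced way: since $\delta_{\alpha_j, \beta}$ is nonzero exactly when $\alpha_j = \beta$, the hypothesis that each residue class $\alpha \in \{1, \ldots, 2s\}$ is attained by precisely $m$ of the $c_j$ ensures that each column residue $\beta$ is struck by exactly $m$ nonzero rows, in agreement with the $m$ available values of $b$. This is what makes the resulting $2sm \times 2sm$ matrix $\phi^{(2s;m)}((c_l)_{2sm})$ have the block structure that feeds into the EFP formula (\ref{eq:EFP-finite}); in particular it is the reason the hypothesis cannot be dropped.

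The principal obstacle is essentially notational rather than analytical: keeping straight the translation between the discrete labels $A, B$ and the string-level labels $(a, \alpha), (b, \beta)$, and verifying that the $O(1/N_s)$ error bound derived after (\ref{eq:PPP}) is uniform as $A$ ranges over the finite set $\{c_j\}$ — so that the continuum integral equations (\ref{eq:int}) do faithfully capture the discrete relations (\ref{eq:sumAB}) on this restricted block. Once Lemma \ref{lem:IntEq} is granted, the proposition then follows as a direct corollary by case-by-case application to each $(c_j, B')$ pair.
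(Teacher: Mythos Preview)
Your proposal correctly recognizes that Lemma \ref{lem:IntEq} already supplies the verification that the ansatz $\varphi_\alpha^{(\beta)}(\mu_a,\xi_b)=\rho(\mu_a-\xi_b)\,\delta_{\alpha,\beta}$ satisfies the integral equations (\ref{eq:int}), and that specializing to $A\in\{c_1,\ldots,c_{2sm}\}$ requires no new analytic input. That part is fine and matches the paper exactly.

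However, there is a genuine gap: the proposition asserts that this is \emph{the} solution, not merely \emph{a} solution, and your argument never addresses uniqueness. Lemma \ref{lem:IntEq} is explicitly stated as giving ``a solution''; to upgrade this to the proposition you must show that the system (\ref{eq:int}) admits no other solution. Your discussion of the distribution hypothesis on $\{c_j\}$ explains why the resulting matrix has a nice block structure, but that is downstream of the proposition and does not bear on uniqueness of $\varphi$.

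The paper fills this gap by passing to the Fourier transform: the integral equations (\ref{eq:int}) become the $2s\times 2s$ linear system ${\cal M}^{(2s)}_{K_2}\,{\cal M}^{(2s)}_{\hat\varphi}={\cal M}^{(2s)}_{K_1}$ (eq.~(\ref{eq:MatrixEq})), and one computes $\det{\cal M}^{(2s)}_{K_2}=(2\cosh(\zeta\omega/2))^{2s}\,\widehat{K}_{2s}(\omega)$, which is generically nonzero for $0<\zeta<\pi/2s$. Hence ${\cal M}^{(2s)}_{K_2}$ is invertible and the solution is unique. You should add this step (or an equivalent uniqueness argument) to close the proof.
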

\begin{proof}
It follows from lemma \ref{lem:IntEq} that   
$\varphi_{c_j , \, B}$ of (\ref{eq:sol-cjB}) gives a solution to the 
integral equations. Taking the Fourier transform of (\ref{eq:int}),    
we show in \S 5.5 that the set of integral equations 
  (\ref{eq:int}) for $A= c_j$ for $j=1, 2, \ldots, 2sm$ and  
$B^{'}=1, 2, \ldots, 2sm$ has a unique solution. 
Thus we obtain the unique solution (\ref{eq:sol-cjB}). 
\end{proof}

Let us recall the assumption 
that function $\varphi_{\alpha}^{(\beta)}(\mu_a, \xi_b)$ 
is continuous with respect to $\mu_a$ and $\xi_b$. 
Then, for any given set of integers, $c_1, \ldots, c_{2sm}$ 
satisfying $0 < c_j \le M$ for $j=1, 2, \ldots, 2sm$, we may 
approximate the matrix elements of 
 $(\Phi^{(2s)'})^{-1} \Psi^{(2s)'}$  as follows. 
 For integers $j$ and $k$ with $1 \le j, k \le 2sm$, 
we define $a_j$, $\alpha_j$, $b_k$ and $\beta_k$ as follows. 
\bea 
a_j & = & [(c_j-1)/2s]+1 \, , \quad  
\alpha_j =c_j - 2s [(c_j-1)/2s] \, ,  \non \\
b_k & = & [(k-1)/2s]+1 \, , \quad \beta_k = k - 2s [(k-1)/2s] \, . 
\label{eq:aabb}
\eea
Then, we have 
\be 
 \left( (\Phi^{(2s)'})^{-1} \Psi^{(2s)'} 
\left(   ({c_j} )_{2sm} \right) \right)_{j+M^{'} , \, k+M^{'}}    
 =  {\frac 1 {N_s}} \, {\frac {\rho(\mu_{a_j} - \xi_{b_k})} 
{\rho_{\rm tot}(\mu_{a_j})}}   
\delta_{\alpha_j, \, \beta_k} + O({1/{N_s^2}}) \, .   
\ee
Here we recall $M^{'}=M-2sm$. 

For a given $2s$-string, $\lambda_{(a, \alpha)}$,  
with $\alpha=1, 2, \ldots, 2s$, we define 
$\lambda_{(a, \alpha)}^{'}$ by the `regular part' 
of $\lambda_{(a, \alpha)}$: 
\be 
\lambda_{(a, \alpha)}^{'} = \mu_a - (\alpha -1/2) \eta . 
\ee
Let us introduce a $2sm$-by-$2sm$ matrix $S$ by 
\be 
S_{j, \, k}(c_1, \ldots, c_{2sm}; (\xi_p )_{N_s}) 
= \rho(\lambda^{'}_{c_j} - w_k^{(2s)} + \eta/2) \, 
\delta_{\alpha_j , \, \beta_k} \quad \mbox{for} 
\, \, \, j, k= 1, 2, \ldots, 2sm.   
\ee
Here $a_j$, $\alpha_j$, $b_k$ and $\beta_k$ are given by (\ref{eq:aabb}). 
Then, we obtain  
\be 
\phi^{(2s; \, m)}_{j, \, k} ((c_k)_{2sm}; \{ \xi_p \}) = {\frac 1 N_s} 
{\frac 1 {\rho_{\rm tot}(\mu_a)}} \, 
S_{j, \, k}((c_k)_{2sm}; (\xi_p )_{N_s}) + O({1/{N_s^2}}) \, . 
\label{eq:phi-S}
\ee
and we have 
\be 
{\rm det} \left(
\phi^{(2s; \, m)}((c_k)_{2sm}; \{ \xi_p \}) \right)
=  
\prod_{j=1}^{2sm} \left( 
{\frac 1 {N_s}} \, {\frac 1 {\rho_{\rm tot}(\mu_{a_j})}} \right) 
\, \cdot \, \Big( 
{\rm det} S(( c_j )_{2sm}; (\xi_p )_{N_s} ) \, + O(1/N_s) \Big). 
\label{eq:detPhi=detS}
\ee

%
%
\subsection{Fourier transform in the cases of spin-$1$ and general spin-$s$}

The integral equations (\ref{eq:int}) 
for the spin-1 case are given by 
\begin{equation}
\left\{
\begin{array}{l}
\displaystyle
    \varphi^{(1)}_1(\mu,\xi_p)+ \int^{\infty}_{-\infty} K_2(\mu-\lambda) 
  \varphi^{(1)}_1(\lambda,\xi_p) \, d\lambda
+\int^{\infty}_{-\infty} K_2(\mu-\lambda+\eta+ \epsilon^{(1,2)}) 
            \varphi^{(1)}_{2}(\lambda, \xi_p)  d\lambda      
\\ \hspace{8cm}
\displaystyle= {\frac 1 {2 \pi i}} \,  \frac{\sinh(\eta)}
{\sinh(\mu-\xi_p + {\frac{\eta}{2}})\sinh(\mu-\xi_p - {\frac{\eta}{2}})} \\
\displaystyle
    \varphi^{(2)}_1(\mu, \xi_p) + \int^{\infty}_{-\infty} 
    K_2(\mu-\lambda) \varphi^{(2)}_1(\lambda,\xi_p) d\lambda 
   + \int^{\infty}_{-\infty} K_2(\mu-\lambda+\eta+ \epsilon^{(1,2)}) 
    \varphi^{(2)}_2(\lambda,\xi_p)  d\lambda    
\\ \hspace{8cm}
\displaystyle={\frac 1 {2 \pi i}} \, 
\frac{\sinh(\eta)}{\sinh(\mu-\xi_p+\frac{\eta}{2}) 
                                 \sinh(\mu-\xi_p+{\frac{3 \eta}{2}})} \\    
\displaystyle
    \varphi^{(1)}_2(\mu,\xi_p)+\int^{\infty}_{-\infty}
     K_2(\mu-\lambda-\eta + \epsilon^{(2,1)}) 
      \varphi^{(1)}_1(\lambda,\xi_p) d\lambda 
    + \int^{\infty}_{-\infty}  K_2(\mu-\lambda) 
      \varphi^{(1)}_2(\lambda, \xi_p) d\lambda           
\\ \hspace{8cm}
\displaystyle= {\frac 1 {2 \pi i}} \, 
\frac{\sinh(\eta)}
{\sinh(\mu-\xi_p-\frac{3 \eta}{2})\sinh(\mu-\xi_p - \frac{\eta}{2})}\\
\displaystyle
    \varphi^{(2)}_2(\mu, \xi_p) + \int^{\infty}_{-\infty}
     K_2( \mu-\lambda-\eta+ \epsilon^{(2,1)}) 
     \varphi^{(2)}_1(\lambda, \xi_p) d\lambda 
+\int^{\infty}_{-\infty} K_2(\mu-\lambda) 
   \varphi^{(2)}_2(\lambda, \xi_p) d\lambda        
\\ \hspace{8cm}
\displaystyle= {\frac 1 {2 \pi i}} \, 
\frac{\sinh(\eta)}
{\sinh(\mu-\xi_p - \frac{\eta}{2})\sinh(\mu-\xi_p+\frac{\eta}{2})}     
\end{array}
\right..   \label{eq:integrals}
\end{equation}

We solve integral equations (\ref{eq:int})  
via the Fourier transform. 
Let us express the Fourier transform of function 
$\varphi^{(\beta)}_{\alpha}(\mu,\xi)$ by  
\be 
\widehat{\varphi}^{(\beta)}_{\alpha}(\omega, \xi) 
=\int^{\infty}_{-\infty}
e^{i\mu \omega}\varphi_{\alpha}^{(\beta)}(\mu, \xi) d\mu, \quad  
\mbox{for} \, \,  \alpha, \beta= 1, 2, \ldots, 2s \,  . 
\ee
We denote by $\widehat{K}_n(\omega)$ the Fourier transform of 
kernel $K_n(\lambda)$. 
We define matrix ${\cal M}^{(2s)}_{\hat \varphi}$ by 
\be 
\left( {\cal M}^{(2s)}_{\hat \varphi} \right)_{\alpha \beta} = 
\widehat{\varphi}^{(\beta)}_{\alpha}(\omega, \xi) \quad 
\mbox{for} \quad \alpha, \beta= 1, 2, \ldots, 2s \,  . 
\ee
We introduce a $2s$-by-$2s$ matrix ${\cal M}^{(2s)}_{K_2}$. We define  
 matrix element $(j, k)$ for $j, k = 1, 2, \ldots, 2sm$, by  
\be 
\left( {\cal M}^{(2s)}_{K_2} \right)_{j, k} = \left\{ 
 \begin{array}{cc} 
\displaystyle{1+ \int^{\infty}_{-\infty} e^{i\mu \omega} K_2(\mu) \, d\mu}
 & \mbox{for} \, \, j=k \, , \\ 
& \\
\displaystyle{\int^{\infty}_{-\infty} 
e^{i\mu \omega} K_2(\mu + (k-j) \eta + i 0) \, d\mu} & 
\mbox{for} \, \, j < k \, , \\
& \\ 
\displaystyle{\int^{\infty}_{-\infty} 
e^{i\mu \omega} K_2(\mu - (j-k) \eta - i 0) \, d\mu} & 
\mbox{for} \, \, j > k \, . \\
\end{array} 
\right. 
\ee
When $0 < \zeta < \pi/2s$,   
we calculate the matrix elements of ${\cal M}^{(2s)}_{K_2}$ as follows.  
\be 
\left( {\cal M}^{(2s)}_{K_2} \right)_{j,k} 
= \delta_{j,k} (1 + {\widehat K}_2(\omega)) 
+ (1 - \delta_{j,k})  e^{(k-j)\zeta \omega}  
\left({\widehat K}_2(\omega)- e^{{\rm sgn}(j-k) \, \zeta \omega} 
\right) \quad \mbox{for} \, \,  
 j, k = 1, 2, \ldots, 2s. 
\ee
Here we define ${\rm sgn}(j-k)$ by the following:   
${\rm sgn}(j-k)= -1$ for $j-k < 0$, and ${\rm sgn}(j-k)=+1$ 
for $j - k > 0$. Here, ${\widehat K}_2(\omega)$, is given by   
\be
{\widehat K}_2(\omega) = 
\int_{-\infty}^{\infty} e^{i \omega \mu} K_2(\mu) d \mu 
= {\frac {\sinh(\displaystyle{\frac {\pi} 2} - \zeta) \omega} 
       {\sinh( \displaystyle{\frac {\pi  \omega} 2})}} \, . 
\ee
Similarly, we define a $2s$-by-$2s$ matrix ${\cal M}^{(2s)}_{K_1}$ by 
\be 
\left( {\cal M}^{(2s)}_{K_1} \right)_{j, k} 
= \int_{-\infty}^{\infty} e^{i \omega \mu } 
K_1(\mu- \xi_b + (k-j) \eta) d\mu   
\quad \mbox{for} \, \, j,k = 1, 2, \ldots, 2s. 
\ee
When $0 < \zeta < \pi/2s$, we can show 
\be 
\left( {\cal M}^{(2s)}_{K_1} \right)_{j, k} = 
e^{i \xi_b \omega} \left\{ \delta_{j, k} {\widehat K}_1(\omega) 
+ ( 1 - \delta_{j, k} ) \, e^{(k-j)\zeta \omega} 
\left({\widehat K}_1(\omega) - e^{{\rm sgn}(j-k)\zeta \omega/2} \right) 
\right\}
\ee
for $j, k=1, 2, \ldots, 2s$. Here 
${\widehat K}_1(\omega)$ is given by 
\be
{\widehat K}_1(\omega) = 
\int_{-\infty}^{\infty} e^{i \omega \mu} K_1(\mu) d \mu 
= {\frac {\sinh \left(
\displaystyle{{\frac {\pi} 2} - {\frac {\zeta} 2}} \right) \omega} 
       {\sinh( \displaystyle{\frac {\pi  \omega} 2})}} \, . 
\ee

Taking the Fourier transform of integral equations (\ref{eq:int}) 
we have the following matrix equation. 
\be 
{\cal M}^{(2s)}_{K_2} \, 
{\cal M}^{(2s)}_{\hat \varphi} = {\cal M}^{(2s)}_{K_1} \, . 
\label{eq:MatrixEq}
\ee
For the spin-1 case, from (\ref{eq:integrals}) we have 
\bea
& & \left( 
\begin{array}{cc} 
1 + \widehat{K}_2(\omega) &  e^{\zeta \omega} \widehat{K}_2(\omega) - 1 \\ 
e^{-\zeta \omega} \widehat{K}_2(\omega) -1 & 1 + \widehat{K}_2(\omega)  \\
\end{array} \right) 
\left( 
\begin{array}{cc} 
\widehat{\varphi}^{(1)}_{1}(\omega) & 
\widehat{\varphi}^{(2)}_{1}(\omega) \\ 
\widehat{\varphi}^{(1)}_{2}(\omega) & 
\widehat{\varphi}^{(2)}_{2}(\omega) \\ 
\end{array} \right)
\non \\ 
& & \quad = e^{i \xi_b \omega} \, 
\left( 
\begin{array}{cc} 
\widehat{K}_1(\omega) & 
e^{\zeta \omega} \widehat{K}_1(\omega) - e^{\zeta \omega/2} \\ 
e^{- \zeta \omega} \widehat{K}_1(\omega) - e^{- \zeta \omega/2} 
& \widehat{K}_1(\omega) \\  
\end{array} \right)  \, . 
\label{eq:Fourier-int}
\eea
It is easy to show that matrix $M^{(2)}({\widehat{\varphi}})$ 
is given by the following:   
\be 
\left( 
\begin{array}{cc} 
\widehat{\varphi}^{(1)}_{1}(\omega) & 
\widehat{\varphi}^{(2)}_{1}(\omega) \\ 
\widehat{\varphi}^{(1)}_{2}(\omega) & 
\widehat{\varphi}^{(2)}_{2}(\omega) \\ 
\end{array} \right)
= 
\left( 
\begin{array}{cc} 
\displaystyle{\frac {e^{i \xi_b \omega}} {2 \cosh(\zeta \omega/2)} } & 0 \\ 
0  &  \displaystyle{\frac {e^{i \xi_b \omega} } {2 \cosh(\zeta \omega/2)} } \\
\end{array} \right) \, . 
\ee

We calculate the determinant of ${\cal M}^{(2)}_{K_2}$ (the spin-1 case) 
as follows.  
\bea 
{\rm det} 
\left( 
\begin{array}{cc} 
1 + \widehat{K}_2(\omega) &  e^{\zeta \omega} \widehat{K}_2(\omega) - 1 \\ 
e^{-\zeta \omega} \widehat{K}_2(\omega) -1 & 1 + \widehat{K}_2(\omega)  
\end{array} \right)
& = &
{\rm det} 
\left( 
\begin{array}{cc} 
1 + \widehat{K}_2(\omega) &  e^{\zeta \omega} \widehat{K}_2(\omega) - 1 \\ 
 - (1 + e^{-\zeta \omega}) & 1 +  e^{-\zeta \omega}  
\end{array} 
\right) 
\non \\ 
& = &
{\rm det} 
\left( 
\begin{array}{cc} 
 \widehat{K}_2(\omega) (1 +  e^{\zeta \omega}) 
&  e^{\zeta \omega} \widehat{K}_2(\omega) - 1 \\ 
 0 & 1 +  e^{-\zeta \omega}  
\end{array} 
\right) \non \\ 
& = &  \widehat{K}_2(\omega) (1 +  e^{-\zeta \omega})
(1 +  e^{\zeta \omega}) \, . 
\eea
Here,  we first subtract the 2nd row by the 1st row multiplied by 
$e^{-\zeta \omega}$. We next add the 2nd column to the 1st column.   
Finally, the determinant is factorized and we have the result.   

By the same method we can calculate the determinant of 
${\cal M}^{(2s)}_{K_2}$ for 
the spin-$s$ case. We have    
\be 
{\rm det} {\cal M}^{(2s)}_{K_2} = 
 \left( 2 \cosh(\zeta \omega/2) \right)^{2s} \, 
\widehat{K}_{2s}(\omega) \, . 
\ee
The determinant is  nonzero generically, 
 and hence the solution to matrix equation (\ref{eq:MatrixEq}) is unique. 
Therefore, we obtain 
the solution of integral equation (\ref{eq:int}).

%
%
\setcounter{equation}{0} 
 \renewcommand{\theequation}{5.\arabic{equation}}
\section{The EFP of the spin-$s$ XXZ spin chain near AF point}

\subsection{Multiple-integral representations of the spin-$s$ EFP }

Let us derive multiple-integral representations 
for the emptiness formation probability of the 
spin-$s$ XXZ spin chain. 
We shall take the large $N_s$ limit of 
the EFP (\ref{eq:EFP-finite}) for a finite-size system,  
and we replace rapidity $\lambda_{c_j}$ with 
complex  variable $\lambda_j$ for $j=1, 2, \ldots, 2sm$, as follows. 
For a given rapidity of $2s$-string, 
 $\lambda_{A}= \mu_a - (\alpha -1/2) \eta + \epsilon_A$,   
we define its regular part $\lambda_A^{'}$ by 
$\lambda_A^{'} = \mu_a - (\alpha -1/2) \eta$.  
In the large-$N_s$ limit, we first replace 
$\lambda_{c_k}=\lambda_{c_k}^{'} + \epsilon_{c_k}$ 
by $\lambda_k^{'} + \epsilon_{c_k}$ 
where $\lambda_k^{'}$ are complex integral variables  
corresponding to complete strings such as $\lambda_k^{'}= 
\mu_k - (\beta -1/2) \eta$ for some integer $\beta$ 
with $1 \le \beta \le 2s$ where $\eta= i \zeta$ with $0 < \zeta < \pi$ 
 and $\mu_k$ is real.    
We express $\lambda_k^{'}$ and $\epsilon_{c_k}$ 
simply by $\lambda_k$ and $\epsilon_{k}$, 
respectively, and then we obtain multiple-integral representations.

Applying (\ref{eq:detPhi=detS}) 
we derive the emptiness formation probability for arbitrary spin-$s$ 
in the thermodynamic limit $N_s \rightarrow \infty$, as follows. 
\begin{equation}
\begin{split}
\tau^{(2s)}_{\infty}(m; \{\xi_p\})= 
& \frac{1}{\prod_{1\leq j<r\leq 2s} (\sinh(r-j)\eta)^m}  
\times\frac{1}{\prod_{1\leq k<l\leq m}\prod^{2s}_{j=1}\prod^{2s}_{r=1}
\sinh(\xi_k-\xi_l+(r-j)\eta)}     \\
&\times\prod^{2sm}_{l=1}\left(\sum^{2s}_{k=1}
\int^{\infty+(-k+\frac{1}{2})\eta}_{-\infty+(-k+\frac{1}{2})\eta}
d\lambda_l \right)
H^{(2s)}(\lambda_1,\cdots,\lambda_{2sm}) \, 
\textstyle{\det} S(\lambda_1, \ldots, \lambda_{2sm})
\label{eq:EFP-integral1}
\end{split}
\end{equation}
where $H^{(2s)}((\lambda_l)_{2sm})$ is given by 
\begin{equation}
\begin{split}
H^{(2s)}((\lambda_l)_{2sm})=&\frac{1}{\prod_{1 \leq l< k \leq 
2sm}\sinh(\lambda_{k} -\lambda_{l}+\eta + \epsilon_{k, l})}
\\
&\times
\prod^{2sm}_{l=1}
\prod^{m}_{k=1}
\prod^{2s-1}_{p=1} 
\sinh(\lambda_{l}-\xi_k+(2s-p)\eta)
\\
&\times
\prod^{m}_{l=1}\prod^{2s}_{r_l=1}
\left( \prod^{l-1}_{k=1}\sinh(\lambda_{2s(l-1)+r_l} - \xi_k+2s\eta)
\prod^m_{k=l+1} \sinh(\lambda_{2s(l-1)+r_l}-\xi_k)
\right)
\end{split}
\end{equation}
and  matrix elements of $S(\lambda_1, \ldots, \lambda_{2sm})$ are given by 
\begin{equation}
    S_{j, \, 2s(l-1)+k}=
    \left\{
    \begin{array}{ll}
        \rho(\lambda_j-\xi_l+(k- {\frac {1}{2}})\eta)    
       & \mbox{if} \quad \lambda_j-\mu_{j}=(\frac{1}{2}-k)\eta \\
        0     & otherwise.
    \end{array}
    \right. \label{eq:matrix-elements-S}
\end{equation}
Here $\mu_{j}$ denotes 
the center of the $2s$-string in which   
$\lambda_{j}$ is the $k$th rapidity.  
Explicitly, we have $\lambda_j= \mu_{j} - (k -1/2) \eta$. 
In the denominator, we have set  $\epsilon_{k, l}$ associated 
with $\lambda_k$ and $\lambda_l$ as follows.  
\be 
\epsilon_{k, \, l} = 
\left\{ 
\begin{array}{cc} 
i \epsilon & \mbox{for} \quad {\cal I}m(\lambda_k - \lambda_l) >0 \, ,  \\ 
- i \epsilon & \mbox{for} \quad {\cal I}m(\lambda_k - \lambda_l) < 0. 
\end{array} 
\right. 
\label{eq:epsilon}
\ee

In the homogeneous case we have $\epsilon_p=0$  for $p=1, 2, \ldots, N_s$. 
We have thus defined inhomogeneous parameters $\xi_p$.  
We recall that in the homogeneous case, the spin-$s$ Hamiltonian is derived 
from the logarithmic derivative of the spin-$s$ transfer matrix. 

Here we should remark that an expression of the matrix elements 
of $S$ similar to (\ref{eq:matrix-elements-S}) 
has been given in eq. (6.14) of Ref. \cite{Kitanine2001} 
for the correlation functions of the integrable spin-$s$ XXX spin chain.

\subsection{Symmetric expression of the spin-$s$ EFP}

We shall express the spin-$s$ EFP (\ref{eq:EFP-integral1}) 
in a simpler way, making use of permutations 
of $2sm$ integers, $1, 2, \cdots, 2sm$, 
and the formula of the Cauchy determinant. 

Let us take  a set of integers $a_{(j, k)}=a_{2s(j-1)+k}$ 
satisfying $1 \le a_{(j, k)} \le N_s/2 $ 
for $j=1, 2, \ldots, m$ and $k=1, 2, \ldots, 2s$.
Here we remark that indices $a_{(j,k)}$ correspond 
to  the string centers $\mu_{(j,k)}= \mu_{2s(j-1)+k}$.  
In order to reformulate the sum  
over integers, $c_1, \ldots, c_{2sm}$, in eq. (\ref{eq:EFP-finite}) 
in terms of indices $a_{(j, k)}$,  
let us introduce ${\hat c}_1, \ldots, {\hat c}_{2sm}$ by  
\be 
{\hat c}_{2s (j-1)+k} = 2s (a_{(j, k)} -1) + k \, , \quad \mbox{for} 
\quad j=1, 2, \ldots, m \quad \mbox{and} \, k=1, 2, \ldots, 2s \, .   
\ee
We also define $\beta(z)$ by $\beta(z) = z -2 s [(z-1)/2s]$. 
Then, ${\hat c}_j$ are expressed as follows. 
\be 
{\hat c}_j = 2s (a_j -1) + \beta(j) \, , \quad \mbox{for} \, \, 
j=1 , 2, \ldots, 2sm .      
\ee
We decompose the sum over $c_j$ into $2s$ sums over $a_j$ as follows.   
\be 
\sum_{c_j=1}^{M} g(c_j) = \sum_{k=1}^{2s} 
\sum_{a_j=1}^{N_s/2} g(2s(a_j-1) + k)
\, . 
\ee

Let us consider such a function $f(c_1, c_2, \ldots, c_{2sm})$ 
of sequence of integers $(c_j)_{2sm}$ 
that vanishes unless $c_j$'s are distinct.  
We also assume that $f(c_1, c_2,  \ldots, c_{2sm})$ vanishes 
unless the number of $c_j$'s satisfying  
$\beta(c_j)=\alpha$ is given by $m$ for each integer $\alpha$ 
satisfying $1 \le \alpha \le 2s$. Here we recall that the two properties 
are in common with the summand of (\ref{eq:EFP-finite}), in particular, 
with ${\rm det}(\phi^{(2s; \, m)}((c_j)_{2sm}; \{ \xi_p \})$.    
Then, we have 
\bea 
& & \sum_{c_1=1}^{M} \sum_{c_2=1}^{M}
\cdots \sum_{c_{2sm}=1}^{M} f(c_1, \cdots, c_{2sm}) 
 = 
{\frac 1 {(m!)^{2s}}}
\sum_{a_1=1}^{{N_s}/2} 
\sum_{a_2=1}^{{N_s}/2} 
\cdots 
\sum_{a_{2sm}=1}^{{N_s}/2}
\sum_{P \in {\cal S}_{2sm}} 
f({\hat c}_{P1}, \cdots, {\hat c}_{P(2sm)} ) \non \\ 
&  & \qquad = 
\sum_{a_1=1}^{{N_s}/2} 
\sum_{a_2=1}^{{N_s}/2} 
\cdots 
\sum_{a_{2sm}=1}^{{N_s}/2}
\sum_{\pi \in {\cal S}_{2sm}/(S_m)^{2s} } 
f({\hat c}_{\pi 1}, \cdots, {\hat c}_{\pi(2sm)} ) 
\, .  
 \label{eq:csum-to-asum}
\eea
Here an element $\pi$ of ${\cal S}_{2sm}/({\cal S}_m)^{2s}$ 
gives a permutation of integers $1, 2, \ldots, 2sm$,   
where $\pi j$'s such that $\pi j \equiv k$ (mod $2s$) 
are put in increasing order in the sequence 
($\pi 1, \pi 2, \ldots, \pi (2sm))$ for $k=0, 1, \ldots, 2s-1$. 

Reformulating the sum over $c_j$ s in (\ref{eq:EFP-finite}) 
in terms of $a_{j}$'s,  in the large $N_s$ limit we have 
\bea 
\tau^{(2s)}_{\infty}(m) & = & 
\frac{1}{\prod_{1\leq \alpha < \beta \leq 2s}(\sinh(\beta -\alpha)\eta)^m}  
\times\frac{1}{\prod_{1\leq k<l\leq m} 
\prod^{2s}_{\alpha=1}\prod^{2s}_{\beta=1}
\sinh(\xi_k-\xi_l+(\alpha-\beta)\eta)}   \non \\ 
& & \times  \prod_{k=1}^{2s} \prod_{j=1}^{m} 
\int_{-\infty}^{\infty} d\mu_{(j, k)} 
\, \sum_{P \in {\cal S}_{2sm}}  {\frac 1 {(m!)^{2s}}} 
{\rm det} 
S(\lambda_{P1}, \ldots, \lambda_{P(2sm)})  \, 
 H^{(2s)} (\lambda_{P1}, \ldots, \lambda_{P(2sm)}) 
\non \\ 
& = & 
\frac{1}{\prod_{1\leq \alpha < \beta \leq 2s}(\sinh(\beta -\alpha)\eta)^m}  
\times\frac{1}{\prod_{1\leq k<l\leq m} 
\prod^{2s}_{\alpha=1}\prod^{2s}_{\beta=1}
\sinh(\xi_k-\xi_l+(\alpha-\beta)\eta)}   \non \\ 
& \times & \prod_{j=1}^{2sm} 
\int_{-\infty}^{\infty} d\mu_{j} 
\, \sum_{\pi \in {\cal S}_{2sm}/({\cal S}_m)^{2s} } 
{\rm det} 
S(\lambda_{\pi 1}, \ldots, \lambda_{\pi (2sm)})  \, 
 H^{(2s)} (\lambda_{\pi 1}, \ldots, \lambda_{\pi (2sm)}) \, ,  
\label{eq:EFP-integral2}
\eea
where symbols $\lambda_j$ denote the following 
\be 
\lambda_{j} = \mu_j - (\beta(j) - {\frac 1 2}) \eta \, 
\, \quad \mbox{for} \, \, j= 1, 2, \ldots, 2sm.  
\label{eq:def-lambda_j}
\ee

We calculate ${\rm det} S$ 
applying the Cauchy determinant formula 
\begin{equation}
 \det\left(\frac{1}{\sinh(\lambda_a-\xi_k)}\right)
  =\frac{\prod^m_{k<l}\sinh(\xi_k-\xi_l)\prod^m_{a>b} 
                     \sinh(\lambda_a-\lambda_b)}
{\prod^m_{a=1}\prod^m_{k=1}\sinh(\lambda_a-\xi_k)} \, ,   
\end{equation}
and we obtain the symmetric expression of the spin-$s$ EFP as follows. 
\begin{equation}
\begin{split}
\tau^{(2s)}_{\infty}(m; \{\xi_p\}) & = 
\frac{1}{\prod_{1 \leq \alpha < \beta \leq 2s}
\sinh^{m}(\beta-\alpha )\eta} \,  
\prod_{1\leq k < l \leq m}
\frac{\sinh^{2s}(\pi(\xi_k-\xi_l)/\zeta)}
{\prod^{2s}_{j=1}\prod^{2s}_{r=1}\sinh(\xi_k-\xi_l+(r-j)\eta)}   
 \\
&\times \, {\frac {i^{2sm^2}} { (2 i \zeta)^{2sm} }} \, 
%
%
\left( \prod^{2sm}_{j=1} \int^{\infty}_{-\infty} d \mu_{j} \right) \, 
\prod^{2s}_{\gamma=1} 
\prod_{1 \le b < a \le m}
\sinh(\pi(\mu_{2s(a-1)+\gamma}-\mu_{2s(b-1)+\gamma})/\zeta)
\\
& 
\times \left( \prod^{2sm}_{j=1} 
{\frac {\prod^{m}_{b=1} \prod^{2s-1}_{\beta=1}
\sinh(\lambda_{j}-\xi_b+ \beta \eta)}
{\prod_{b=1}^{m} \cosh(\pi(\mu_{j}-\xi_b)/\zeta)}} \right) 
%
\,  
\sum_{\sigma \in {\cal S}_{2sm}/({\cal S}_m)^{2s} } 
({\rm sgn} \, \sigma) \, 
\\
&
\times \frac{\prod^{m}_{l=1}\prod^{2s}_{r_l=1}
\left(\prod^{l-1}_{k=1}\sinh(\lambda_{\sigma(2s(l-1)+r_l)}-\xi_k+2s\eta)
\prod^m_{k=l+1}\sinh(\lambda_{\sigma(2s(l-1)+r_l)}-\xi_k)\right)}
{\prod_{1\leq l<k\leq 2sm} 
\sinh(\lambda_{\sigma(k)}-\lambda_{\sigma(l)}+\eta 
+ \epsilon_{\sigma(k), \sigma(l)} )} \, . 
\label{eq:EFP2}
\end{split}
\end{equation}
Here  (${\rm sgn} \, \sigma$) 
denotes the sign of permutation 
$\sigma \in {\cal S}_{2sm}/({\cal S}_m)^{2s}$. 

%
%
\subsection{The spin-$s$ EFP for the homogeneous chain}

Sending $\xi_b$ to zero for $b=1, 2, \ldots, m$, we derive   
the spin-$s$ EFP in the homogeneous limit.  
Here we remark that the spin-$s$ Hamiltonian is derived from the logarithmic 
derivative of the spin-$s$ transfer matrix 
$t^{(2s, \, 2s)}(\lambda; \{\xi_b \}_{N_s})$ 
in the homogeneous case where $\xi_b=0$ for $b=1, 2, \ldots, N_s$. 

Sending $\xi_b$ to zero for $b=1, 2, \ldots, m$, 
we have the following: 
\begin{equation}
\begin{split}
& \lim_{\xi_1 \rightarrow 0}  \lim_{\xi_2 \rightarrow 0}  \cdots 
\lim_{\xi_{m} \rightarrow 0}  \,  
\left( \tau^{(2s)}_{\infty}
(m; \{ \xi_p \}_{m}) \right) 
\\ 
& = 
{\frac {\left( \pi/\zeta \right)^{ s m (m-1)} } 
       {\prod_{1 \leq \alpha < \beta \leq 2s} 
        \sinh^{m^2}((\beta - \alpha) \eta)}}  
\\
& \times \, {\frac {i^{2sm^2}} { (2 i \zeta)^{2sm} }} \, \, 
%
%
\prod^{2sm}_{j=1} \int^{\infty}_{-\infty} d \mu_{j}  
\, 
\prod^{2s}_{\beta=1} 
\prod_{1 \le b < a \le m}
\sinh(\pi(\mu_{2s(a-1)+\beta}-\mu_{2s(b-1)+\beta})/\zeta)
\\
& 
\times \left( \prod^{2sm}_{j=1} 
{\frac {\prod^{2s-1}_{\beta=1} \sinh^{m}(\lambda_{j} + \beta \eta)}
{ \cosh^{m}(\pi \mu_{j}/\zeta)}} \right)
\times  
\sum_{\sigma \in {\cal S}_{2sm}/({\cal S}_m)^{2s}} 
({\rm sgn} \, \sigma) \, 
\\
&
\times {\frac {\prod^{m}_{l=1} \prod^{2s}_{r_l=1}
\left(
%
%
\sinh^{l-1}(\lambda_{\sigma(2s(l-1)+r_l)} + 2s\eta)
%
%
\sinh^{m-l}(\lambda_{\sigma(2s(l-1)+r_l)}) \right)}
{\prod_{1 \leq l < k \leq 2sm} 
\sinh(\lambda_{\sigma(k)}-\lambda_{\sigma(l)}+\eta 
+ \epsilon_{\sigma(k), \sigma(l)} )}} \, . 
\label{eq:EFP3}
\end{split}
\end{equation}
Here we recall definition (\ref{eq:def-lambda_j}) of $\lambda_j$. 
 
Let us discuss that  expression (\ref{eq:EFP3}) gives 
 the spin-$s$ EFP for the homogeneous chain. 
First, we remark that  
$\tau^{(2s)}_{N_s}(m; \{\xi_p \}_{m})$ does not depend on 
$\xi_p$ with $p > m$. Hence we may consider that 
 inhomogeneous parameters $\xi_p$ with $p > m$ are all set to be zero,   
after computing the EFP: $\tau^{(2s)}_{N_s}(m; \{\xi_p \}_{m})$.

We now show that the order of the homogeneous limit: $\xi_p \rightarrow 0$ 
and the thermodynamic limit $N_s \rightarrow \infty$ can be reversed.   
We can show the following relation: 
\be 
 \prod_{p=1}^{m} \lim_{\xi_p \rightarrow 0}  \, 
 \left(   
\lim_{N_s \rightarrow \infty}  \,  \left( 
\tau^{(2s)}_{N_s}(m; \{\xi_p\}_{m}) \right) \right)
=
\lim_{N_s \rightarrow \infty}
\left( 
\prod_{p=1}^{m} \lim_{\xi_p \rightarrow 0} 
  \,  \left( 
\tau^{(2s)}_{N_s}(m; \{\xi_p\}_{m}) \right) \right) \, . 
\ee
In fact, when  $N_s$ is very large, it follows from 
(\ref{eq:detPhi=detS}) that we have 
\be 
\tau^{(2s)}_{\infty}(m; \{ \xi_p \}_m) 
= \tau^{(2s)}_{N_s}(m; \{ \xi_p \}_m) + O(1/N_s).  
\ee
Furthermore, we can explicitly show that  
$\tau^{(2s)}_{N_s}(m; \{\xi_p \}_m)$ is continuous  
with respect to $\xi_p$ at $\xi_p=0$ for $p=1, 2, \ldots, m$. 
We first reformulate the sum over $c_j$ in (\ref{eq:EFP-finite}) 
into the sum over $a_{(j,k)}$ by relation (\ref{eq:csum-to-asum}).  
\bea 
& & \sum_{c_1=1}^{M} \cdots \sum_{c_{2sm}=1}^{M} 
{\rm det}S(({c}_{j})_{2sm}) H^{(2s)}((\lambda_{c_j})_{2sm}) 
\non \\ 
& = & \prod_{j=1}^{m} \prod_{k=1}^{2s} \left( {\frac 1 {m!}} 
\sum_{a_{(j,k)}=1}^{N_s/2} \right) \sum_{P \in {\cal S}_{2sm}} 
{\rm det}S(({\hat c}_{Pj})_{2sm}) 
H^{(2s)}((\lambda_{{\hat c}_j})_{2sm})  \non \\ 
& = & \prod_{j=1}^{m} \prod_{k=1}^{2s} \left( {\frac 1 {m!}} 
\sum_{a_{(j,k)}=1}^{N_s/2} \right)  
{\rm det}S(({\hat c}_{j})_{2sm}) \sum_{P \in {\cal S}_{2sm}} 
({\rm sgn}P) \,  
H^{(2s)}((\lambda_{{\hat c}_j})_{2sm}) \, . 
\eea
We then apply the Cauchy determinant formula to 
evaluate ${\rm det}S({\hat c}_{j})$ as follows. 
\bea
& & {\rm det}S(({\hat c}_{j})_{2sm}) 
 = {\rm det}S((\lambda_{{\widehat c}_j })_{2sm})    
 = \prod_{\alpha=1}^{2s} 
{\rm det} S^{(\alpha)}((\lambda_{2s(a_{(j, \alpha)}-1) + \alpha})_{m} ) 
\non \\ 
& = & 
{\frac 
{\left(  \prod_{j < k}  \sinh \pi(\xi_j- \xi_{k})/\zeta 
\right)^{2s}
\cdot 
\prod_{\alpha=1}^{2s} 
\prod_{j<k} \sinh \left\{ \pi(\mu_{2s(a_{(j, \alpha)}-1) + \alpha} 
- \mu_{2s(a_{(k, \alpha)}-1) + \alpha} )/\zeta \right\} }  
{(2 i \zeta) ^{2sm} 
\prod_{j=1}^{2sm} \prod_{b=1}^{m} 
\sinh \pi(\mu_j - \xi_b - \eta/2)/\zeta }} \, . 
\non \\ 
\label{eq:detS}
\eea
Making use of (\ref{eq:detS}) we show that such factors in 
the denominator of (\ref{eq:EFP-finite}) 
that vanish in the limit of sending $\xi_p$ 
to zero are canceled by the factors in the numerator of 
(\ref{eq:detS}). 
We thus have shown that the EFP for the finite system,   
$\tau^{(2s)}_{N_s}(m; \{\xi_p \}_m)$, 
is continuous with respect to $\xi_p$ at $\xi_p=0$.    

Therefore, expression (\ref{eq:EFP3}) gives 
 the spin-$s$ EFP for the homogeneous chain. 
That is, we have the following equality: 
\be 
\lim_{\xi_1 \rightarrow 0}  \lim_{\xi_2 \rightarrow 0}  
\cdots 
 \lim_{\xi_{m} \rightarrow 0}  \,  
\left( \tau^{(2s)}_{\infty}(m; \{ \xi_p \}_{m}) \right)   
= \lim_{N_s \rightarrow \infty} 
 \tau^{(2s)}_{N_s}(m; \{ \xi_p=0 \}_{N_s}) \, .  
\ee

\subsection{The spin-1 EFP with $m=1$}

Let us calculate $\tau^{(2s)}(m)$ for $s=1$ and $m=1$.   
From formula (\ref{eq:EFP-integral1}) we have 
\bea 
& &  \tau^{(2)}(1) 
 = {\frac 1 {\sinh \eta}} 
\left(\int^{\infty - \eta/2}_{-\infty - \eta/2} d \lambda_1
+ \int^{\infty - 3\eta/2}_{-\infty - 3\eta/2} d \lambda_1 \right) 
\left(\int^{\infty - \eta/2}_{-\infty - \eta/2} d \lambda_2 
+ \int^{\infty - 3\eta/2}_{-\infty - 3\eta/2} d \lambda_2 \right) 
\non \\ 
& & \qquad \times 
H^{(2)}(\lambda_1, \lambda_2) \, {\rm det}S(\lambda_1, \lambda_2)  
\non \\ 
& = &  - {\frac 1 {i \sin \zeta}} \, {\frac 1 {4 \zeta^2}} 
\int^{\infty}_{-\infty} d \mu_1 \int^{\infty}_{-\infty} d \mu_2 
\frac {\sinh(\mu_1- \xi_1 + \eta/2) \sinh(\mu_2- \xi_1 - \eta/2)}   
{\sinh(\mu_1 - \mu_2 + i \epsilon) \, 
\cosh\left( \pi( \mu_1- \xi_1)/\zeta\right)  
\cosh \left( \pi( \mu_2- \xi_1)/\zeta \right) } \non 
\\ 
& - &{\frac 1 {i \sin \zeta}} \, \frac {(-1)} {4 \zeta^2} 
\int^{\infty}_{-\infty} d \mu_1 \int^{\infty}_{-\infty} d \mu_2 
\frac {\sinh(\mu_1- \xi_1 - \eta/2) \sinh(\mu_2- \xi_1 + \eta/2)}   
{\sinh(\mu_1 - \mu_2 - 2 \eta) \, 
\cosh \left( \pi( \mu_1- \xi_1)/\zeta \right) 
\cosh \left( \pi( \mu_2- \xi_1)/\zeta \right) } \, .\non \\  
\label{eq:EFPs=1}
\eea
Here we note that ${\rm det} S(\lambda_1, \lambda_2) = 0$ for 
 $\lambda_1 = \mu_1 - \eta/2$ and $\lambda_2 = \mu_2 - 3 \eta/2$, 
or for  $\lambda_1 = \mu_1 - 3 \eta/2$ and $\lambda_2 = \mu_2 - \eta/2$. 
Showing the following relations of integrals 
\bea 
 \int_{-\infty}^{\infty} d \mu {\frac 1 {\cosh\left(\pi \mu/\zeta \right)}} 
{\frac {\sinh(\mu- \eta/2)} {\sinh(\lambda- \mu + i \epsilon)}}  
& = & - \int_{-\infty}^{\infty} d \mu 
{\frac 1 {\cosh\left(\pi \mu/\zeta \right)}} 
{\frac {\sinh(\mu + \eta/2)} {\sinh(\lambda- \mu - \eta)}} 
- 2 \pi i {\frac {\sinh(\lambda-\eta/2)} {\cosh(\pi \lambda/\zeta)}} \, ,   
\non \\ 
\int_{-\infty}^{\infty} d \lambda 
{\frac 1 {\cosh\left(\pi \lambda/\zeta \right)}} 
{\frac {\sinh(\lambda + \eta/2)} {\sinh(\lambda- \mu - \eta)}}  
& = & - \int_{-\infty}^{\infty} d \lambda 
{\frac 1 {\cosh\left(\pi \lambda/\zeta \right)}} 
{\frac {\sinh(\lambda - \eta/2)} {\sinh(\lambda- \mu - 2 \eta)}} \, ,  
\eea
we thus have  
\be 
\tau^{(2)}(1)= {\frac {\pi} 4}  
{\frac 1 {\zeta \sin \zeta}}
\left(
\int_{-\infty}^{\infty} d x \, 
{\frac {\cosh 2 \zeta x - \cosh \eta} {\cosh^2 \pi x}}
 \right) \, . 
\ee
Evaluating the integral we obtain the spin-1 EFP with $m=1$ as follows. 
\be 
\tau^{(2)}(1) 
= \frac {\zeta - \sin \zeta \cos \zeta} {2 \zeta \sin^2 \zeta}  \, . 
\ee
Let us denote by $\la E_1^{a, \, b} \ra$ the ground-state 
expectation value of operator $E_1^{a, \, b}$. For the spin-1 case, 
we have  $\widetilde{E}_1^{0, \, 0 \, (2 \, +)} 
+\widetilde{E}_1^{1, \, 1 \, (2 \, +)} 
+ \widetilde{E}_1^{2, \, 2 \, (2 \, +)} = \widetilde{P}_1^{(2)}$, 
and hence we have 
\be 
\la \widetilde{ E}_1^{0, \, 0 \, (2 \, +)} \ra 
+ \la \widetilde{E}_1^{1, \, 1 \, (2 \, +)} \ra 
+ \la \widetilde{E}_1^{2, \, 2 \, (2 \, +)} \ra 
= 1 \, .  
\ee
Due to the uniaxial symmetry we have 
$\la \widetilde{ E}_1^{0, \, 0 \, (2 \, +)} \ra 
= \la \widetilde{ E}_1^{2, \, 2 \, (2 \, +)} \ra$. 
Thus, we obtain   
\be  
\la \widetilde{ E}_1^{1, \, 1 \, (2 \, +)} \ra =  
 {\frac {\cos \zeta \, (\sin \zeta - \zeta \cos \zeta)} 
{\zeta \sin^2 \zeta}}  
\, . \label{eq:EPofE11}
\ee

In the XXX limit, we have  
\be 
\lim_{\zeta \rightarrow 0} 
\frac {\zeta - \sin \zeta \cos \zeta} {2 \zeta \sin^2 \zeta} 
= {\frac 1 3} \, . 
\ee
The limiting value 1/3 coincides with the spin-1 XXX result 
obtained  by Kitanine \cite{Kitanine2001}. 
As pointed out in Ref. \cite{Kitanine2001}, $\la E_1^{22} \ra= 
\la E_1^{11} \ra = \la E_1^{00} \ra = 1/3$ for the XXX case since it has 
the rotational symmetry.

In the symmetric expression of the spin-$1$ EFP with $m=1$, 
putting $\lambda_1= \mu_1 - \eta/2$ and $\lambda_2= \mu_2 - 3\eta/2$ 
in (\ref{eq:EFP3}), 
we directly obtain the following:  
\bea    
\tau^{(2s)}(1) & = & 
{\frac 1 {i \sin \zeta}} \, {\frac 1 {4 \zeta^2}} 
\int^{\infty}_{-\infty} d \mu_1 \int^{\infty}_{-\infty} d \mu_2 
\frac {\sinh(\mu_1 + \eta/2) \sinh(\mu_2- \eta/2)} 
{\cosh\left( \pi \mu_1/\zeta \right)  
\cosh \left( \pi \mu_2/\zeta \right)} \non \\ 
& & \quad 
\times \left( {\frac 1 
{\sinh(\mu_2 - \mu_1 - i \epsilon)}} 
- {\frac 1 {\sinh(\mu_1 - \mu_2 + 2 \eta)}} \right) \, . 
\label{eq:syms=1}
\eea 
In the second line of (\ref{eq:syms=1}) the first term 
corresponds to the first term of (\ref{eq:EFPs=1}), 
while the second term to 
the second term of (\ref{eq:EFPs=1}) with 
$\mu_1$ and $\mu_2$ being exchanged.

%
%
%
%
 \setcounter{equation}{0} 
 \renewcommand{\theequation}{6.\arabic{equation}}
%

\section{Spin-$s$ XXZ correlation functions near AF point}

\subsection{Finite-size correlation functions of 
the integrable spin-$s$ XXZ spin chain}

We now calculate  correlation functions other than EFP 
for the spin-$s$ XXZ spin chain by the method of \S 3.3, 
making use of the formulas of Hermitian
 elementary matrices such as (\ref{eq:Em=n}).

We define the correlation function of the spin-$2s$ XXZ spin chain 
for a given product of $(2s+1) \times (2s+1)$ elementary matrices 
such as $\widetilde{E}_1^{i_1 , \, j_1 \, (2s+)} \cdots 
\widetilde{E}_m^{i_m, \, j_m \, (2s+)}$ as follows. 
\be
F^{(2s \, +)}(\{i_k, j_k \}) =  \langle \psi_g^{(2s \, +)} | 
\prod_{k=1}^{m} \widetilde{E}_k^{i_k , \, j_k \, (2s \, +)} 
|\psi_g^{(2s \, +)}  \rangle / \langle \psi_g^{(2s \, +)} 
| \psi_g^{(2s \, +)} \rangle \, . 
\label{eq:def-CF}
\ee
By the method of expressing 
spin-$s$ local operators in terms of spin-1/2 global operators 
\cite{DM1}, 
we express the $m$th product of $(2s+1) \times (2s+1)$ elementary matrices  
in terms of a $2sm$th product of $2 \times 2$ elementary matrices  
with entries $\{ \epsilon_j, \epsilon_j^{'} \}$ as follows. 
\be 
\prod_{k=1}^{m} \widetilde{E}_k^{i_k , \, j_k \, (2s \, +)} = 
C(\{i_k, j_k \}; \{ \epsilon_j^{'}, \, \epsilon_j \} ) 
\, 
\widetilde{P}_{1 2 \ldots L}^{(2s)} \, \cdot \,  
\prod_{k=1}^{2sm} 
 e_k^{\epsilon_k^{'}, \,  \epsilon_k} \, \cdot \, 
\widetilde{P}_{1 2 \ldots L}^{(2s)} . 
\ee
We evaluate the spin-$2s$ XXZ correlation function 
 $F^{(2s \, +)}(\{i_k, j_k \}) $ by 
\be
F^{(2s \, +)}(\{i_k, j_k \}) = C(\{i_k, j_k\}; 
\{ \epsilon_j^{'}, \, \epsilon_j \} ) \, 
\langle \psi_g^{(2s+)} | \,  \widetilde{P}_{1 2 \ldots L}^{(2s)} 
\, \cdot \, 
\prod_{j=1}^{2sm} e_j^{\epsilon_j^{'}, \, \epsilon_j} 
 \, \cdot \, \widetilde{P}_{1 2 \ldots L}^{(2s)} 
|\psi_g^{(2s+)}  \rangle / \langle \psi_g^{(2s+)} | \psi_g^{(2s+)}  
\rangle \, . 
\label{eq:CFe}
\ee
We denote the right-hand side of (\ref{eq:CFe}) by 
$F^{(2s \, +)}( \{ \epsilon_j, \, \epsilon_j^{'} \} )$.

Let us introduce some symbols. 
We denote by ${\bm \alpha}^{+}$ 
the set of suffices $j$ such that $\epsilon_j=0$, 
and by ${\bm \alpha}^{-}$ 
the set of suffices $j$ such that $\epsilon_j^{'}=1$:  
\be 
{\bm \alpha}^{+}= \{ j ; \, \epsilon_j=0 \}  \, , \quad 
{\bm \alpha}^{-}= \{ j ; \, \epsilon_j^{'}=1 \} \,. 
\ee 
We denote by $\alpha_+$ and $\alpha_-$ the number 
of elements of the set ${\bm \alpha}^{+}$ 
and ${\bm \alpha}^{-}$, respectively. Due to charge conservation, 
we have 
\be 
\alpha_{+} + \alpha_{-} = 2sm .  
\ee
We denote by $j_{\rm min}$ and $j_{\rm max}$ the smallest element and 
the largest element of ${\bm \alpha}^{-}$, respectively.    
We also denote by $j^{'}_{\rm min}$ and $j^{'}_{\rm max}$ 
the smallest element and the largest element 
of ${\bm \alpha}^{+}$, respectively.  

Let  $c_j$ ($j \in {\bm \alpha}^{-}$) and 
 $c_j^{'}$ ($j \in {\bm \alpha}^{+}$)
be integers such that 
 $1 \le c_j \le M$ for $j \in {\bm \alpha}^{-}$ 
and $1 \le c_j^{'} \le M+j$ for $j \in {\bm \alpha}^{+}$.  
We define sequence $(b_{\ell})_{2sm}$ by 
\be 
(b_1 , b_2, \ldots, b_{2sm}) = 
(c_{j_{\rm max}^{'}}^{'}, \ldots, c^{'}_{j_{\rm min}^{'}}, 
c_{j_{\rm min}}, \ldots, 
c_{j_{\rm max}}) \, .  \label{seq-b}  
\ee
Here sequence 
$(c_{j_{\rm max}^{'}}^{'}, \ldots, c^{'}_{j_{\rm min}^{'}}, 
c_{j_{\rm min}}, \ldots, c_{j_{\rm max}})$ is given by the 
composite of 
sequence of $c^{'}_{j}$'s in decreasing order with respect to suffix $j$, 
and sequence of $c_{j}$'s in increasing order with respect to suffix $j$.

Let us introduce the following symbols: 
\be 
\prod_{j \in {\bm \alpha}^{-}} 
\left( \sum_{c_j=1}^{M} \right)   
\prod_{j \in {\bm \alpha}^{+}} 
\left( \sum_{c_j^{'}=1}^{M+j} \right) 
=
 \sum_{c_{j_{\rm min}}=1}^{M} \cdots \sum_{c_{j_{\rm max}}=1}^{M} 
\sum_{c^{'}_{j^{'}_{\rm min}}=1}^{M+j^{'}_{\rm min}} \cdots 
\sum_{c^{'}_{j^{'}_{\rm max}}=1}^{M+j^{'}_{\rm max}} \,. 
\ee 
Extending the derivation of the spin-$s$ EFP  
we can rigorously derive the following expression of the spin-$s$ XXZ 
correlation functions in the massless regime with $0 \le \zeta < \pi/2s$.  
\bea 
& & F^{(2s \, +)}(\{ \epsilon_j, \epsilon_j^{'} \}) 
 =   \prod_{j \in {\bm \alpha}^{-}} 
\left( \sum_{c_j=1}^{M} \right)   
\prod_{j \in {\bm \alpha}^{+}} 
\left( \sum_{c_j^{'}=1}^{M+j} \right) 
\quad {\rm det} M^{(2sm)}((b_{\ell})_{2sm})  
\non \\ 
& & \, \, \times  
(-1)^{\alpha_{+}} 
{\frac { \prod_{j \in {\bm \alpha}^{-}} \left( \prod_{k=1}^{j-1} 
\varphi(\lambda_{c_j} - w_k^{(2s)} + \eta) 
\prod_{k=j+1}^{2sm} \varphi(\lambda_{c_j} - w_k^{(2s)} ) \right)}
{\prod_{1 \le k < \ell \le 2sm} 
\varphi(\lambda_{b_{\ell}} - \lambda_{b_k} + \eta)} } \non \\ 
& & \, \, \times  
{\frac { \prod_{j \in {\bm \alpha}^{+}} \left( \prod_{k=1}^{j-1} 
\varphi(\lambda_{c_j^{'}} - w_k^{(2s)} - \eta) 
\prod_{k=j+1}^{2sm} \varphi(\lambda_{c_j^{'}} - w_k^{(2s)} ) \right)}
{\prod_{1 \le k < \ell \le 2sm} 
\varphi(w_{k}^{(2s)} - w_{\ell}^{(2s)})} } + O(1/N_s) \, . 
\label{finiteCF}
\eea
Here $\varphi(\lambda)= \sinh \lambda$.   
We have defined 
the $2sm \times 2sm$ matrix $M^{(2sm)}((b_{j})_{2sm})$ as follows. 
For $\ell, k = 1, 2, \ldots, 2sm$, the matrix element of $(\ell, k)$  
is given by 
\be 
\left( M^{(2sm)}((b_{j})_{2sm}) \right)_{\ell, \,  k} 
= \left\{ 
\begin{array}{cc}  
- \delta_{b_{\ell}-M, \, k} &  {\rm for} \, \, b_{\ell} > M  \\ 
\delta_{\beta(b_{\ell}), \, \beta(k)} \, \cdot \, 
\rho(\lambda_{b_{\ell}}- w_k^{(2s)} + \eta/2 ) 
/{N_s \rho_{\rm tot}(\mu_{a(b_{\ell})})} 
 &  {\rm for} \, \, b_{\ell} \le M  
\end{array}
\right.   
\ee
where $\mu_j$ denote the centers of $\lambda_j$ as follows. 
\be 
\lambda_j = \mu_j - (\beta(j)- \frac 1 2 ) \eta 
\qquad j= 1, 2, \ldots, 2sm. 
\ee
We recall that $a(j)$ and  $\beta(j)$ 
have been defined in terms of the Gauss' symbol $[\cdot]$ by 
$a(j)= [(j-1)/2s] +1$ and 
$\beta(j) = j - 2s[(j-1)/2s]$, respectively.

We remark that under the limit of sending $\epsilon$ to zero, 
the sum over variable $c_j$ is restricted up to $M$. 

%
%
\subsection{Multiple-integral representations of the spin-$s$ XXZ correlation 
function for an arbitrary product of elementary matrices}

Let us formulate matrix $S$ for the correlation function 
of an arbitrary product of elementary matrices. 
We define the $(j,k)$ element of matrix 
$S=S\left( (\lambda_j)_{2sm}; (w_j^{(2s)})_{2sm} \right)$ by   
\be 
S_{j,k} = \rho(\lambda_j - w_k^{(2s)} + \eta/2) \, 
\delta(\alpha(\lambda_j), \beta(k)) \, , \quad {\rm for} \quad 
j, k= 1, 2, \ldots, 2sm \, .  
\ee 
Here $\delta(\alpha, \beta)$ denotes the Kronecker delta. We define 
 $\alpha(\lambda_j)$ by $\alpha(\lambda_j)= \gamma$ 
if $\lambda_j= \mu_j - (\gamma - 1/2) \eta$ or $\lambda_j= w_{k}^{(2s)}$ 
where $\beta(k)=\gamma$ ($1 \le \gamma \le 2s$).  
We remark that $\mu_j$ correspond to the centers 
of complete $2s$-strings $\lambda_j$. 
We also remark that 
the above definition of matrix $S$  generalizes that of 
 (\ref{eq:matrix-elements-S}) since 
  $\alpha(\lambda_j)$ is now also defined 
also for $\lambda_j= w_k^{(2s)}$.

Let  $\Gamma_j$ be a small contour rotating counterclockwise   
around $\lambda=w_j^{(2s)}$.   
Since the ${\rm det} S$ has simple poles at $\lambda=w_j^{(2s)}$ 
with residue $1 / 2 \pi i$,  
we therefore have 
\be \int_{-\infty+ i \epsilon}^{\infty+ i \epsilon} 
{\rm det} S((\lambda_k)_{2sm}) \, 
 d \lambda_1 = 
\int_{-\infty - i \epsilon}^{\infty - i \epsilon} 
{\rm det} S((\lambda_k)_{2sm}) \, d \lambda_1 
- \oint_{\Gamma_1} {\rm det} S((\lambda_k)_{2sm}) 
\, d \lambda_1 \, .  
\ee
For sets  ${\bm \alpha}^{-}$ and ${\bm \alpha}^{+}$ 
we define ${\tilde \lambda}_j$ for $j \in {\bm \alpha}^{-}$ 
and ${\tilde \lambda}^{'}_{j}$ for $j \in {\bm \alpha}^{+}$, 
respectively,  
by the following sequence: 
\be 
({\tilde \lambda}^{'}_{j^{'}_{max}}, \ldots, 
{\tilde \lambda}^{'}_{j^{'}_{min}},  {\tilde \lambda}_{j_{min}}, 
{\tilde \lambda}_{j_{max}})
=(\lambda_1, \ldots, \lambda_{2sm}) \, . 
\ee
Thus, from the expression of  
the correlation function in terms of a finite sum (\ref{finiteCF})  
we obtain the multiple-integral representation as follows.     
\bea 
& & F^{(2s+)}(\{ \epsilon_j, \epsilon_j^{'} \}) \non \\ 
& = & 
\left( \int_{-\infty+ i \epsilon}^{\infty+ i \epsilon}
+ \cdots 
+ \int_{-\infty - i(2s-1) \zeta + i \epsilon}
^{\infty - i(2s-1) \zeta  + i \epsilon} \right)  d \lambda_1 
\cdots 
\left( \int_{-\infty+ i \epsilon}^{\infty+ i \epsilon}
+ \cdots 
+ \int_{-\infty - i(2s-1) \zeta + i \epsilon}^{\infty - i(2s-1) \zeta  + i \epsilon} 
\right)  d \lambda_{\alpha_+} 
\non \\ & & 
\left( \int_{-\infty - i \epsilon}^{\infty - i \epsilon}
+ \cdots 
+ \int_{-\infty - i(2s-1) \zeta - i \epsilon}
^{\infty - i(2s-1) \zeta  - i \epsilon} \right)  d \lambda_{\alpha_+ + 1} 
\cdots 
\left( \int_{-\infty - i \epsilon}^{\infty - i \epsilon}
+ \cdots 
+ \int_{-\infty - i(2s-1) \zeta - i \epsilon}
^{\infty - i(2s-1) \zeta  - i \epsilon} 
\right)  d \lambda_{m}
\non \\ 
& & \quad \times Q(\{ \epsilon_j, \epsilon_j^{'} \}; \lambda_1, \ldots, \lambda_{2sm}) \, {\rm det}
S(\lambda_1, \ldots, \lambda_{2sm}) \, . 
\eea
Here we have defined 
$Q(\{ \epsilon_j, \epsilon_j^{'} \}; \lambda_1, \ldots, \lambda_{2sm})$ 
in terms of small numbers 
$\epsilon_{\ell, k}$ of (\ref{eq:epsilon}) by   
\bea 
Q(\{ \epsilon_j, \epsilon_j^{'} \}; \lambda_1, \ldots, \lambda_{2sm})) & = & 
(-1)^{\alpha_{+}} 
{\frac { \prod_{j \in {\bm \alpha}^{-}} \left( \prod_{k=1}^{j-1} 
\varphi({\tilde \lambda}_{j} - w_k^{(2s)} + \eta) 
\prod_{k=j+1}^{2sm} \varphi({\tilde \lambda}_{j} - w_k^{(2s)} ) \right)}
{\prod_{1 \le k < \ell \le 2sm} 
\varphi(\lambda_{\ell} - \lambda_{k} + \eta + \epsilon_{\ell, k})} } \non \\ 
& \times &  
{\frac { \prod_{j \in {\bm \alpha}^{+}} \left( \prod_{k=1}^{j-1} 
\varphi({\tilde \lambda}^{'}_{j} - w_k^{(2s)} - \eta) 
\prod_{k=j+1}^{2sm} \varphi({\tilde \lambda}^{'}_{j} - w_k^{(2s)} ) \right)}
{\prod_{1 \le k < \ell \le 2sm} 
\varphi(w_{k}^{(2s)} - w_{\ell}^{(2s)})} } \, . 
\eea
Thus, correlation functions (\ref{eq:def-CF}) are expressed 
in the form of a single term of multiple integrals. 

Similarly as the symmetric spin-$s$ EFP, 
we can show the symmetric expression of the multiple-integral 
representations of the spin-$s$ correlation function 
as follows. 
\begin{equation}
\begin{split}
& F^{(2s \, +)}( \{ \epsilon_j, \, \epsilon_j^{'} \} )
= 
\frac{1}{\prod_{1 \leq \alpha < \beta \leq 2s}
\sinh^{m}(\beta-\alpha )\eta} \,  
\prod_{1\leq k < l \leq m}
\frac{\sinh^{2s}(\pi(\xi_k-\xi_l)/\zeta)}
{\prod^{2s}_{j=1}\prod^{2s}_{r=1}\sinh(\xi_k-\xi_l+(r-j)\eta)}   
 \\
& \sum_{\sigma \in {\cal S}_{2sm}/({\cal S}_m)^{2s} } 
 \prod^{\alpha_+}_{j=1} 
\left( 
\int^{\infty+ i \epsilon }_{-\infty + i \epsilon} + 
\cdots + 
\int^{\infty - i(2s-1) \zeta + i \epsilon }_{-\infty  - i(2s-1) \zeta  
+ i  \epsilon} 
\right)
d \mu_{\sigma j} 
 \, 
\prod^{2sm}_{j=\alpha_+ +1}
\left( 
\int^{\infty - i \epsilon }_{-\infty - i \epsilon} + 
\cdots 
+ \int^{\infty - i(2s-1) \zeta - i \epsilon }_{-\infty  - i(2s-1) \zeta  
- i  \epsilon} 
\right) d \mu_{\sigma j} 
\non \\  
&
\times ({\rm sgn} \, \sigma) \, 
Q(\{ \epsilon_j, \epsilon_j^{'} \}; \lambda_{\sigma 1}, \ldots, 
\lambda_{\sigma(2sm)})) \, 
\left( \prod^{2sm}_{j=1} 
{\frac {\prod^{m}_{b=1} \prod^{2s-1}_{\beta=1}
\sinh(\lambda_{j}-\xi_b+ \beta \eta)}
{\prod_{b=1}^{m} \cosh(\pi(\mu_{j}-\xi_b)/\zeta)}} \right) \\ 
&\times \, {\frac {i^{2sm^2}} { (2 i \zeta)^{2sm} }} \, 
\prod^{2s}_{\gamma=1} 
\prod_{1 \le b < a \le m}
\sinh(\pi(\mu_{2s(a-1)+\gamma}-\mu_{2s(b-1)+\gamma})/\zeta) \, . 
\label{eq:CFF2}
\end{split}
\end{equation}
It is straightforward to take the homogeneous limit: 
$\xi_k \rightarrow 0$. 
Here we recall that  (${\rm sgn} \, \sigma$) 
denotes the sign of permutation 
$\sigma \in {\cal S}_{2sm}/({\cal S}_m)^{2s}$.

\subsection{An example of the spin-1 correlation function}

 Applying formula (\ref{eq:Em=n}) to the spin-1 case with $m=n=1$, 
we have 
\bea  
\widetilde{ E}_{1}^{1, \, 1 \, (2 +)} 
& = & 2 \, 
\widetilde{P}^{(2s)}_{1 \cdots L} 
\,  
D^{(1+)}(w_1) A^{(1+)}(w_2)  
\prod_{\alpha=3}^{2 N_s} 
(A^{(1+)}+D^{(1+)})(w_{\alpha}) \, \,   
\widetilde{P}^{(2s)}_{1 \cdots L}
\, .  \label{eq:E11}
\eea
Therefore, we evaluate it sending 
$\epsilon$ to zero, as follows. 
\bea 
& & \la {\psi}_g^{(2 \, +)} | \widetilde{E}_1^{1 , \, 1 \, (2 +)} 
| {\psi}_g^{(2 \, +)} \ra / 
\la {\psi}_g^{(2 \, +)} | {\psi}_g^{(2 \, +)} \ra  
\non \\ 
& = & 2 \, \lim_{\epsilon \rightarrow 0} 
\la \psi_g^{(2 \, +; \, \epsilon)} | D^{(2+; \epsilon)}(w_1^{(2; \epsilon)}) 
A^{(2+; \epsilon)}(w_2^{(2; \epsilon)})  
\prod_{\alpha=3}^{2 N_s} 
(A^{(2+; \epsilon)}+D^{(2+; \epsilon)})(w_{\alpha}^{(2; \epsilon)}) 
|\psi_g^{(2+; \epsilon)} \ra / 
\la \psi_g^{(2+; \epsilon)} | \psi_g^{(2+; \epsilon)} \ra  
\non \\ 
& = &  2 \, 
\left( \int_{-\infty+ i \epsilon}^{\infty+ i \epsilon}
+ \int_{-\infty - i \zeta + i \epsilon}^{\infty - i \zeta  + i \epsilon}
\right) d \lambda_1 
 \, 
\left( \int_{-\infty - i \epsilon}^{\infty - i \epsilon}
+ \int_{-\infty - i \zeta - i \epsilon}^{\infty - i \zeta - i \epsilon}
\right) d \lambda_2 \, 
Q(\lambda_1, \lambda_2) \,  {\rm det} S(\lambda_1, \lambda_2) 
\label{eq:E11-int}
\eea
where $Q(\lambda_1, \lambda_2)$ is given by 
\be 
Q(\lambda_1, \lambda_2) = 
(-1) {\frac 
{\varphi(\lambda_2 - \xi_1 + \eta) \varphi(\lambda_1 - \xi_1 - \eta) }  
{\varphi(\lambda_2- \lambda_1 + \eta + \epsilon_{2,1}) \varphi(\eta)}}    
\ee
and matrix $S(\lambda_1, \lambda_2)$ is given by  
\be 
S(\lambda_1, \lambda_2) = 
\left( 
\begin{array}{cc} 
\rho(\lambda_1-w_1^{(2)} + \eta/2) \delta(\lambda_1, 1) & 
\rho(\lambda_1-w_2^{(2)} + \eta/2) \delta(\lambda_1, 2) \\ 
\rho(\lambda_2-w_1^{(2)} + \eta/2) \delta(\lambda_2, 1) & 
\rho(\lambda_2-w_2^{(2)} + \eta/2) \delta(\lambda_2, 2)
\end{array} 
\right) \, . 
\ee
We thus note note that 
the correlation function is now expressed 
in terms of a single product of the multiple-integral   
representation. 

Let us now evaluate the double integral (\ref{eq:E11-int}), explicitly.   
The integral over $\lambda_1$ is decomposed into the following:
\bea 
& & \left( \int_{-\infty+ i \epsilon}^{\infty+ i \epsilon}
+ \int_{-\infty - i \zeta + i \epsilon}^{\infty - i \zeta  + i \epsilon}
\right) d \lambda_1 \, Q(\lambda_1, \lambda_2) \,  {\rm det} S(\lambda_1, \lambda_2)  \non \\ 
& = &
\left( \int_{-\infty - i \zeta/2 }^{\infty  - i \zeta/2} 
+ \int_{-\infty - i 3 \zeta/2}^{\infty - i 3 \zeta/2}
\right) d \lambda_1 \, Q(\lambda_1, \lambda_2) \,  {\rm det} S(\lambda_1, \lambda_2) \non \\ 
&  &  
- \oint_{\Gamma_1}  d \lambda_1 \, Q(\lambda_1, \lambda_2) \,  {\rm det} S(\lambda_1, \lambda_2) - \oint_{\Gamma_2}  d \lambda_1 \, Q(\lambda_1, \lambda_2) \,  {\rm det} S(\lambda_1, \lambda_2) \, . 
\eea
Thus, the integral (\ref{eq:E11-int}) is calculated as 
\bea 
& & \la \widetilde{ \psi}_g^{(2 \, +)} | 
\widetilde{ E}_1^{11 \, (2 \, +)} | \widetilde{ \psi}_g^{(2 \, +)} \ra / 
\left( 2 \la \widetilde{ \psi}_g^{(2 \, +)} 
| \widetilde{ \psi}_g^{(2 \, +)} \ra  \right) 
\non \\ 
& = &  - 2 \pi i 
\int_{-\infty}^{\infty} 
{\frac 
{\sinh(x - \eta/2) \sinh(x - 3\eta/2)} {\sinh \eta} } \rho^2(x) \, d x   
+ 2 \cosh \eta \, \int_{-\infty}^{\infty} 
{\frac {\sinh(x - \eta/2)} {\sinh(x + \eta/2)} } \rho(x) \, d x 
\non \\ 
& & - \int_{-\infty}^{\infty} \rho(x) d x  
+ (-1) 2 \cosh \eta \, \int_{-\infty}^{\infty} 
{\frac {\sinh(x - \eta/2) }
{\sinh(x + \eta/2) }} \rho(x) \, d x  \non \\ 
& = & 
{\frac {\cos \zeta (\sin \zeta - \zeta \cos \zeta)} 
{ 2 \zeta \sin^2 \zeta}} \, . 
\eea
We have thus confirmed (\ref{eq:EPofE11})  directly evaluating  
the integrals. 

%
%
%
%
 \setcounter{equation}{0} 
 \renewcommand{\theequation}{7.\arabic{equation}}
%
%
\section{Concluding remarks}

In the paper we have explicitly shown  
the multiple-integral representation of the emptiness formation 
probability for the integrable spin-$s$ XXZ spin chain 
in a region of the massless regime 
of $\eta = i \zeta$ with $0 \le \zeta < \pi/2s$. 
We have also calculated the emptiness formation 
probability for the homogeneous case 
of the integrable spin-$s$ XXZ spin chain. 

In the XXX limit where we send $\zeta$ to zero,  
the expression of EFP for the spin-$s$ XXZ case reduces to 
that of the spin-$s$ XXX case. 

Moreover, we have presented a formula 
for the multiple-integral representation 
of the spin-$s$ XXZ correlation function 
of an arbitrary product of elementary 
matrices  in the massless regime 
where $\eta = i \zeta$ with $0 \le \zeta < \pi/2s$. 
We have also presented the symmetric expression 
of the multiple-integral representations of the 
spin-$s$ XXZ correlation functions. 

Finally, we have introduced conjugate vectors 
$\widetilde{|| 2s, n \rangle}$ in order to formulate 
Hermitian elementary matrices $\widetilde{E}^{m, \, n \, (2s \, +)}$ 
and Hermitian projection operators $\widetilde{P}^{(\ell)}$ 
in the massless regime. 
We have also defined the massless fusion $R$-matrices 
$\widetilde{R}^{(\ell, \, 2s \, w)}$ for $w=+$ and $p$.

\section*{Acknowledgment} 

One of the authors (T.D.) would like to thank B.M. McCoy 
and other participants of the workshop 
on the integrable chiral Potts model, organized by M.T. Batchelor, 
Kioloa, NSW, Australia,  Dec. 7-11, 2008, for many useful comments.  
He is grateful to K. Motegi, J. Sato, and  Y. Takeyama 
for bringing him useful references.  
Furthermore, the authors would like to thank 
S. Miyashita for encouragement and keen interest in this work.  
This work is partially supported by 
Grant-in-Aid for Scientific Research (C) No. 20540365.


\appendix

%
%
\setcounter{equation}{0} 
 \renewcommand{\theequation}{A.\arabic{equation}}
\section{Affine quantum group 
with homogeneous grading}

The affine quantum algebra $U_q(\widehat{sl_2})$ 
is an associative algebra over ${\bf C}$ generated by  
$X_i^{\pm}, K_i^{\pm}$ for $i=0,1$ with the following relations: 
\bea 
K_i K_i^{-1} & = & K^{-1}_i K_i = 1 \, , \quad 
K_i X_i^{\pm} K_i^{-1}  =  q^{\pm 2} X_i^{\pm} \, ,  \quad 
K_i X_j^{\pm} K_i^{-1}  =  q^{\mp 2} X_j^{\pm}  
\quad (i \ne j) \, , \non \\
{[} X_i^{+}, X_j^{-} {]} & = & \delta_{i,j} \, 
{\frac   {K_i - K_i^{-1}}  {q- q^{-1}} } \, ,     \non \\
(X_i^{\pm})^{3} X_j^{\pm} & - & [3]_q \, (X_i^{\pm})^{2} X_j^{\pm} X_i^{\pm}   
+ [3]_q \, X_i^{\pm} X_j^{\pm} (X_i^{\pm})^2 -   
 X_j^{\pm} (X_i^{\pm})^3 = 0 \quad (i \ne j) \, . 
\label{eq:defrl}
\eea
Here the symbol $[n]_q$ denotes the $q$-integer of an integer $n$: 
\be 
[n]_q = {\frac {q^n - q^{-n}} {q - q^{-1}}} \, . \label{eq:q-integer} 
\ee
The algebra $U_q(\widehat{sl_2})$ is also a Hopf algebra over ${\bf C}$ 
with comultiplication
\bea 
\Delta (X_i^{+}) & = & X_i^{+} \otimes 1 + K_i \otimes X_i^{+}  \, , 
 \quad 
\Delta (X_i^{-})  =  X_i^{-} \otimes K_i^{-1} + 1 \otimes X_i^{-} \, ,  
\non \\
\Delta(K_i) & = & K_i \otimes K_i  \, , 
\eea 
and antipode:  
$S(K_i)=K_i^{-1} \, , S(X_i)= - K_i^{-1} X_i^{+} \, , 
S(X_i^{-}) = - X_i^{-} K_i $, and counit: 
$\varepsilon(X_i^{\pm})=0$ and $\varepsilon(K_i)=1$ for $i=0, 1$ .

%
%

The algebra $U_q(sl_2)$ is given by the Hopf subalgebra of 
$U_q(\widehat{sl_2})$ generated by 
$X_i^{\pm}$, $K_i$ with either $i=0$ or $i=1$. Hereafter 
we denote by $X^{\pm}$ and $K$ the generators of $U_q(sl_2)$. 


For a given complex number $\lambda$ we define 
a homomorphism of algebras  $\varphi_{\lambda}$: 
$U_q(\widehat{sl_2}) \rightarrow U_q({sl_2})$.     
\be 
\varphi_{\lambda}(X_0^{\pm})  =  
e^{\pm 2 \lambda} \, X^{\mp } \, , \quad 
\varphi_{\lambda}(X_1^{\pm})  =  X^{\pm} \, , \quad  
\varphi_{\lambda}(K_0) = K^{-1} \, ,    
\quad \varphi_{\lambda}(K_1)  = K \, \, .    
\label{eq:eval-h}  
\ee
Map (\ref{eq:eval-h}) is associated 
with  homogeneous grading \cite{Jimbo-Miwa}. 
For a  representation $(\pi, V^{(\ell)})$ of $U_q(sl_2)$ 
we  have a representation of $U_q(\widehat{sl_2})$ by 
$\pi(\varphi_{\lambda}(a))$ for $a \in U_q(\widehat{sl_2})$.   
We call it the spin-$\ell/2$ evaluation representation 
with evaluation parameter $\lambda$,  
and denote it 
by $(\pi_{\lambda}, V^{(\ell)}(\lambda))$ or $V^{(\ell)}(\lambda)$. 

We define opposite coproduct 
$\Delta^{op}$ by 
\be 
\Delta^{op}(a)= \tau \circ \Delta(a) \quad \mbox{\rm for} \, \, 
a \in U_q(sl_2) \, , 
\ee 
where $\tau$ denotes the permutation 
operator: $\tau( a \otimes b) = b \otimes a$ for $a, b \in U_q(sl_2)$.

%
%
\setcounter{equation}{0} 
 \renewcommand{\theequation}{B.\arabic{equation}}
\section{Fusion projection operators being idempotent}

We give the derivation \cite{V-DWA} of 
$(P_{1 2 \cdots \ell}^{(\ell)})^2=P_{1 2 \cdots \ell}^{(\ell)}$ 
making use of the Yang-Baxter equations. 

\begin{lemma} Operators $P_{1 2 \cdots \ell}^{(\ell)}$ 
defined by (\ref{eq:def-projector}) have  
 the following two expressions:    
\be 
P_{1 \, 2 \cdots \ell-1}^{(\ell-1)} 
{\check R}^{+}_{\ell-1, \, \ell}((\ell-1) \eta) 
P_{1 \, 2 \cdots \ell-1}^{(\ell-1)} = 
P_{2 \, 3 \cdots \ell}^{(\ell-1)} 
{\check R}_{1, \, 2}^{+}((\ell-1) \eta) 
P_{2 \, 3 \cdots \ell}^{(\ell-1)} \, . 
\label{eq:two}
\ee
\end{lemma}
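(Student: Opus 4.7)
The plan is to prove (\ref{eq:two}) by induction on $\ell$. For $\ell=2$ the statement is trivial: $P^{(1)}_{1}=P^{(1)}_{2}=I$, so both sides reduce to $\check R^{+}_{1,2}(\eta)$. The first substantive case is $\ell=3$, where (\ref{eq:two}) reads
\[
\check R^{+}_{1,2}(\eta)\,\check R^{+}_{2,3}(2\eta)\,\check R^{+}_{1,2}(\eta)
=\check R^{+}_{2,3}(\eta)\,\check R^{+}_{1,2}(2\eta)\,\check R^{+}_{2,3}(\eta) \, ,
\]
which is precisely the braid form of the Yang--Baxter equation (\ref{eq:YBE}) for $R^{+}$, after composing the three $R$-factors in (\ref{eq:YBE}) with the appropriate permutation operators.

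For the inductive step I assume (\ref{eq:two}) has been established for every index range with fewer than $\ell$ sites. First I would expand the leftmost $P^{(\ell-1)}_{12\cdots\ell-1}$ in the LHS of (\ref{eq:two}) by the recursion (\ref{eq:def-projector}), writing it as $P^{(\ell-2)}_{12\cdots\ell-2}\,\check R^{+}_{\ell-2,\ell-1}((\ell-2)\eta)\,P^{(\ell-2)}_{12\cdots\ell-2}$. Since $P^{(\ell-2)}_{12\cdots\ell-2}$ acts only on sites $1,\dots,\ell-2$, it trivially commutes with $\check R^{+}_{\ell-1,\ell}((\ell-1)\eta)$, and can be pulled past it. This produces, buried between two copies of $P^{(\ell-2)}_{12\cdots\ell-2}$, a triple of the form $\check R^{+}_{\ell-2,\ell-1}((\ell-2)\eta)\,\check R^{+}_{\ell-1,\ell}((\ell-1)\eta)\,\check R^{+}_{\ell-2,\ell-1}(\eta')$ (with the inner spectral parameter supplied by the next expansion step), to which the Yang--Baxter relation (\ref{eq:YBE}) is directly applicable. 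A single YBE move swaps the factor $\check R^{+}_{\ell-1,\ell}((\ell-1)\eta)$ past one $\check R^{+}_{\ell-2,\ell-1}$, effectively migrating the index $\ell$ leftward and the index $\ell-2$ rightward by one step. Iterating the pair of moves (expand-a-$P$, apply-YBE) $\ell-2$ times moves the high-order factor $\check R^{+}(\cdot,(\ell-1)\eta)$ across all of the inner structure; collecting the remaining factors by the recursion (\ref{eq:def-projector}) with indices shifted by one reassembles the RHS of (\ref{eq:two}).

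The main obstacle is the bookkeeping: at every stage the spectral parameters of the three $\check R$-factors entering a YBE triple must sum correctly, and one must verify that the $P^{(\ell-2)}$'s produced at intermediate stages indeed fit together into a $P^{(\ell-1)}$ on the shifted index set $\{2,3,\dots,\ell\}$ via the induction hypothesis. An alternative and more conceptual route, which sidesteps this bookkeeping, is to observe that both sides of (\ref{eq:two}) act as a $q$-symmetrizer on $(V^{(1)})^{\otimes \ell}$: each side manifestly annihilates the images of $\check R^{+}_{i,i+1}(\eta)$ projecting onto the singlet component in some adjacent pair of sites, and hence projects onto the unique spin-$\ell/2$ $U_q(sl_2)$-subrepresentation. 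The two sides are thus equal up to a scalar, and evaluation on the highest-weight vector $|0\rangle^{\otimes \ell}$ (using $a(\lambda)=1$) fixes the scalar to be $1$. Either route yields the desired identity, which in turn implies the idempotency $\bigl(P^{(\ell)}_{12\cdots\ell}\bigr)^{2}=P^{(\ell)}_{12\cdots\ell}$ after one further application of (\ref{eq:def-projector}).
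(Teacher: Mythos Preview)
Your plan is sound; both routes you sketch will work. The paper's argument is organized a bit differently from your inductive one: rather than inducting on $\ell$, it first unfolds the recursion (\ref{eq:def-projector}) completely to obtain a closed product formula
\[
P_{1\cdots\ell}^{(\ell)}=\Bigl(\textstyle\prod_{j=1}^{[\ell/2]}\Pi_{j,\ell-j+1}\Bigr)\,
R^{+}_{\ell-1,\ell}\cdots R^{+}_{2,3\cdots\ell}\,R^{+}_{1,2\cdots\ell}
\]
(with spectral parameters determined by the string $\lambda_{a_k}=\lambda_{a_1}-(k-1)\eta$), and then uses the Yang--Baxter relations once to reorder this entire product into the alternative form $\bigl(\prod\Pi\bigr)R^{+}_{1,2}R^{+}_{12,3}\cdots R^{+}_{12\cdots\ell-1,\ell}$; these two orderings are precisely the LHS and RHS of (\ref{eq:two}) unfolded. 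So the paper front-loads all the unfolding and performs a single global YBE rearrangement, whereas your first route interleaves the unfolding and the YBE moves inductively. The content is the same---in particular the ``bookkeeping obstacle'' you flag is exactly what the paper hides inside the product formula (imported from \cite{DM1})---but the paper's packaging is cleaner if one is willing to quote that formula.

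Your second, representation-theoretic route (both sides project onto the unique spin-$\ell/2$ summand of $(V^{(1)})^{\otimes\ell}$, hence agree up to a scalar fixed on the highest-weight vector) is genuinely different from the paper's and is arguably the most transparent argument; the paper does not take this route, preferring the explicit $R$-matrix manipulation.
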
 
\begin{proof} 
Applying notation (\ref{defAjk}) to permutation operator $\Pi_{1, 2}$ 
we define permutation operators $\Pi_{j, \, k}$ for integers $j$ and $k$ 
satisfying $0 \le j < k \le L$. 
The form of the left-hand side of (\ref{eq:two}) 
is expressed in terms of $R$-matrices as follows 
(see also eq. (3.7) of \cite{DM1}).  
\be 
P_{1 \cdots \ell}^{(\ell)} =   
\prod_{j=1}^{[\ell/2]} \Pi_{j, \, \ell-j+1} \, \cdot \, 
R^{+}_{\ell-1 \, \ell} \cdots R^{+}_{2, \, 3 \cdots \ell} 
R^{+}_{1, \, 2 \cdots \ell} \, . 
\label{eq:prod-R}
\ee
Making use of the Yang-Baxter equations (\ref{eq:YBE}) 
we reformulate (\ref{eq:prod-R}) as follows.  
\be 
P_{1 \cdots \ell}^{(\ell)} =   
\prod_{j=1}^{[\ell/2]} \Pi_{j, \, \ell-j+1} \, \cdot \, 
R^{+}_{1 \, 2} R^{+}_{1 2, \, 3} \cdots R^{+}_{1, \, 2 \cdots \ell} \, 
\ee
which gives the expression of the right-hand side of (\ref{eq:two}).  
\end{proof} 

From (\ref{eq:two}) we show that  
$P_{j+1 \, j+2 \cdots \, j+\ell}^{(\ell)}$ is expressed as follows.   
\be 
P_{j+1 \, j+2 \cdots j+\ell-1}^{(\ell-1)} 
{\check R}^{+}_{j+\ell-1, \, j+\ell}((\ell-1) \eta) 
P_{j+1 \, j+2 \cdots j+\ell-1}^{(\ell-1)} = 
P_{j+2 \, j+3 \cdots j+\ell}^{(\ell-1)} 
{\check R}^{+}_{j+1, \, j+2}((\ell-1) \eta) 
P_{j+2 \, j+3 \cdots j+\ell}^{(\ell-1)} \, . 
\label{eq:two-j}
\ee

\begin{lemma} Operator $P_{1 2 \cdots \ell}^{(\ell)}$ projects   
operator ${\check R}_{\ell-1, \, \ell}(u)$ to 1 as follows. 
\be 
P_{1 2 \cdots \ell}^{(\ell)} 
{\check R}^{+}_{\ell-1, \, \ell}(u) =  P_{1 2 \cdots \ell}^{(\ell)} \, . 
\label{eq:eigen1}
\ee
\end{lemma}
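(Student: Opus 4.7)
My plan is to establish (\ref{eq:eigen1}) by induction on $\ell \geq 2$, using the alternative factorization of $P^{(\ell)}_{12\cdots \ell}$ provided by equation (\ref{eq:two}) of the preceding lemma together with the spectral decomposition of $\check{R}^{+}(u)$ on a single pair of spin-$1/2$ sites.

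For the base case $\ell=2$, I would diagonalize $\check{R}^{+}_{12}(u)$ on $V_1^{(1)}\otimes V_2^{(1)}$ directly from the matrix form of $R^{+}(u)$ given in (\ref{eq:R+}). A short calculation yields the spectral decomposition
\begin{equation*}
\check{R}^{+}_{12}(u) = P^{\mathrm{sym}}_{12}
-\frac{\sinh(u-\eta)}{\sinh(u+\eta)}\,P^{\mathrm{antisym}}_{12},
\end{equation*}
where the three-dimensional symmetric eigenspace (eigenvalue $1$) is spanned by $|++\rangle$, $|+-\rangle + q^{-1}|-+\rangle = ||2,1\rangle$, and $|--\rangle$, while the one-dimensional antisymmetric eigenspace is spanned by $|+-\rangle - q|-+\rangle$. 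Setting $u=\eta$ annihilates the second summand, so $P^{(2)}_{12}=\check{R}^{+}_{12}(\eta)=P^{\mathrm{sym}}_{12}$. Since $P^{\mathrm{sym}}_{12}$ is the orthogonal projector onto the eigenvalue-$1$ subspace of $\check{R}^{+}_{12}(u)$ for every $u$, the identity $P^{(2)}_{12}\,\check{R}^{+}_{12}(u)=P^{(2)}_{12}$ follows at once.

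For the inductive step, I would assume the lemma for $\ell-1$ in its translation-invariant form,
\begin{equation*}
P^{(\ell-1)}_{2\,3\cdots \ell}\,\check{R}^{+}_{\ell-1,\ell}(u)=P^{(\ell-1)}_{2\,3\cdots \ell},
\end{equation*}
which holds because the recursion (\ref{eq:def-projector}) depends only on the relative positions of the sites it acts upon. Invoking the second expression for $P^{(\ell)}_{12\cdots \ell}$ supplied by (\ref{eq:two}), I then compute
\begin{align*}
P^{(\ell)}_{12\cdots\ell}\,\check{R}^{+}_{\ell-1,\ell}(u)
&= P^{(\ell-1)}_{2\,3\cdots \ell}\,\check{R}^{+}_{1,2}((\ell-1)\eta)\,\bigl[P^{(\ell-1)}_{2\,3\cdots \ell}\,\check{R}^{+}_{\ell-1,\ell}(u)\bigr] \\
&= P^{(\ell-1)}_{2\,3\cdots \ell}\,\check{R}^{+}_{1,2}((\ell-1)\eta)\,P^{(\ell-1)}_{2\,3\cdots \ell} \\
&= P^{(\ell)}_{12\cdots\ell},
\end{align*}
where the second equality uses the shifted inductive hypothesis (legitimately applied, since $\check{R}^{+}_{1,2}((\ell-1)\eta)$ acts only on sites $1,2$ and $\check{R}^{+}_{\ell-1,\ell}(u)$ acts only on sites $\ell-1,\ell$, so it can be brought adjacent to the rightmost $P^{(\ell-1)}_{2\,3\cdots\ell}$), and the third invokes (\ref{eq:two}) once more. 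The only real subtlety — and thus the main obstacle — is securing the inductive hypothesis in its shifted form with starting index $2$ rather than $1$; the cleanest resolution is to formulate the lemma from the outset for $P^{(\ell)}_{j+1,\,j+2,\ldots,j+\ell}$ paired with $\check{R}^{+}_{j+\ell-1,\,j+\ell}(u)$, so that the translation freedom is built into the induction rather than invoked ad hoc.
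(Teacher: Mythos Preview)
Your proof is correct and follows essentially the same route as the paper's. The paper argues more tersely --- it simply invokes the spectral decomposition to obtain $P^{(2)}_{\ell-1,\ell}\,\check{R}^{+}_{\ell-1,\ell}(u)=P^{(2)}_{\ell-1,\ell}$ and then cites (\ref{eq:two}) and its shifted version (\ref{eq:two-j}) --- but unpacking that citation amounts precisely to your induction: repeatedly applying the second factorization exposes $P^{(\ell-1)}_{2\cdots\ell}$ (and ultimately $P^{(2)}_{\ell-1,\ell}$) as the rightmost factor, which then absorbs $\check{R}^{+}_{\ell-1,\ell}(u)$. Your remark that the induction is cleanest if stated in the shifted form from the start is exactly what the paper's equation (\ref{eq:two-j}) is there to supply.
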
 
\begin{proof} 
Due to the spectral decomposition of the $R$-matrix,  we have 
$ P^{(2)}_{\ell-1, \ell} {\check R}^{+}_{\ell-1, \, \ell}(u) 
= P^{(2)}_{\ell-1, \ell} $. 
Applying (\ref{eq:two}) and (\ref{eq:two-j}),
 we thus obtain (\ref{eq:eigen1}). 
\end{proof} 

\begin{proposition} Operators $P_{1 2 \cdots \ell}^{(\ell)}$ 
defined by (\ref{eq:def-projector}) 
are idempotent:  
$\left( P_{1 2 \cdots \ell}^{(\ell)} \right)^2 
= P_{1 2 \cdots \ell}^{(\ell)}$ . 
\end{proposition}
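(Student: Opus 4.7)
My plan is to proceed by induction on $\ell$, using the two preceding lemmas as the only non-trivial ingredients. For the base case $\ell=2$, we have $P^{(2)}_{12}=\check R^{+}_{1,2}(\eta)$, and the spectral decomposition of the spin-$1/2$ $R$-matrix tells us that $\check R^{+}_{1,2}(\eta)$ is precisely the projector onto the three-dimensional symmetric subspace of $V^{(1)}_1\otimes V^{(1)}_2$, so idempotency is immediate.

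For the inductive step, assume $(P^{(\ell-1)}_{1\cdots\ell-1})^{2}=P^{(\ell-1)}_{1\cdots\ell-1}$. Writing $P^{(\ell)}:=P^{(\ell)}_{1\cdots\ell}$ and $\check R:=\check R^{+}_{\ell-1,\,\ell}((\ell-1)\eta)$ for brevity, the definition (\ref{eq:def-projector}) gives
\begin{equation*}
(P^{(\ell)})^{2}
= P^{(\ell-1)}\,\check R\, P^{(\ell-1)} P^{(\ell-1)}\,\check R\, P^{(\ell-1)}
= P^{(\ell-1)}\,\check R\, P^{(\ell-1)}\,\check R\, P^{(\ell-1)},
\end{equation*}
where the inductive hypothesis was used to collapse the two central copies of $P^{(\ell-1)}$ into one. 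Recognizing the first four factors as $P^{(\ell)}$ by (\ref{eq:def-projector}), this rewrites as
\begin{equation*}
(P^{(\ell)})^{2}= P^{(\ell)}\,\check R\, P^{(\ell-1)}.
\end{equation*}
Now apply the preceding lemma (\ref{eq:eigen1}) with $u=(\ell-1)\eta$, which gives $P^{(\ell)}\check R = P^{(\ell)}$. Hence $(P^{(\ell)})^{2}=P^{(\ell)} P^{(\ell-1)}$, and it remains only to check that $P^{(\ell)} P^{(\ell-1)}=P^{(\ell)}$. This final step follows by invoking the inductive hypothesis one more time: $P^{(\ell)} P^{(\ell-1)}=P^{(\ell-1)}\check R\, P^{(\ell-1)} P^{(\ell-1)}=P^{(\ell-1)}\check R\, P^{(\ell-1)}=P^{(\ell)}$.

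The main (and essentially only) conceptual step is the use of Lemma (\ref{eq:eigen1}), which itself rests on the two equivalent presentations (\ref{eq:two}) of $P^{(\ell)}$ obtained from the Yang--Baxter equation. All the other manipulations are purely formal rewritings using the definition of $P^{(\ell)}$ and the inductive hypothesis; I expect no serious obstacle beyond bookkeeping. A subtle but unproblematic point to state explicitly is that the lemma (\ref{eq:eigen1}) was proved independently of idempotency, so there is no circularity in invoking it within the induction.
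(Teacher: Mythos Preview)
Your proof is correct and follows essentially the same induction as the paper's Appendix B argument: expand $(P^{(\ell)})^2$ via the recursive definition, collapse the central $(P^{(\ell-1)})^2$ by the inductive hypothesis, apply Lemma~(\ref{eq:eigen1}) to remove the extra $\check R^{+}_{\ell-1,\ell}((\ell-1)\eta)$, and finish with one more use of the inductive hypothesis. The only slip is expository---``first four factors'' should read ``first three factors'' in the five-factor string $P^{(\ell-1)}\check R\,P^{(\ell-1)}\check R\,P^{(\ell-1)}$---but the computation and the remark on non-circularity are sound.
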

\begin{proof}
We show it from (\ref{eq:eigen1}) by induction on 
$\ell$. Suppose that it is idempotent 
for $\ell$.  We have  
\bea 
\left( P_{1 2 \cdots \ell+1}^{(\ell+1)} \right)^2  
& = & 
P_{1 2 \cdots \ell}^{(\ell)} \, 
R^{+}_{\ell \, \ell+1}(\ell \eta) \, \cdot 
(P_{1 2 \cdots \ell}^{(\ell)})^2 \, \cdot  
R^{+}_{\ell \, \ell+1}(\ell \eta) \, 
P_{1 2 \cdots \ell}^{(\ell)} \non \\ 
& = & 
P_{1 2 \cdots \ell}^{(\ell)} \, 
R^{+}_{\ell \, \ell+1}(\ell \eta) \, \cdot  \, 
P_{1 2 \cdots \ell}^{(\ell)} \, 
R^{+}_{\ell \, \ell+1}(\ell \eta) \, \cdot \, 
P_{1 2 \cdots \ell}^{(\ell)} \non \\ 
& = & 
P_{1 2 \cdots \ell}^{(\ell)} \, 
R^{+}_{\ell \, \ell+1}(\ell \eta) \, 
P_{1 2 \cdots \ell}^{(\ell)} \cdot  \,  
P_{1 2 \cdots \ell}^{(\ell)} 
 =  
P_{1 2 \cdots \ell}^{(\ell)} \, 
R^{+}_{\ell \, \ell+1}(\ell \eta) \, 
P_{1 2 \cdots \ell}^{(\ell)} \, .  
\eea
\end{proof}

%
%
\setcounter{equation}{0} 
 \renewcommand{\theequation}{C.\arabic{equation}}
\section{Basis vectors of spin-$\ell/2$ representation of $U_q(sl_2)$}

In terms of  
the $q$-integer $[n]_q$ defined in (\ref{eq:q-integer}),  
we define the $q$-factorial  $[n]_q!$ for integers $n$ by 
\be 
[n]_q ! = [n]_q \, [n-1]_q \, \cdots \, [1]_q \, .  
\ee
For integers $m$ and $n$ satisfying $m \ge n \ge 0$ 
we define the $q$-binomial coefficients as follows
\be 
\left[ 
\begin{array}{c} 
m \\ 
n 
\end{array}  
 \right]_q 
= {\frac {[m]_q !} {[m-n]_q ! \, [n]_q !}}  \, . 
\ee 

We now define the basis vectors of the $(\ell+1)$-dimensional 
irreducible representation of $U_q(sl_2)$,  
$|| \ell, n  \rangle$ for $n=0, 1, \ldots, \ell$ as follows. 
We define $||\ell, 0 \rangle$ by 
\be 
||\ell , 0 \rangle = |0 \rangle_1 \otimes |0 \rangle_2 \otimes 
\cdots \otimes |0 \rangle_\ell \, . 
\ee   
Here  $|\alpha \rangle_j$ for $\alpha=0, 1$ 
denote the basis vectors of the spin-1/2 representation defined 
on the $j$th position in the tensor product.  We define 
$|| \ell, n \rangle$ for $n \ge 1$ and evaluate them as follows \cite{DM1} .   
\bea 
|| \ell, n \rangle  & =  &  
\left( \Delta^{(\ell-1)} (X^{-}) \right)^n ||\ell, 0 \rangle \,  
{\frac 1 {[n]_q!}} \non \\ 
%
& = &
\sum_{1 \le i_1 < \cdots < i_n \le \ell} 
\sigma_{i_1}^{-} \cdots \sigma_{i_n}^{-} | 0 \rangle \, 
q^{i_1+ i_2 + \cdots + i_n  - n \ell + n(n-1)/2} \, . 
\label{eq:|ell,n>} 
\eea
We define the conjugate vectors explicitly by the following: 
\be 
\langle \ell, n || =  
\left[ 
\begin{array}{c} 
\ell \\ 
n 
\end{array}  
 \right]_q^{-1} \, q^{n(\ell-n)} \, 
\sum_{1 \le i_1 < \cdots < i_n \le \ell} 
\langle 0 | \sigma_{i_1}^{+} \cdots \sigma_{i_n}^{+} \, 
q^{i_1 + \cdots + i_n - n \ell + n(n-1)/2}   \, . 
\label{eq:<ell,n|}
\ee
It is easy to show the normalization conditions \cite{DM1}: 
$\langle \ell, n || \, || \ell, n \rangle = 1$.  
In the massive regime where $q = \exp \eta$ with real $\eta$, 
conjugate vectors $\langle \ell, n || $ are Hermitian conjugate to vectors 
$|| \ell, n \rangle$.

Through the recursive construction (\ref{eq:def-projector}) 
of $P^{(\ell)}$s, 
it is easy to show the following \cite{DM1}: 
\bea 
P^{(\ell)}_{1 2 \cdots \ell} || \ell, n \rangle 
& = & || \ell, n \rangle \, , \non \\ 
\langle \ell, n || P^{(\ell)}_{1 2 \cdots \ell}  & = & \langle \ell, n ||  
\, . \label{eq:consistency-P}
\eea
Thus, the fusion projector $P^{(\ell)}$ 
is consistent with the spin-$\ell/2$ representation 
of $U_q(sl_2)$. 

In order to define Hermitian elementary matrices, 
we now introduce another set of dual basis vectors.   
For a given nonzero integer $\ell$ we define  
$\widetilde{\langle \ell, n ||}$ for $n=0, 1, \ldots, n$, by 
\be 
\widetilde{\langle \ell, n ||} =  
\left( 
\begin{array}{c} 
\ell \\ 
n 
\end{array}  
 \right)^{-1} \,  
\sum_{1 \le i_1 < \cdots < i_n \le \ell} 
\langle 0 | \sigma_{i_1}^{+} \cdots \sigma_{i_n}^{+} \, 
q^{-(i_1 + \cdots + i_n) + n \ell - n(n-1)/2}   \, . 
\label{eq:tilde<ell,n|}
\ee
They are conjugate to $|| \ell, n \rangle$:   
$\widetilde{\langle \ell, m ||} \, || \ell, n \rangle = 
\delta_{m, n} $ . 

In the massless regime where $|q|=1$,  
matrix $|| \ell, n \rangle  \widetilde{\langle \ell, n ||}$ is 
Hermitian: $(|| \ell, n \rangle  \widetilde{\langle \ell, n ||})^{\dagger} 
= || \ell, n \rangle  \widetilde{\langle \ell, n ||}$. 
However, in order to define projection operators $\tilde{P}$ such that 
$P \tilde{P} = P$, we define another set of vectors 
 $\widetilde{|| \ell, n \rangle}$ in section 2.4. They are conjugate to 
the dual vectors $\langle \ell, n ||$.

%
%
\setcounter{equation}{0} 
 \renewcommand{\theequation}{D.\arabic{equation}}
\section{The massless fusion $R$-matrices of the spin-1 case}

Let us evaluate the matrix elements of the massless monodromy matrix 
$\widetilde{ T}^{(1, \, 2 \, +)}_{0, \, 1}(\lambda_0; \xi_1)$,  
i.e. the massless $L$ operator of the spin-1 representation.  
\be 
\widetilde{ T}^{(1, \, 2 \, +)}_{0, \, 1}(\lambda_0; \xi_1)
= \widetilde{P}^{(2)}_{12} 
R^{(1, \, 1 \, +)}_{0, \, 1 \, 2}(\lambda_0; \{ w_j^{(2)} \}_2)  
\widetilde{P}^{(2)}_{12} 
\, . 
\ee
Here we have set inhomogeneous parameters  $w_1^{(2)}= \xi_1$ and 
$w_2^{(2)}= \xi_1 - \eta$. 
Let us recall $R_{0, 12}^{+}= R_{0, 2}^{+}R_{0,1}^{+}$.  
For instance, we have $A_{12}^{+} = A_2^{+} A_1^{+} + B_2^{+} C_1^{+}$. 
In terms of $b_{0j}=b(\lambda_0- w_j^{(2)})$ and 
$c_{0j}=c(\lambda_0- w_j^{(2)})$ for $j=1, 2$, 
the (1,1) element of $\widetilde{A}_{1}^{(2+)}$ is given by 
\be 
\langle 2, 1 || A^{(2+)}(\lambda_0) \widetilde{||2, 1 \rangle}
= (b_{01} + b_{02} + q^{-2} c_{01} c_{02})/2 \, .  
\ee
Thus, setting $u=\lambda_0 - \xi_1$,   
all the non-zero matrix elements of 
$\widetilde{T}^{(1 , \, 2 \, +)}(\lambda_0)$
are given by  
\bea 
\langle 2, 0 || A^{(2+)}(\lambda_0) \widetilde{||2, 0 \rangle} & = & 
\langle 2, 2 || D^{(2+)}(\lambda_0) \widetilde{||2, 2 \rangle} 
= 1 \, , 
\non \\   
\langle 2, 1 || A^{(2+)}(\lambda_0) \widetilde{||2, 1 \rangle} 
& = &  \langle 2, 1 || D^{(2+)}(\lambda_0) \widetilde{||2, 1 \rangle} 
 =  \sinh(u+\eta)/\sinh(u+2 \eta) \, ,  \non \\     
\langle 2, 2 || A^{(2+)}(\lambda_0) \widetilde{||2, 2 \rangle} 
& = & \langle 2, 0 || D^{(2+)}(\lambda_0) \widetilde{||2, 0 \rangle}  
= \sinh u/\sinh(u+2 \eta) \, ,  \non \\  
\langle 2, 1 || B^{(2+)}(\lambda_0) \widetilde{||2, 0 \rangle} 
& = & e^{-u} \sinh \eta /\sinh(u+2 \eta) \, ,  \non \\     
\langle 2, 2 || B^{(2+)}(\lambda_0) \widetilde{||2, 1 \rangle} 
  & = & [2]_q \, q^{-1} e^{-u} \sinh \eta/\sinh(u+2 \eta) \, ,  \non \\  
\langle 2, 0 || C^{(2+)}(\lambda_0) \widetilde{||2, 1 \rangle} 
& = & [2]_q \,  e^{u} \sinh \eta /\sinh(u+2 \eta) \, ,  \non \\     
\langle 2, 1 || C^{(2+)}(\lambda_0) \widetilde{||2, 2 \rangle} 
  & = & q e^{u} \sinh \eta/\sinh(u+2 \eta) \, .  \label{eq:MEofL} 
\eea

We should remark that the massive monodromy matrix 
${T}^{(1, \, 2 \, +)}_{0, \, 1}(\lambda_0; \xi_1)$ has   
the same matrix elements as the massless monodromy matrix 
$\widetilde{ T}^{(1, \, 2 \, +)}_{0, \, 1}(\lambda_0; \xi_1)$. 
For instance, we calculate the (1,1) element of operator 
$A^{(2+)}_1$ as follows.  
\be   
\langle 2, 1 || A^{(2+)}(\lambda_0) ||2, 1 \rangle 
 =  b_{01} q^{-2} + b_{02} + q^{-2} c_{01}c_{02}  
  =   \sinh(u+\eta)/\sinh(u+2 \eta) \, . 
\ee 

Let us define the matrix elements of the massless 
fusion $R$ matrix of type (2, 2) 
as follows.  
\be 
\widetilde{R}^{(2, \, 2 \, +)}_{0 , \, 1}(\lambda_0-\xi_1)
^{b_0 \, b_1}_{c_0 \, c_1}
= \, _1 \langle 2, b_1 || _a \langle 2, b_0 || 
R^{(2 , \, 2 \, +)}_{0 , \, 1}(\lambda_0-\xi_1) 
\widetilde{|| 2 , c_0 \rangle_a} \widetilde{|| 2, c_1 \rangle_1}  \, .   
\ee 
Here $|| 2 , c_0 \rangle_a$ and  $|| 2 , c_1 \rangle_1$ 
denote vectors in the auxiliary space $V_0^{(2)}$ and 
the quantum space $V_1^{(2)}$, respectively. 
Making use of matrix elements of 
the monodromy matrix of type (1, 2) 
we derive the fusion $R$ matrix of type (2,2).  
For an illustration, let us calculate  
$\widetilde{R}^{(2 , \, 2 \, +)}_{0, 1}(u)^{1 \, 0}_{0 \, 1}$.   
\be 
_a \langle 2, 1 || 
R^{(2 \, 2 \, +)}_{0 , \, 1}(\lambda_0-\xi_1) 
\widetilde{|| 2 , 0 \rangle_a}
= \left( A_{1}^{(2+)}(\lambda)C^{(2+)}_{1}(\lambda-\eta) + q^{-1}  
C_{1}^{(2+)}(\lambda)A^{(2+)}_{1}(\lambda-\eta) \right) q/[2]_q \, . 
\ee 
Evaluating operators $A^{(2+)}_1$ and $C^{(2+)}_1$ 
in the quantum space $V_1^{(2)}$, we have 
\be 
R^{(2 \, 2 \, +)}_{0, 1}(u)^{1 \, 0}_{0 \, 1}
= {\frac {[2]_q e^u \sinh \eta} {\sinh(u + 2 \eta)}} \, . 
\ee
The fusion $R$-matrix becomes permutation $\Pi_{0, \, 1}$ at $u=0$. 
In fact, we have  
$R^{(2 , \, 2 \, +)}_{0, 1}(0)^{1 \, 0}_{0 \, 1}=1$.

%
%
\setcounter{equation}{0} 
 \renewcommand{\theequation}{E.\arabic{equation}}
\section{Spin-$s$ elementary matrices  
in global operators}
%

For integers $i_k$  and $j_k$ with 
$1 \le i_1 < \cdots < i_m \le \ell$ and 
$1 \le j_1 < \cdots < j_n \le \ell$,  
we have 
\bea  
\widetilde{ || \ell, m \rangle} \langle \ell, n || 
& = & 
\left(
\begin{array}{c} 
\ell \\
n 
\end{array} 
 \right) \, 
\left[
\begin{array}{c} 
\ell \\
m 
\end{array} 
 \right]_q \, 
\left[
\begin{array}{c} 
\ell \\
n 
\end{array} 
 \right]_q^{-1} \, 
q^{-(i_1 + \cdots + i_m) + ( j_1 + \cdots + j_n)
+ m(m+1)/2 - n(n+1)/2}  \non \\ 
& & \times \, 
\widetilde{P}_{1 \cdots \ell}^{(\ell)} 
\left( \prod_{k=1}^{m} e_{i_k}^{1, \, 0}
 \cdot \prod_{p=1; p \ne i_1, \ldots, i_m, j_1, \ldots, j_n}^{\ell} 
 e_{p}^{0, \, 0} \cdot 
\prod_{q=1}^{n} e_{j_q}^{0, \, 1} \right) 
\widetilde{P}_{1 \cdots \ell}^{(\ell)}  \, . 
\label{eq:ellmn} 
\eea  
Applying the spin-1/2 formulas of 
QISP \cite{KMT1999} to (\ref{eq:ellmn}), we can express 
any given spin-$s$ local operator  
in terms of the spin-1/2 global operators.   
It is parallel to the massive case \cite{DM1}. 
Let us set $i_1=1, i_2=2$, \ldots, $i_m=m$ and 
$j_1=1, j_2=2$, \ldots, $j_n=n$ in (\ref{eq:ellmn}). For $m > n$ 
we have 
\bea  
& & \widetilde{E}_{i}^{m, \, n \, (\ell \, + )} 
 = \left(
\begin{array}{c} 
\ell \\
n 
\end{array} 
 \right) \, 
\left[
\begin{array}{c} 
\ell \\
m 
\end{array} 
 \right]_q \, 
\left[
\begin{array}{c} 
\ell \\
n 
\end{array} 
 \right]_q^{-1} \, 
\widetilde{P}^{(\ell)}_{1 \cdots L} 
\,  
\prod_{\alpha=1}^{(i-1) \ell} (A^{(1+)}+D^{(1+)})(w_{\alpha}) 
\prod_{k=1}^{n} D^{(1+)}(w_{(i-1)\ell+k} ) 
\non \\ 
& &  \times \prod_{k=n+1}^{m} B^{(1+)}(w_{(i-1)2s+k})  
\, \prod_{k=m+1}^{\ell} A^{(1+)}(w_{(i-1) \ell+k})  
\prod_{\alpha=i \ell +1}^{\ell N_s} 
(A^{(1+)}+D^{(1+)})(w_{\alpha}) \, \,   
P^{(\ell)}_{1 \cdots L}
\, .  \label{eq:Em>n}
\eea
For $m < n$ we have 
\bea  
& & \widetilde{E}_{i}^{m, \, n \, (\ell \, + )} 
 = \left(
\begin{array}{c} 
\ell \\
n 
\end{array} 
 \right) \, 
\left[
\begin{array}{c} 
\ell \\
m 
\end{array} 
 \right]_q \, 
\left[
\begin{array}{c} 
\ell \\
n 
\end{array} 
 \right]_q^{-1} \, 
\widetilde{P}^{(\ell)}_{1 \cdots L} 
\,  
\prod_{\alpha=1}^{(i-1) \ell} (A^{(1+)}+D^{(1+)}(w_{\alpha}) 
\prod_{k=1}^{m} D^{(1+)}(w_{(i-1)\ell+k} ) 
\non \\ 
& &  \times \prod_{k=m+1}^{n} C^{(1+)}(w_{(i-1)2s+k})  
\, \prod_{k=m+1}^{\ell} A^{(1+)}(w_{(i-1) \ell+k})  
\prod_{\alpha=i \ell +1}^{\ell N_s} 
(A^{(1+)}+D^{(1+)})(w_{\alpha}) \, \,   
P^{(\ell)}_{1 \cdots L}
\, .  \label{eq:Em>n}
\eea

%
%
\setcounter{equation}{0} 
 \renewcommand{\theequation}{F.\arabic{equation}}
\section{Derivation of the density of string centers}

In terms of shifted rapidities ${\tilde \lambda}_{A}$ with 
$A=2s(a-1) + \alpha$ for $a=1, 2, \ldots, N_{s}/2$ 
and $\alpha = 1, 2, \ldots, 2s$, 
the Bethe ansatz equations for the homogeneous chain 
are given by 
\be 
\left( {\frac 
{\sinh(\tilde{\lambda}_{A}  + s \eta)} 
{\sinh(\tilde{\lambda}_{A} - s \eta)} } \right)^{N_s} 
= \prod_{B=1; B \ne A}^{M}  
{\frac {\sinh(\tilde{\lambda}_{A} - \tilde{\lambda}_{B} + \eta)} 
       {\sinh(\tilde{\lambda}_{A} - \tilde{\lambda}_{B} - \eta)}} 
\, , \quad \mbox{for} \, \, A=1, 2, \ldots, M \,. \label{eq:BAE3}
\ee
Putting $\lambda_A = \mu_a - (2s + 1 - 2 \alpha)$ 
and taking the product over $\alpha$ for $\alpha=1, 2, \ldots, 2s$, 
for the left-hand side of (\ref{eq:BAE3})   
and for the right-hand side of (\ref{eq:BAE3}),  we have 
\bea 
& & (-1)^{2s} \left\{ \prod_{k=1}^{2s} 
  \left( {\frac {\sinh((k-1/2) \eta - \mu_a)}
                {\sinh((k-1/2) \eta + \mu_a)}} \right)^{N_s} \right\}^{-1}
\non \\ 
& = & (-1)^{2s+N_s/2} 
\prod_{b=1}^{{N_s}/2}
\left\{ 
{\frac {\sinh(2s \eta - (\mu_a-\mu_b))} 
       {\sinh(2s \eta + (\mu_a-\mu_b))}} 
\prod_{k=1}^{2s-1} 
\left(
{\frac {\sinh(k \eta - (\mu_a-\mu_b))} 
       {\sinh(k \eta + (\mu_a-\mu_b))}} 
\right)^2
\right\}^{-1} \, . 
\label{eq:R=L} 
\eea
Taking the logarithm of (\ref{eq:R=L}) and 
making use of the following relation  
\be 
K_{2k}(\lambda) = {\frac d {d\lambda}} \frac 1 {2\pi i} \log\left( 
{\frac {\sinh(k \eta - \lambda)}  {\sinh(k \eta + \lambda)}} \right)
\ee
we have the integral equation 
for the density of string centers, $\rho(\lambda)$, as follows.   
\be 
\rho(\lambda) = 
\sum_{j=1}^{\ell} K_{2 j -1}(\lambda) 
- \int_{-\infty}^{\infty}  \left(K_{4s}(\lambda - \mu_b) 
+ \sum_{k=1}^{2s-1} 2 K_{2k}(\lambda - \mu_b) \right) \rho(\lambda) 
d \lambda \, . 
\label{eq:rho}
\ee
For $0 < \zeta \le \pi/m$ we have the following Fourier transform: 
\be 
\int_{-\infty}^{\infty} e^{i \mu \omega} K_m(\mu) d \mu = 
{\frac {\sinh((\pi - m \zeta)\omega/2)} {\sinh(\pi \omega/2)}} \, . 
\ee
Taking the Fourier transform of (\ref{eq:rho})  we have 
the Fourier transform $\widehat{\rho}(\omega)$ of $\rho(\lambda)$ as  
follows. 
\be 
\widehat{\rho}(\omega)   =  
\left( \sum_{k=1}^{2s} \widehat{K}_{2k-1}(\omega) \right)/
\left( 1+ \widehat{K}_{4s}(\omega) + 2 \sum_{k=1}^{2s-1} 
\widehat{K}_{2k}(\omega) \right) \non \\ 
 =  \frac 1 {2 \cosh (\zeta \omega/2)} \, . 
\ee
Taking the inverse Fourier transform  
we obtain $\rho(\lambda)= 1/2 \zeta \cosh(\pi \lambda/\zeta)$.   

%
%
\setcounter{equation}{0} 
 \renewcommand{\theequation}{G.\arabic{equation}}
\section{Some formulas of the algebraic Bethe ansatz}
Applying the commutation relations between $C$ and $D$ operators 
we have 
\be 
\langle 0 | \prod_{\alpha=1}^{M} C(\lambda_{\alpha}) \prod_{j=1}^{m} 
D(\lambda_{M+j}) 
= \sum_{a_1=1}^{M+1} \sum_{a_2=1; a_1 \ne a_1}^{M+2} \cdots 
 \sum_{a_m=1; a_1 \ne a_1, \ldots, a_m }^{M+m }
G_{a_1 \cdots a_m}(\lambda_1, \cdots, \lambda_{M+m}) \non     
\ee
where 
\be 
G_{a_1 \cdots a_m}(\lambda_1, \cdots, \lambda_{M+m})  
= \prod_{j=1}^{m} \left( d(\lambda_{a_j}; \{ w_j \}_L ) 
{\frac {\prod_{b=1; b \ne a_1, \ldots, a_{j-1}}^{M+j-1} 
\sinh(\lambda_{a_j}-\lambda_b + \eta) }  
 {\prod_{b=1; b \ne a_1, \ldots, a_{j}}^{M+j} 
  \sinh(\lambda_{a_j}-\lambda_b) }}  
\right) \, . 
\label{eq:CCD}
\ee

Let $\{\lambda_k \}_M$ be a set of Bethe roots. 
We have \cite{KMT1999,KMT2000} 
\bea 
& & \langle 0 | \prod_{k=1; k \ne a_1, \ldots, a_m}^{M} 
C(\lambda_k) \prod_{j=1}^{m} C(w_{j}) 
\prod_{\gamma=1}^{M} B(\lambda_{\gamma}) \, | 0 \rangle
/\langle 0 | \, \prod_{k=1}^{n} C(\lambda_k)  
\prod_{\gamma=1}^{M} B(\lambda_{\gamma}) \, | 0 \rangle \non \\ 
& = & 
{\rm det}\left(\left(\Phi^{'}(\{ \lambda_{\alpha} \}) \right)^{-1} \, 
\Psi^{'}(\{ \lambda_{\alpha} \} \setminus 
\{ \lambda_{a_1}, \ldots, \lambda_{a_m}  \} \cup 
\{ w_1, \ldots, w_m \} ) \right) \times \prod_{j=1}^{m}  \non \\ 
& \times &  \prod_{\alpha=1; \alpha \ne a_1, \ldots, a_m}^{M} 
{\frac {\sinh(\lambda_{\alpha} - w_j + \eta)} 
{\sinh(\lambda_{\alpha} - \lambda_{a_j} + \eta)}} 
\prod_{j=1}^{m} \prod_{\alpha=1}^{M} 
{\frac {\sinh(\lambda_{\alpha}- \lambda_{a_j})}   
       {\sinh(\lambda_{\alpha}- w_{j})} } 
\prod_{1 \le j < k \le m} 
{\frac {\sinh(\lambda_{a_j}- \lambda_{a_k})}
       {\sinh(w_{j}- w_{k})}} \, .  
\label{eq:ratio}
\eea

\end{document}